%
%

\documentclass[a4paper,11pt]{article}
\setlength{\oddsidemargin}{7mm} \setlength{\evensidemargin}{7mm}
\setlength{\topmargin}{-5mm}
\setlength{\textheight}{9in} \setlength{\textwidth}{6in}

\usepackage{amsthm,amsmath,amsfonts,amssymb}

\newcommand{\one}{\mathbf{1}}
\newcommand{\lap}{\Delta}
\newcommand{\cH}{\mathcal{H}}
\newcommand{\C}{\mathbb C}
\newcommand{\R}{\mathbb R}
\newcommand{\N}{\mathbb N}

\newcommand{\rkl}{\rangle}
\newcommand{\lkl}{\langle}
\newcommand{\nn}{\nonumber}
\newcommand{\ds}{\displaystyle}

\newcommand{\Hel}{\mathcal{H}_{el}}
\newcommand{\F}{\mathcal{F}}
\newcommand{\cL}{\mathcal{L}}
\newcommand{\cO}{\mathcal{O}}
\newcommand{\co}{o}
\newcommand{\cR}{\mathcal{R}}
\newcommand{\cS}{\mathcal{S}}
\newcommand{\hel}{H_{\mathrm{el}}}
\newcommand{\uf}{\underline{f}}
\newcommand{\uh}{\underline{h}}

\newcommand{\Ran}{\mathrm{Ran}}

\newcommand{\supp}{\mathrm{supp}}
\newcommand{\gs}{\Phi_{\alpha}}
\newcommand{\go}{\Phi_{0}}

\newcommand{\fA}{\mathbf{A}}
\newcommand{\fn}{\mathbf{0}}
\newcommand{\fE}{\mathbf{E}}
\newcommand{\fp}{\mathbf{p}}
\newcommand{\fq}{\mathbf{q}}
\newcommand{\fk}{\mathbf{k}}
\newcommand{\fx}{\mathbf{x}}
\newcommand{\fK}{\mathbf{K}}
\newcommand{\fX}{\mathbf{X}}
\newcommand{\fG}{\mathbf{G}}
\newcommand{\fF}{\mathbf{F}}
\newcommand{\Wdip}{W^{\rm dip}}
\newcommand{\feps}{\mbox{\boldmath $\varepsilon$}}
\newcommand{\sprod}[2]{\left\langle #1,#2 \right\rangle}
\newcommand{\expect}[1]{\langle #1\rangle}
\newcommand{\eps}{\varepsilon}
\newcommand{\ph}{\varphi}
\newcommand{\hap}{\cH_+^{\alpha}}
\newcommand{\pap}{P_+^{\alpha}}

\newtheorem{theorem}{Theorem}[section]
\newtheorem{lemma}[theorem]{Lemma}

\newtheorem{proposition}[theorem]{Proposition}
\theoremstyle{plain}
\begin{document}
\title{On the Atomic Photoeffect in Non-relativistic QED}

\author{\vspace{5pt} Marcel Griesemer$^1$ and Heribert Zenk$^2$\\
\vspace{-4pt}\small{$1.$ Fachbereich Mathematik, Universit\"at
Stuttgart,} \\
\small{D--70569 Stuttgart, Germany}\\
\vspace{-4pt}\small{$2.$ Mathematisches Institut,
Ludwig-Maximilians-Universit\"at
M\"unchen,} \\
\small{D-80333 M\"unchen, Germany}}
\maketitle
\begin{abstract}
In this paper we present a mathematical analysis of the photoelectric effect for
one-electron atoms in the framework of non-relativistic QED. We treat
photo-ionization as a scattering process where in the remote past
an atom in its ground state is targeted by one or several photons,
while in the distant future the atom is ionized and the electron 
escapes to spacial infinity. 
Our main result shows that the ionization probability, to leading
order in the fine-structure constant, $\alpha$, is correctly given by
formal time-dependent perturbation theory, and, moreover, that the
dipole approximation produces an error
of only sub-leading order in $\alpha$. In this sense, the dipole
approximation is rigorously justified.
\end{abstract}



\section{Introduction}

Even today, more than 100 years after its discovery by Hertz,
Hallwachs and Lenard, the phenomenon of photoionization 
is still investigated, both experimentally and
theoretically \cite{BrabecKrausz2000, Arendt2007}. This research is driven by novel experimental
techniques that allow for the production of very strong and
ultrashort laser pulses. In contrast, the photo electric effect in the
early experiments is produced by weak, non-coherent radiation of high
frequency. There is a third physical regime, where the radiation is
weak, of high frequency, and  \emph{coherent}.
This regime is the subject of the present paper. We consider one-electron atoms within the
standard model of non-relativistic QED, and we present a mathematically
rigorous analysis of the ionization process caused by the impact of
finitely many photons. Improving on earlier
results concerning more simplified models, we show that the probability
of ionization, to leading order in the fine-structure constant, is proportional to the number of photons, and, in the case
of a single photon, it is given correctly by the rules of
formal (time-dependent) perturbation theory. It turns out that the dipole approximation produces an
error of subleading order, which provides a rigorous justification of this
popular approximation.

Let's briefly recall the standard model of one-electron atoms
within non-relativistic QED. More elaborate descriptions may be
found elsewhere \cite{Gri2006, Spohn2004}. States of arbitrarily many transversal photons
are described by vectors in the symmetric Fock space
$$
         \F:=\bigoplus_{n\geq 0}
         S_n\big[\otimes^{n}L^2(\R^3\times\{1,2\})\big]
$$
over $L^2(\R^3\times\{1,2\})$. Here $S_n$ denotes the
projection of $L^2(\R^3\times\{1,2\})^n$ onto the subspace of all
symmetric functions of
$(\fk_1,\lambda_1),\ldots,(\fk_n,\lambda_n)\in\R^3\times\{1,2\}$, and
$S_{0}L^2(\R^3\times\{1,2\}):=\C$. We shall use $\Omega$ to denote the
vacuum vector $(1,0,\ldots)\in\F$. $N_f$ is the number operator
in $\F$, and $H_f={\mathrm d}\Gamma(\omega)$ denotes the second quantization of
multiplication with $\omega(k)=|k|$ in $L^2(\R^3\times\{1,2\})$. See
\cite{reesim:fou}, X.7, for the notation ${\mathrm d}\Gamma(\cdot)$ and for an
introduction to second quantization. The creation and annihilation 
operators $a^*(h)$ and
$a(h)$, for $h\in L^2(\R^3\times\{1,2\})$, are densely defined, closed
operators with $a^*(h)=a(h)^{*}$ and with 
$$
     [a^{*}(h)\Psi]^{(n)} = \sqrt{n}S_n (h\otimes\Psi^{(n-1)})
$$
for vectors $\Psi=(\Psi^{(0)},\Psi^{(1)},\ldots)$ from the subspace $D(N_{f}^{1/2})$.
Here, $\Psi^{(n)}$ denotes the $n$-photon component
of $\Psi$.

The system studied in this paper is composed of a non-relativistic,
(spinless) quantum mechanical, charged particle (the electron), and the
quantized radiation field which is coupled to the electron by minimal 
substitution. In addition, there is an external potential $V$, which may be due to a static nucleus.
The Hilbert space is thus the tensor product
$$
    \cH:= L^2(\R^3)\otimes \F,
$$
and the Hamiltonian is of the form
\begin{align}\label{ham}
   H_{\alpha}&=(\fp+\alpha^{\frac{3}{2}} \fA(\alpha \fx))^2 + V +
   H_f\\
    &= H_0 + W, \nonumber
\end{align}
where $H_0=\hel+H_f$, $\hel=-\Delta+V$, and $W=H_{\alpha}-H_0$. The
quantized vector potential $\fA(\alpha \fx)$, for each $\fx\in\R^3$, is a triple of
self-adjoint operators, each of which is a sum of a creation and an
annihilation operator. Explicitly,
\begin{equation}
    \fA(\alpha\fx) = a(\fG_{\fx})+ a^{*}(\fG_{\fx}), \qquad
    \fG_{\fx}(\fk,\lambda)
    := \frac{\kappa(\fk)}{\sqrt{2|\fk|}}
   \feps(\fk,\lambda)e^{-i\alpha\fk\cdot\fx}, \label{gl1-1}
\end{equation}
where $\feps(\fk,\lambda)\in\R^3$, $\lambda=1,2$, are orthonormal
polarization vectors perpendicular to $\fk$, and
$\kappa$ is an ultraviolet cutoff chosen from the
space $\cS(\R^3)$ of rapidly decreasing functions. No infrared
cutoff is needed. Here and henceforth, the position of the
electron, $\fx\in\R^3$, and the wave vector of a photon, $\fk\in\R^3$, are
dimensionless and related to the corresponding dimensionfull
quantities $\fX,\fK$ by $\fX=(a_0/2)\fx$ and $\fK= (2\alpha/a_0)\fk$, where
$a_0:=\hbar^2/me^2$ is the Bohr-radius, $m>0$ is the mass of the
particle, $e$ its charge, and $\alpha=e^2/\hbar c$ is the fine
structure constant. It follows that $\fX \cdot \fK=\alpha \fx \cdot \fk$, 
and in units where $\hbar$, $c$, and four times the Rydberg energy $2m\alpha^2$
are equal to unity, the Hamiltonian of a one-electron atom with static
nucleus at the origin takes the form
\eqref{ham} with $V(\fx)=-Z/|\fx|$, $Z$ being the atomic number of the
nucleus. For simplicity, we confine ourselves, in this introduction, to
this particular potential $V$. In nature, $\alpha\approx 1/137$, but
in this paper $\alpha$ is treated as a free parameter that can assume
any non-negative value.

For all $\alpha\geq 0$, the 
Hamiltonian $H_{\alpha}$ is self-adjoint on
$D(H_0)$ and its spectrum $\sigma(H_{\alpha})$ is a half-axis
$[E_{\alpha},\infty)$ \cite{Hiroshima2002,HH}. Moreover,
$$
     E_{\alpha}:=\inf\sigma(H_{\alpha})
$$
is an eigenvalue of $H_{\alpha}$, and, at least for $\alpha$
sufficiently small, this eigenvalue is simple \cite{BFS2,GLL}. We use
$\Phi_{\alpha}$ to denote a normalized eigenvector associated with
$E_{\alpha}$. Another important point in the spectrum of $H_{\alpha}$
is the ionization threshold $\Sigma_{\alpha}$, which, for our system, is given by 
$\Sigma_{\alpha} = \inf\sigma(H_{\alpha}-V)$. In a state vector from
the spectral subspace $\Ran\one_{(-\infty,\Sigma_{\alpha})}(H_{\alpha})$, the electron is
exponentially localized in the sense that
\begin{equation}\label{exp}
      \|e^{\beta|x|}\one_{(-\infty,\Sigma_{\alpha}-\eps]}(H_{\alpha})\| <\infty
\end{equation}
for all $\beta$ with $\beta^2<\eps$ \cite{Gri2004}.

The phenomenon of photo-ionization can be considered as a
scattering process, where in the limit $t\to-\infty$, the atom in
its ground state is targeted by a (finite) number of
asymptotically free photons, while in the limit $t\to\infty$ the
atom is ionized in a sense to be made precise. We begin by discussing \emph{incoming
scattering states} and their properties. To this end it is convenient to introduce the space
$L^2_{\omega}(\R\times\{1,2\})$ of all those $f\in
L^2(\R\times\{1,2\})$ for which 
\begin{equation}\label{omega-norm}
   \|f\|^2_{\omega}:=\sum_{\lambda=1,2} \, \int
     |f(\fk,\lambda)|^2 \big(1+\omega(\fk)^{-1}\big) d^3k <
     \infty.
\end{equation}
Given $f\in L^2_{\omega}(\R\times\{1,2\})$, the asymptotic creation
operator $a^{*}_{-}(f)$ is defined by 
\begin{equation}\label{intro1}
   a_{-}^{*}(f)\Psi := \lim_{t\to -\infty}
   e^{iH_{\alpha}t}a^{*}(f_{t})e^{-iH_{\alpha}t}\Psi,\qquad f_{t}:=e^{-i\omega
   t}f,
\end{equation}
and its domain is the space of all vectors $\Psi\in D(|H_{\alpha}|^{1/2})$ for which the limit \eqref{intro1}
exists. This is know to be the case, e.g., for the ground state
$\Psi=\gs$. Moreover, it is known that $a_{-}^{*}(f_1)\cdots
a_{-}^{*}(f_n)\gs$ is well defined and that 
\begin{eqnarray}
    \lefteqn{e^{-iH_{\alpha}t}a_{-}^{*}(f_1)\cdots
    a_{-}^{*}(f_n)\gs}\nonumber\\
    &=& a^{*}(f_{1,t})\cdots a^{*}(f_{n,t})e^{-iH_{\alpha}t}\gs
    + o(1),\qquad (t\to -\infty),\label{intro3}
\end{eqnarray}
whenever $f_i,\omega f_i\in L^2_{\omega}(\R\times\{1,2\})$ for all
$i=1,\ldots,n$ \cite{GZ}. 
By \eqref{intro3}, 
$a_{-}^{*}(f_1)\cdots a_{-}^{*}(f_n)\gs$ describes a scattering state, which, in the limit
$t\to -\infty$ is composed of the atom in its ground state and $n$
asymptotically free photons with wave functions $f_1,\ldots,f_n$. 
Results analogous to those on $a_{-}^{*}(f)$ hold true for
the asymptotic annihilation operators $a_{-}(f)$ \cite{GZ}.

The asymptotic annihilation and creation operators satisfy the usual
canonical commutation relations: e.g.
\begin{equation}\label{CCR}
   [a_{-}(f),a_{-}^{*}(g)] = \sprod{f}{g}
\end{equation}
for all $f,g\in L^2_{\omega}(\R^3\times\{1,2\})$. Moreover, the ground state $\gs$ is a vacuum vector for
asymptotic annihilation operators in the sense that
\begin{equation}\label{vacuum}
   a_{-}(f)\gs = 0\qquad\text{for all}\ f\in L^2_{\omega}(\R^3).
\end{equation}
Hence, if $\uf=(f_1,m_1,\ldots,f_n,m_n)\in
\big[L^2(\R^3\times\{1,2\})\times\N\big]^n$ with
$\sprod{f_i}{f_j}=\delta_{ij}$, then it follows from \eqref{CCR} and \eqref{vacuum}
that
\begin{equation}\label{in-state}
   a_{-}^{*}(\uf)\gs:=\prod_{k=1}^n \frac{1}{\sqrt{m_k!}} a_{-}^{*}(f_k)^{m_k}\gs
\end{equation}
is a normalized vector in $\cH$. All these properties of 
$a_{-}(f),\, a_{-}^{*}(f)$ hold mutatis mutandis for
the asymptotic operators $a_{+}(g),\, a_{+}^{*}(g)$ defined in terms of the limit $t\to+\infty$.

We are interested in the probability that $e^{-iH_{\alpha}t} a_{-}^{*}(\uf)\gs$
describes an ionized atom in the distant future, but we are not 
interested in the asymptotic state of the
electron or the radiation field in the limit $t\to+\infty$. We therefore shall not attempt to
construct outgoing scattering states describing an ionized atom, which
is a difficult open problem. Instead we base our computation of the
probability of ionization on the following reasonable assumption: the atom
described by  $e^{-iH_\alpha t} a_{-}^{*}(\uf)\gs$ is either ionized in the
limit $t\to\infty$, or else, in that limit, it \emph{relaxes to the ground state} in the
sense that  $e^{-iH_\alpha t} a_{-}^{*}(\uf)\gs$, for $t$ large enough, is well approximated
by a linear combination of vectors of the form  
\begin{equation}
    a^{*}(g_{1,t})\ldots a^{*}(g_{n,t})e^{-iE_{\alpha} t}\gs.
\end{equation}
More precisely, relaxation to the ground state occurs if
$a_{-}^{*}(\uf)\gs$ belongs to the closure of the span of all vectors
of the form 
$$
    a^{*}_{+}(g_{1})\ldots a^{*}_{+}(g_{n})\gs = \lim_{t\to+\infty} e^{iH_{\alpha}t}a^{*}(g_{1,t})\ldots a^{*}(g_{n,t})e^{-iE_{\alpha}t}\gs,
$$
with $g_i,\omega g_i\in L^2_\omega(\R^3\times\{1,2\})$. Let $\hap$ denote this space and let $\pap$
be the orthogonal projection onto $\hap$. Then $\|\pap
a_{-}^{*}(\uf)\gs\|^2$ is the probability for relaxation to the ground
state and 
\begin{equation}\label{ion-prob}
    1- \|\pap a_{-}^{*}(\uf)\gs\|^2 = \|(1-\pap) a_{-}^{*}(\uf)\gs\|^2
\end{equation}
is the probability of ionization.

The assumption that relaxation to the ground state is the only
alternative to ionization, is motivated by the conjecture of \emph{asymptotic completeness for
Rayleigh scattering}, which is the property, that every vector $\Psi\in\cH$ describing a bound state in the sense that 
$\sup_{t}\|e^{\eps|x|}e^{-iH_{\alpha}t}\Psi\| <\infty$ for some
$\eps>0$, will relax the ground state in the limit
$t\to\infty$.  In view of \eqref{exp}, asymptotic completeness for Rayleigh scattering implies that $\hap \supseteq \one_{(-\infty,\Sigma_{\alpha})}(H_{\alpha})$, which can be proven for simplified models of atoms \cite{Spohn,DG1,Gerard2002,FGS2}.

The following two theorems will allow us to compute \eqref{ion-prob}.

\begin{theorem}\label{thm1}
Suppose that $f_1,\ldots,f_n\in L^2(\R^3\times \{1,2\})$ where $\sum_{\lambda=1}^2 \feps(\cdot, \lambda) f_i(\cdot,\lambda)$ belongs to 
$C_0^2(\R^3\backslash\{\fn\}, \C^3)$ for each $i$, and let $\uf=(f_1,\ldots,f_n)$. Then:
\begin{equation}\label{reduction}
    a_{-}^{*}(\uf)\gs =  a_{+}^{*}(\uf)\gs
    -i\alpha^{3/2}\int_{-\infty}^{\infty} 2\fp(s)\ph_{el}\otimes
    [\fA(\fn,s),a^{*}(\uf)]\Omega\,ds + \cO(\alpha^{5/2})
\end{equation}
where $\fp(s)=e^{iH_{el}s}\fp e^{-iH_{el}s}$ and $\fA(\fn,s)=e^{iH_fs}\fA(\fn)e^{-iH_fs}$.
\end{theorem}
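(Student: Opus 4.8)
\emph{Proof plan.} The plan is to start from a Cook-type identity. I would put $A^{*}(t):=a^{*}(f_{1,t})\cdots a^{*}(f_{n,t})$ and note that, since the hypothesis on the $f_{i}$ is stronger than the one under which \eqref{intro3} and its $t\to+\infty$ analogue were established in \cite{GZ} (in particular it gives $f_{i},\omega f_{i}\in L^{2}_{\omega}$), one has $a_{\mp}^{*}(\uf)\gs=\lim_{t\to\pm\infty}e^{iH_{\alpha}t}A^{*}(t)e^{-iH_{\alpha}t}\gs$. Differentiating $e^{iH_{\alpha}t}A^{*}(t)e^{-iH_{\alpha}t}$ and using $[\hel,A^{*}(t)]=0$ together with the cancellation $i[H_{f},A^{*}(t)]+\dot A^{*}(t)=0$, the fundamental theorem of calculus (applied on finite intervals, then extended via \eqref{intro3}) gives
\begin{equation*}
  a_{-}^{*}(\uf)\gs-a_{+}^{*}(\uf)\gs=-i\int_{-\infty}^{\infty}e^{-iE_{\alpha}t}\,e^{iH_{\alpha}t}\,[W,A^{*}(t)]\,\gs\,dt ,
\end{equation*}
using $e^{-iH_{\alpha}t}\gs=e^{-iE_{\alpha}t}\gs$; absolute convergence of the integral follows from the decay bounds established below.

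Next I would extract the leading term. The Coulomb-gauge condition $\feps(\fk,\lambda)\cdot\fk=0$ forces $\nabla\cdot\fA(\alpha\fx)=0$, so $W=2\alpha^{3/2}\fp\cdot\fA(\alpha\fx)+\alpha^{3}\fA(\alpha\fx)^{2}$ and $\fp\cdot\fA(\alpha\fx)=\fA(\alpha\fx)\cdot\fp$; since only the annihilation part of each $\fA$ fails to commute with $A^{*}(t)$, the commutator $[W,A^{*}(t)]$ is a finite sum whose leading piece is $2\alpha^{3/2}\sum_{j}\fp\cdot\mathbf c_{j}(\fx,t)\prod_{k\neq j}a^{*}(f_{k,t})$, with a remainder of order $\alpha^{3}$, where (using $\feps$ real)
\begin{equation*}
  \mathbf c_{j}(\fx,t):=[a(\fG_{\fx}),a^{*}(f_{j,t})]=\int\frac{\kappa(\fk)}{\sqrt{2|\fk|}}\,e^{i(\alpha\fk\cdot\fx-|\fk|t)}\sum_{\lambda=1,2}\feps(\fk,\lambda)f_{j}(\fk,\lambda)\,d^{3}k\ \in\ \C^{3}.
\end{equation*}
Keeping only $2\alpha^{3/2}\fp\cdot[\fA(\fn),A^{*}(t)]\go$ and using $\go=\ph_{el}\otimes\Omega$, $e^{-iH_{0}t}\go=e^{-iE_{0}t}\go$, $e^{iH_{f}t}A^{*}(t)e^{-iH_{f}t}=a^{*}(\uf)$ and $e^{iH_{f}t}\fA(\fn)e^{-iH_{f}t}=\fA(\fn,t)$, a direct computation gives $e^{-iE_{0}t}e^{iH_{0}t}\fp\cdot[\fA(\fn),A^{*}(t)]\go=\fp(t)\ph_{el}\otimes[\fA(\fn,t),a^{*}(\uf)]\Omega$, which is exactly the integrand of \eqref{reduction}.

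What remains is to bound the difference of the two integrands, which I would split into four errors, each estimated by an $\alpha$-dependent multiple of an $L^{1}(\R,dt)$ function of $t$, and then integrated. (i) \emph{Dipole replacement} $\fG_{\fx}\rightsquigarrow\fG_{\fn}$: since $\sum_{\lambda}\feps(\cdot,\lambda)f_{j}(\cdot,\lambda)$ has compact support away from $\fn$, where the phase $|\fk|$ is non-stationary, two integrations by parts in $\fk$ should give $|\mathbf c_{j}(\fx,t)-\mathbf c_{j}(\fn,t)|\le C\alpha|\fx|(1+\alpha|\fx|)\langle t\rangle^{-2}$, and the factor $|\fx|$ is absorbed by the exponential localization \eqref{exp} of $\gs$ (and of $\ph_{el}$), giving $\cO(\alpha^{5/2})$ after the prefactor. (ii) The $\alpha^{3}\fA(\alpha\fx)^{2}$-part of $[W,A^{*}(t)]$, treated the same way, contributes $\cO(\alpha^{3})$. (iii) \emph{Ground-state replacement} $\gs\rightsquigarrow\go$: small-$\alpha$ ground-state perturbation theory should give $\|(H_{0}+c)^{1/2}(\gs-\go)\|=\cO(\alpha^{3/2})$ and $|E_{\alpha}-E_{0}|=\cO(\alpha^{3})$ (the first-order shift being $\alpha^{3}\|\fG_{\fn}\|^{2}$, since $\langle\Omega,\fA(\alpha\fx)\Omega\rangle=0$), contributing $\cO(\alpha^{3})$; here the absence of a spectral gap at $E_{0}$ is compensated by the vanishing of $\langle\ph_{el},\fp\,\ph_{el}\rangle=0$. (iv) \emph{Dynamics replacement} $e^{iH_{\alpha}t},E_{\alpha}\rightsquigarrow e^{iH_{0}t},E_{0}$ acting on $\psi_{1}(t):=\fp\cdot[\fA(\fn),A^{*}(t)]\go$: Duhamel's formula plus the form bound $\|W\,e^{i(H_{0}-E_{0})s}\psi_{1}(t)\|\le C\alpha^{3/2}\langle t\rangle^{-2}$ give $\|(e^{i(H_{\alpha}-E_{\alpha})t}-e^{i(H_{0}-E_{0})t})\psi_{1}(t)\|\le C\alpha^{3/2}|t|\langle t\rangle^{-2}$, while trivially the same quantity is $\le 2\|\psi_{1}(t)\|\le C\langle t\rangle^{-2}$. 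Adding (i)--(iv) and restoring prefactors would then give \eqref{reduction}.

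I expect step (iv) to be the main obstacle: the naive Duhamel estimate for the difference of the interacting and the free propagator grows linearly in $|t|$ and is therefore not integrable over $\R$, so it cannot be used on its own. The remedy I have in mind is to interpolate it against the uniform bound $2\|\psi_{1}(t)\|$ --- using the Duhamel bound for $|t|\le\alpha^{-3/2}$ and the trivial one for $|t|>\alpha^{-3/2}$ --- which produces $\cO(\alpha^{3}\log(1/\alpha))=\cO(\alpha^{5/2})$, the logarithm being harmless. A further nuisance, not a real obstacle, is that neither $[W,A^{*}(t)]\gs$ nor $\psi_{1}(t)$ need lie in $D(H_{0})$ --- already for $V(\fx)=-Z/|\fx|$, because of the cusp of $\ph_{el}$ --- so all propagator comparisons and the control of $W$ must be carried out with form bounds (and a routine regularization), rather than with operator inequalities on $D(H_{0})$.
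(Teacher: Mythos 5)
Your proposal is correct and follows essentially the same route as the paper's proof of Theorem~\ref{thm4-1} (with $\tau=0$): Cook's identity, isolation of the $\Wdip$-term via the stationary-phase bounds of Lemma~\ref{lm:stat-phase} and the moment estimates, and a second-order Duhamel comparison with a time cutoff producing the harmless $\cO(\alpha^{3}\log(1/\alpha))$ error. The only organizational differences are that the paper performs the propagator replacement by iterating \eqref{eq:ftc} and bounding the double commutator \eqref{gl3-22} (which is exactly your step (iv) in disguise, with the same cutoff-plus-interpolation logic), that it does this before rather than after the $\gs\to\go$ replacement, and that the uniform-in-$\alpha$ inputs you invoke somewhat loosely --- moments of $|\fx|$ and $H_f$ on $\gs$, the weighted bound $\|(\hel+i)(H_f+1)^m(\gs-\go)\|=\cO(\alpha^{3/2})$, and $|E_\alpha-E_0|=\cO(\alpha^{3/2})$ (the paper neither proves nor needs your sharper $\cO(\alpha^3)$) --- are precisely the content of Lemma~\ref{la1}, Lemma~\ref{la6} and Proposition~\ref{pA-4}.
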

The first term of \eqref{reduction} gives no contribution to
the ionization probability \eqref{ion-prob} because
$a_{+}^{*}(\uf)\gs\in \cH_{+}^{\alpha}$. The second term is
proportional to $\alpha^{3/2}$ and it is due to scattering processes
where one of the $n$ photons $f_1,\ldots,f_n$ is absorbed.
The remainder terms are of order $\cO(\alpha^{5/2})$
and stem from the dipole approximation $\fA(\alpha\fx)\to \fA(\fn)$, from
dropping $\alpha^3\fA(\fx)^2$ and from
ignoring processes of higher order in $\alpha^{3/2}$.
To isolate the contribution of order $\alpha^3$ from \eqref{ion-prob}
using \eqref{reduction}, we need:

\begin{theorem}\label{thm2}
Suppose that $\hap \supseteq \one_{(-\infty,\Sigma_{\alpha})}(H_{\alpha})$ for
$\alpha$ in a neighborhood of $0$, and suppose that $\hel$ has only negative
eigenvalues. Then
\begin{equation}\label{pap-pp}
\lim_{\alpha \to 0} \pap = \one_{pp}(\hel)
\end{equation}
in the strong operator topology.
\end{theorem}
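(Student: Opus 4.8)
The plan is to compute the strong limit of $\pap$ by identifying, for each fixed $\alpha$, a large and explicit subspace contained in $\hap$, and then to show that as $\alpha\to 0$ this subspace fills out exactly $\Ran\one_{pp}(\hel)\otimes\F$. The starting point is the hypothesis $\hap\supseteq\one_{(-\infty,\Sigma_\alpha)}(H_\alpha)$, together with the exponential-localization estimate \eqref{exp}. First I would argue the easy inclusion $\lim_{\alpha\to0}\pap\ge\one_{pp}(\hel)\otimes P_\Omega$ in a weak sense: since $\hel$ has only negative eigenvalues, any eigenvector $\varphi$ of $\hel$ gives a vector $\varphi\otimes\Omega$ with $H_0(\varphi\otimes\Omega)=(\hel\varphi)\otimes\Omega$ strictly below $0=\Sigma_0$; by continuity of $E_\alpha$, $\Sigma_\alpha$ and of the perturbation $W=\cO(\alpha^{3/2})$ in the relevant operator-norm sense (on $D(H_0)$), for small $\alpha$ the spectral projection $\one_{(-\infty,\Sigma_\alpha)}(H_\alpha)$ applied to $\varphi\otimes\Omega$ converges to $\varphi\otimes\Omega$. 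More generally, the same applies to $\varphi\otimes a^*(g_1)\cdots a^*(g_k)\Omega$ with $g_i$ as in \eqref{in-state}, provided the total unperturbed energy stays below $\Sigma_\alpha$; and such vectors span $\Ran\one_{pp}(\hel)\otimes\F$ because every photon wave function can be lowered in energy by scaling, while $\Sigma_0=0$ and the atomic eigenvalues are $<0$. Hence $\liminf_{\alpha\to0}\pap\ge\one_{pp}(\hel)$ strongly.

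For the reverse inequality I would show $\limsup_{\alpha\to0}\pap\le\one_{pp}(\hel)$, equivalently $\lim_{\alpha\to0}(1-\pap)\ge\one_{c}(\hel)\otimes 1$ on the continuous subspace of $\hel$. The key point is that $\hap$, being the closure of the span of the asymptotic-creation images of $\gs$, consists of vectors that are \emph{bound} in the electron coordinate: by \eqref{intro3} and the CCR \eqref{CCR}, \eqref{vacuum}, the vectors $a_+^*(g_1)\cdots a_+^*(g_n)\gs$ satisfy a uniform-in-$t$ localization bound $\sup_t\|e^{\eps|x|}e^{-iH_\alpha t}\Psi\|<\infty$ inherited from the analogous bound on $\gs$ (which follows from \eqref{exp} since $E_\alpha<\Sigma_\alpha$), because $a^*(g_{i,t})$ acts only on the photon variables and commutes with $e^{\eps|x|}$. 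Consequently every $\Psi\in\hap$ is approximated, uniformly in $t$, by states exponentially localized around the nucleus. Now take $\psi\in\Ran\one_c(\hel)$ and any $\chi\in\F$; I would show that $\psi\otimes\chi$ is asymptotically orthogonal to all such localized states. Writing $e^{-iH_\alpha t}=e^{-iH_0 t}+\cO(\alpha^{3/2}|t|)$ on a suitable dense set and using RAGE for $\hel$ (continuous spectral subspace $\Rightarrow$ time-averaged escape to spatial infinity of $e^{-i\hel t}\psi$), the overlap $\langle e^{-iH_\alpha t}(\psi\otimes\chi),\Psi\rangle$ is, up to an $\alpha$-dependent error, controlled by $\langle e^{-i\hel t}\psi\otimes e^{-iH_f t}\chi,\Psi\rangle$, whose time-average over $[0,T]$ tends to $0$ as $T\to\infty$ against the localized approximants of $\Psi$. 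Taking $\alpha\to0$ first (killing the $W$-error on each finite time window) and then $T\to\infty$ yields $\langle\psi\otimes\chi,\pap(\psi\otimes\chi)\rangle\to 0$, i.e. $\psi\otimes\chi\in\ker\lim\pap$.

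Combining the two inclusions gives $\lim_{\alpha\to0}\pap=\one_{pp}(\hel)\otimes 1$, which is the asserted strong limit (identifying $\one_{pp}(\hel)$ on $\cH$ with $\one_{pp}(\hel)\otimes 1$). Two points require care and I expect the second to be the main obstacle. First, one must make precise the sense in which $W=H_\alpha-H_0$ is small: $W$ is \emph{not} bounded, but $W(H_0+i)^{-1}=\cO(\alpha^{3/2})$ in operator norm (using $\|\fA(\alpha\fx)(H_f+1)^{-1/2}\|<\infty$ from the Schwartz cutoff $\kappa$), which suffices for the continuity of spectral projections at energies in a fixed compact set and for the Duhamel estimate $\|(e^{-iH_\alpha t}-e^{-iH_0 t})\eta\|\le C\alpha^{3/2}|t|\,\|(H_0+i)\eta\|$ on the relevant core. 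Second — the genuinely delicate step — is the uniform-in-$t$ exponential localization of the asymptotic states $a_+^*(\uf)\gs$: one needs that the limit defining $a_+^*(f)$ is attained in a topology strong enough to preserve $\sup_t\|e^{\eps|x|}e^{-iH_\alpha t}\cdot\|<\infty$. This should follow by combining the convergence statement \eqref{intro3} with the ground-state localization \eqref{exp} and the fact that $[e^{\eps|x|}, a^*(f_t)]=0$, but verifying that the error $o(1)$ in \eqref{intro3} remains $o(1)$ after multiplication by the unbounded weight $e^{\eps|x|}$ — uniformly enough — is where the real work lies; one may have to revisit the construction of $a_\pm^*$ in \cite{GZ} and track the exponential weight through it, or alternatively replace $e^{\eps|x|}$ by a bounded regularization $e^{\eps|x|}(1+\delta|x|)^{-1}$ and remove $\delta$ at the end.
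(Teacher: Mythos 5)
Your argument for the inclusion $\lim_{\alpha\to 0}\pap\geq\one_{pp}(\hel)$ contains a genuine gap. You propose to exhaust $\Ran\one_{pp}(\hel)\otimes\F$ by vectors $\varphi\otimes a^{*}(g_1)\cdots a^{*}(g_k)\Omega$ whose total unperturbed energy lies below $\Sigma_{\alpha}$, so that the hypothesis $\hap\supseteq\Ran\one_{(-\infty,\Sigma_{\alpha})}(H_{\alpha})$ can be invoked. But $\Sigma_{\alpha}\geq 0$ and $\lambda\geq E_0$, so this constraint forces the total photon energy to be less than $|E_0|$, i.e.\ $a^{*}(g_1)\cdots a^{*}(g_k)\Omega\in\Ran\one_{[0,|E_0|)}(H_f)$, which is a \emph{proper closed} subspace of $\F$. ``Lowering the energy by scaling'' produces a different photon state, not an approximation of the given one, so these vectors do not span $\Ran\one_{pp}(\hel)\otimes\F$; your argument can only yield convergence of $\pap$ to the identity on $\Ran\one_{pp}(\hel)\otimes\one_{[0,|E_0|)}(H_f)\F$. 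To capture states containing photons of arbitrary energy one must use the other defining feature of $\hap$ --- that it is the closed span of the $a^{*}_{+}(\uh)\gs$ and hence stable under adding asymptotic photons --- and this is exactly what the paper does: writing $\varphi\otimes a^{*}(\uf)\Omega=a^{*}(\uf)F(H_0)(\varphi\otimes\Omega)$ with $F\in C_0^{\infty}((-\infty,0))$, $F(\lambda)=1$, and proving $a^{*}_{+}(\uf)F(H_{\alpha})-a^{*}(\uf)F(H_0)=\cO(\alpha^{3/2})$ (Lemma~\ref{lB-1}), one replaces this vector, up to $\cO(\alpha^{3/2})$, by $a^{*}_{+}(\uf)F(H_{\alpha})(\varphi\otimes\Omega)$, which lies in $\hap$ because $F(H_{\alpha})(\varphi\otimes\Omega)$ does and $a^{*}_{+}(\uf)$ maps $\hap$ into itself. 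Only the low-energy cutoff $F$ ever sees $\Sigma_{\alpha}$; the high-energy photons ride on top through $a^{*}_{+}$.

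For the reverse inclusion, the route through uniform-in-$t$ exponential localization of $a^{*}_{+}(\uf)\gs$ plus a RAGE argument is both much heavier than necessary and, as you yourself note, incomplete at its crucial step: the $o(1)$ in \eqref{intro3} is a plain norm estimate and does not survive multiplication by the unbounded weight $e^{\eps|x|}$ without a substantial new argument, and the interchange of the limits $\alpha\to 0$ and $T\to\infty$ would also require localization bounds uniform in $\alpha$. The paper avoids all of this with a soft perturbative computation: the same Lemma~\ref{lB-1}, combined with the refined estimate $\|(\hel+i)(H_f+1)^m(\gs-\go)\|=\cO(\alpha^{3/2})$ of Lemma~\ref{la6}, gives $a^{*}_{+}(\uf)\gs=a^{*}(\uf)\go+\cO(\alpha^{3/2})$, and $a^{*}(\uf)\go=\ph_{el}\otimes a^{*}(\uf)\Omega$ is exactly orthogonal to $\Ran\one_{c}(\hel)\otimes\F$. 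Thus every generating vector of $\hap$ is $\cO(\alpha^{3/2})$-close to $\Ran\one_{pp}(\hel)\otimes\F$, which yields the vanishing of $\pap(\one_{c}(\hel)\otimes\one_{\F})$ in the weak operator topology; strong convergence then follows from weak convergence of orthogonal projections together with convergence of the norms $\|\pap\varphi\|^2=\sprod{\varphi}{\pap\varphi}$. I would recommend replacing the localization/RAGE strategy by this perturbative one.
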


Combining Theorem~\ref{thm1} and Theorem~\ref{thm2} we see that 
\begin{eqnarray*}
   \lefteqn{\|(1-\pap) a_{-}^{*}(\uf)\gs\|^2 =
     \|(1-\pap)\Big(a_{-}^{*}(\uf)-a_{+}^{*}(\uf)\Big)\gs\|^2}\\
 &=& \|\one_{c}(\hel) \Big(a_{-}^{*}(\uf)-a_{+}^{*}(\uf)\Big)\gs\|^2 +o(\alpha^3)\\
 &=&  \alpha^3 \|\one_{c}(\hel)\int_{-\infty}^{\infty}\fp(s)\ph_{el}\otimes
 [\fA(\fn,s),a^{*}(\uf)]\Omega\,ds\|^2 + \co(\alpha^3)
\end{eqnarray*}
where $\one_{c}(\hel)=1-\one_{pp}(\hel)$, and where the second
equation is justified by the $\alpha$ dependence of 
$a_{-}^{*}(\uf)\gs-a_{+}^{*}(\uf)\gs$ as given by \eqref{reduction}.
We are now going to express the coefficient of $\alpha^3$ in terms of
generalized eigenfunctions of $\hel$, which makes it explicitly
computable in simple cases. A general and sufficient condition for the existence of a complete set of
generalized eigenfunctions is the existence and completeness of  
a (modified) wave operator $\Omega_{+}$ associated
with $\hel$. This condition is satisfied for our choice of $V$. It
means that there exists an isometric operator
$\Omega_{+}\in\cL(\Hel)$ with $\Ran\Omega_{+}=\one_{c}(\hel)\Hel$
and $\hel\Omega_{+}=\Omega_{+}(-\Delta)$.
In particular, the singular continuous spectrum of $\hel$ is
empty. Given the wave operator $\Omega_{+}$ and the fact that
$(\hel-i)^{-1}\expect{\fx}^{-2}$ is a Hilbert-Schmidt operator, it
is  easy to establish existence of generalized eigenfunctions
$\ph_{\fq}$, $\fq\in\R^3$, of $\hel$ with the following properties
\cite{PS}:
\begin{itemize}
\item[(i)] The function $(\fx,\fq)\mapsto \expect{\fq}^{-2}
\expect{\fx}^{-2}\ph_{\fq}(x)$ is square
integrable on $\R^3\times\R^3$, in particular
$\expect{\fx}^{-2}\ph_{\fq}\in L^2(\R^3)$ for almost every
$\fq\in\R^3$. $\expect{\fx}:=(1+|\fx|^2)^{1/2}$.  
\item[(ii)] If $\psi\in D(|\fx|^2)$ then
\begin{equation}\label{gef1}
    \|\one_{ac}(\hel)\psi\|^2 = \int_{\R^3}|\sprod{\ph_{\fq}}{\psi}|^2d^3q
\end{equation}
\item[(iii)] If $F:\R\to\C$ is a Borel function, $\psi\in D(|\fx|^2)\cap
D(F(\hel))$, and $F(\hel)\psi\in D(|\fx|^2)$, then
\begin{equation}\label{gef2}
  \sprod{\ph_{\fq}}{F(\hel)\psi} = F(\fq^2)\sprod{\ph_{\fq}}{\psi}
\end{equation}
for almost every $\fq\in\R^3$.
\end{itemize}
In (ii) and (iii) we use $\sprod{\ph_{\fq}}{\psi}$ to denote the
integral $\int \overline{\ph_{\fq}(\fx)}\psi(\fx)\,d^3x$, 
which is well defined by (i) and by the assumption
$\psi\in D(|\fx|^2)$.

The Theorem~\ref{thm1} in conjunction with (i)-(iii) implies the
following theorem, which is our main result specialized to the case of
only one asymptotic photon in the incident scattering state.

\begin{theorem}\label{thm3}
For all $f\in L^2(\R^3 \times \{1,2\})$ with 
$\sum_{\lambda=1}^2 \feps(\cdot, \lambda) f(\cdot,\lambda) \in
C_0^2(\R^3\backslash\{\fn\}, \C^3)$,
\begin{eqnarray}\label{intro5}
     \lefteqn{\left\|\one_{ac}(\hel)\big(a_{-}^{*}(f)\Phi_{\alpha}-a_{+}^{*}(f)
\Phi_{\alpha}\big)\right\|^2}\nonumber\\
 &=& \alpha^3 \int_{\R^3} d^3q \left|\sprod{\ph_{\fq}}{\fx\ph_{el}}\cdot
 \int_{-\infty}^{\infty}e^{i(\fq^2-E_0)t}\sprod{\Omega}{\fE(\fn,t)a^{*}(f)\Omega}dt\right|^2 +
\cO(\alpha^4)
\end{eqnarray}
as $\alpha\to 0$. Here, $\fE(\fn,t)=-i[H_f,\fA(\fn,t)]$, $\ph_{el}$ is a normalized ground state
of $\hel$ and $\ph_{\fq}$, $\fq\in\R^3$, is any family of
generalized eigenfunction of $\hel$ with properties (i)-(iii) above.
\end{theorem}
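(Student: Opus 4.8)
My plan is to substitute Theorem~\ref{thm1} with $n=1$ into the left side of \eqref{intro5} and to evaluate the resulting leading‑order vector by means of properties (i)--(iii). Put $E_0:=\inf\sigma(\hel)$, so that $\hel\ph_{el}=E_0\ph_{el}$. For one photon, $\uf=f$, and Theorem~\ref{thm1} reads
\begin{equation}\label{pr3-eq1}
   a_{-}^{*}(f)\gs-a_{+}^{*}(f)\gs=-2i\,\alpha^{3/2}\int_{-\infty}^{\infty}\fp(s)\ph_{el}\otimes[\fA(\fn,s),a^{*}(f)]\Omega\,ds+\cO(\alpha^{5/2}).
\end{equation}
Since $\fA(\fn,s)=e^{iH_fs}\fA(\fn)e^{-iH_fs}$ is, like $\fA(\fn)$, a sum of one creation and one annihilation operator, the canonical commutation relations make each component of $[\fA(\fn,s),a^{*}(f)]$ a ($\C^3$-valued) $c$-number
\[
   c_j(s):=[\fA_j(\fn,s),a^{*}(f)]=\sprod{\Omega}{\fA_j(\fn,s)a^{*}(f)\Omega}=\sum_{\lambda}\int e^{-i\omega(\fk)s}\,\overline{\fG_{\fn,j}(\fk,\lambda)}\,f(\fk,\lambda)\,d^3k,
\]
so that $[\fA(\fn,s),a^{*}(f)]\Omega=\vec c(s)\,\Omega$ and the first‑order term of \eqref{pr3-eq1} is a product vector $\alpha^{3/2}\,\xi\otimes\Omega$ with $\xi:=-2i\sum_{j}\int_{-\infty}^{\infty}c_j(s)\,\fp_j(s)\ph_{el}\,ds$. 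As $\one_{ac}(\hel)$ acts on the electronic factor only and $\|\Omega\|=1$, applying $\one_{ac}(\hel)$ to \eqref{pr3-eq1} and taking squared norms gives
\begin{equation}\label{pr3-eq2}
   \big\|\one_{ac}(\hel)\big(a_{-}^{*}(f)\gs-a_{+}^{*}(f)\gs\big)\big\|^{2}=\alpha^{3}\,\|\one_{ac}(\hel)\xi\|^{2}+\cO(\alpha^{4}),
\end{equation}
the cross term being $\cO(\alpha^{4})$ because $\|\xi\|<\infty$; so it remains to identify $\|\one_{ac}(\hel)\xi\|^{2}$ with the $\fq$-integral in \eqref{intro5}.

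To that end I would bring $\xi$ into a form to which (ii)--(iii) apply. From $\fp_j=\tfrac{i}{2}[\hel,x_j]$ and $\hel\ph_{el}=E_0\ph_{el}$ one has $\fp_j\ph_{el}=\tfrac{i}{2}(\hel-E_0)(x_j\ph_{el})$, and, since $\ph_{el}$ together with its derivatives decays exponentially, $x_j\ph_{el}\in D(|\fx|^{2})\cap D(\hel)$ with $\hel(x_j\ph_{el})\in D(|\fx|^{2})$. Inserting this and performing the $s$-integral — which, as $c_j\in L^{1}$, turns $\int c_j(s)e^{-iE_0s}e^{i\hel s}\,ds$ into the bounded function $\tilde g_j$ of $\hel$ with $\tilde g_j(\mu)=(\mu-E_0)\int_{-\infty}^{\infty}c_j(s)e^{i(\mu-E_0)s}\,ds$ — yields $\xi=\sum_{j}\tilde g_j(\hel)(x_j\ph_{el})$. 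The displayed formula exhibits $c_j$ as the one‑dimensional Fourier transform of a compactly supported $C^{2}$ function of $|\fk|$ (here one uses $\kappa\in\cS$ and $\sum_{\lambda}\feps(\cdot,\lambda)f(\cdot,\lambda)\in C_0^{2}(\R^{3}\setminus\{\fn\})$), so by Fourier inversion $\tilde g_j$ is continuous with compact support, and in particular $c_j$ and $c_j'$ are $C^{1}$, integrable, and vanish at $\pm\infty$. Granting that each $\tilde g_j(\hel)(x_j\ph_{el})\in D(|\fx|^{2})$ (hence $\xi\in D(|\fx|^{2})$), property~(ii) gives $\|\one_{ac}(\hel)\xi\|^{2}=\int_{\R^{3}}|\sprod{\ph_{\fq}}{\xi}|^{2}\,d^{3}q$, while property~(iii) — applied with $F=\tilde g_j$ and $\psi=x_j\ph_{el}$, admissible exactly because $\tilde g_j(\hel)(x_j\ph_{el})\in D(|\fx|^{2})$ — gives
\[
   \sprod{\ph_{\fq}}{\xi}=\sum_{j}\tilde g_j(\fq^{2})\sprod{\ph_{\fq}}{x_j\ph_{el}}=(\fq^{2}-E_0)\sum_{j}\sprod{\ph_{\fq}}{x_j\ph_{el}}\int_{-\infty}^{\infty}c_j(s)e^{i(\fq^{2}-E_0)s}\,ds.
\]

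It remains to turn $\fA$ into $\fE$. From $c_j(s)=\sprod{\Omega}{\fA_j(\fn,s)a^{*}(f)\Omega}$ and $\tfrac{d}{ds}\fA(\fn,s)=i[H_f,\fA(\fn,s)]=-\fE(\fn,s)$ we get $c_j'(s)=-\sprod{\Omega}{\fE_j(\fn,s)a^{*}(f)\Omega}$. Since $E_0<0$ one has $\fq^{2}-E_0>0$ for every $\fq$, so that, the boundary terms vanishing by the decay of $c_j$, an integration by parts gives
\[
   \int_{-\infty}^{\infty}c_j(s)e^{i(\fq^{2}-E_0)s}\,ds=\frac{1}{i(\fq^{2}-E_0)}\int_{-\infty}^{\infty}e^{i(\fq^{2}-E_0)s}\sprod{\Omega}{\fE_j(\fn,s)a^{*}(f)\Omega}\,ds.
\]
Substituting this back, the factor $(\fq^{2}-E_0)$ cancels and $|\sprod{\ph_{\fq}}{\xi}|^{2}=\big|\sprod{\ph_{\fq}}{\fx\ph_{el}}\cdot\int_{-\infty}^{\infty}e^{i(\fq^{2}-E_0)t}\sprod{\Omega}{\fE(\fn,t)a^{*}(f)\Omega}\,dt\big|^{2}$; inserting this into \eqref{pr3-eq2} produces \eqref{intro5}. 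The one genuinely technical step — which I expect to be the main obstacle — is the claim that $\tilde g_j(\hel)(x_j\ph_{el})\in D(|\fx|^{2})$ (together with the attendant justification of the interchanges of the $s$- and $\fq$-integrations with the pairing against the $\ph_{\fq}$); it rests on standard but potential‑dependent facts about how spectrally compactly supported functions of $\hel$, equivalently the propagator $e^{i\hel t}$, act on the weighted spaces $D(|\fx|^{k})$. Alternatively one may sidestep it using the available complete modified wave operator $\Omega_{+}$: since $\one_{ac}(\hel)=\Omega_{+}\Omega_{+}^{*}$, one has $\|\one_{ac}(\hel)\xi\|^{2}=\|\Omega_{+}^{*}\xi\|^{2}=\|\sum_j\tilde g_j(-\Delta)\Omega_{+}^{*}(x_j\ph_{el})\|^{2}$ by intertwining, which is the same $\fq$-integral once one recalls that $\widehat{\Omega_{+}^{*}\psi}(\fq)=\sprod{\ph_{\fq}}{\psi}$.
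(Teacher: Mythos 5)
Your overall route is the paper's route: reduce via Theorem~\ref{thm1} to the single vector $\xi=-2i\sum_j\int c_j(s)\fp_j(s)\ph_{el}\,ds$ (using that $[\fA(\fn,s),a^*(f)]$ is a $c$-number, which is exactly how Proposition~\ref{l3.3} trivializes for $n=1$), observe the cross term is $\cO(\alpha^4)$, and then evaluate $\|\one_{ac}(\hel)\xi\|^2$ through the generalized eigenfunctions and an integration by parts that trades $\fA$ for $\fE$. The algebra is all correct, including the sign bookkeeping $p_j\ph_{el}=\tfrac{i}{2}(\hel-E_0)x_j\ph_{el}$, $c_j'(s)=-\sprod{\Omega}{\fE_j(\fn,s)a^{*}(f)\Omega}$, and the cancellation of $(\fq^2-E_0)$. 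Where you differ is the order of operations: you do the $s$-integral first, packaging $\xi$ as $\sum_j\tilde g_j(\hel)x_j\ph_{el}$ and invoking property (iii) with $F=\tilde g_j$, and only afterwards integrate by parts at the scalar level; the paper (Proposition~\ref{l3.3} followed by Proposition~\ref{thm:ef-expansion}) integrates by parts \emph{first}, rewriting $\xi=\int\fx(s)\ph_{el}\cdot\sprod{\Omega}{\fE(\fn,s)a^{*}(f)\Omega}\,ds$ as in \eqref{gl5-10}, and then only ever needs property (iii) for $F(\mu)=e^{i(\mu-E_0)s}$.

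This reordering is not cosmetic: it is precisely what discharges the step you flag as "the main obstacle." After the integration by parts, the vector-valued integrand is $e^{i(\hel-E_0)s}\fx\ph_{el}$ times a decaying scalar, and the only weighted-space input needed is the propagator bound $\|\,|\fx|^2e^{-i\hel s}\fx\ph_{el}\|\leq C(1+s^2)$, which the paper proves in Lemma~\ref{lb1} by a double commutator (Duhamel) expansion of $e^{i s\hel}|\fx|^2e^{-is\hel}$ using the hypothesis $\fx\cdot\nabla V\in L^2_{loc}$; the interchange of the $s$-integral with the pairing $\sprod{\ph_{\fq}}{\cdot}$ is then a closedness argument for $|\fx|^2$ (Lemma~\ref{lm:one-ac}), using properties (i)--(ii). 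Your claim $\tilde g_j(\hel)x_j\ph_{el}\in D(|\fx|^2)$ is the same fact in disguise — write $\tilde g_j(\hel)=\mathrm{const}\int c_j'(s)\,e^{i(\hel-E_0)s}\,ds$ and use closedness of $|\fx|^2$ together with Lemma~\ref{lb1} — but note that you then need the scalar weight to beat the $(1+s^2)$ growth, i.e.\ you need the decay of $c_j'$ beyond the bare $\langle s\rangle^{-2}$ that $C_0^2$ regularity guarantees; this is the one place where the estimate has to be done with care rather than "granted." Your wave-operator alternative ($\one_{ac}(\hel)=\Omega_+\Omega_+^{*}$ and intertwining) is also viable and is in fact close to how the $\ph_{\fq}$ are constructed in \cite{PS}, though as stated the theorem allows \emph{any} family satisfying (i)--(iii), so you would still want to phrase the final identity through \eqref{gef1}--\eqref{gef2} rather than through the specific Fourier realization $\widehat{\Omega_+^{*}\psi}(\fq)=\sprod{\ph_{\fq}}{\psi}$.
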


The expression \eqref{intro5} for the ionization probability can
be understood, \emph{on a formal level}, by first order, time-dependent perturbation
theory. To this end one considers the transitions $\ph_{el}\otimes f\mapsto
\ph_{\fq}\otimes \Omega$, for fixed $\fq\in\R^3$, in the \emph{interaction
picture} defined by $H_0$. Then the time-evolution of state vectors is 
generated by the time-dependent interaction operator $W(t)=e^{iH_0t}We^{-iH_0t}
= 2\alpha^{3/2}\fp(t)\cdot \fA(\alpha \fx,t) + \alpha^3 \fA(\alpha
\fx,t)^2$ with $\fp(t)=e^{i\hel t} \fp e^{-i\hel t}$ and $\fA(\alpha\fx,t) =
e^{iH_0t}\fA(\alpha\fx)e^{-iH_0 t}$. In the computation of the
transition amplitude to the order $\alpha^{3/2}$ one drops $\alpha^3 \fA(\alpha\fx,t)^2$ and one replaces 
$\fA(\alpha \fx,t)$ by $\fA(\fn,t)$, which is
known as the dipole approximation. Then, an integration by parts using
that 
$$
   2\fp(t) = \frac{d}{dt}\fx(t),\qquad -\frac{\partial}{\partial t}\fA(\fn,t) = \fE(\fn,t),
$$  
leads to a result for the transition amplitude which agrees with the
expression in \eqref{intro5}
whose modulus squared is integrated over $\fq\in\R^3$. The
Theorem~\ref{thm3} and its proof justify this formal
derivation and the use of the dipole approximation. Note that
$\alpha \fx =\fX$, hence the ionization probability is of order
$\alpha^3$ rather than of order $\alpha$, as a formal computation, similar to the
one above, in dimension-full quantities would suggest.

We prove a more general result than Theorem \ref{thm3}, where the
incoming scattering state may contain several asymptotic photons,
and where the external potential $V$ is taken from a large class
of long range potentials. In the case where the asymptotic state
at $t=-\infty$ is of the form \eqref{in-state}
and each of the photons $f_1,\ldots,f_n\in L^2(\R^3\times\{1,2\})$ satisfies
the hypotheses of Theorem~\ref{thm3}, in addition to
$\sprod{f_i}{f_j}=\delta_{ij}$, our result says that
\begin{equation}\label{intro5b}
   \left\|\one_{ac}(\hel)\big(a_{-}^{*}(\uf)\Phi_{\alpha}-a_{+}^{*}(\uf)
\Phi_{\alpha}\big)\right\|^2 = \alpha^3\sum_{l=1}^n m_l P^{(3)}(f_l) + \cO(\alpha^4)
\end{equation}
with 
\begin{equation}
P^{(3)}(f_l):=\int\limits_{\R^3} d^3q \left|\sprod{\ph_{\fq}}{\fx\ph_{el}}\cdot
 \int_{-\infty}^{\infty}\sprod{\Omega}{\fE(\fn,t)a^{*}(f_l)\Omega}e^{i(\fq^2-E_0)t}dt\right|^2.
\label{intro6}
\end{equation}
The integral with respect to $t$ in \eqref{intro6} can be
computed explicitly in terms of $f_l$ and $\fG_0$, and it gives
\begin{eqnarray*}
\lefteqn{\int_{-\infty}^{\infty}e^{i(\fq^2-E_0)t}\sprod{\Omega}{\fE(\fn,t)a^{*}(f_l)\Omega}dt}\\
&=&i\pi\int_{|\fk|=\fq^2-E_{0}}
    \kappa(\fk)\sqrt{2|\fk|}\sum_{\lambda=1,2}\feps_{\lambda}(\fk,\lambda)f_l(\fk,\lambda)
d\sigma(\fk),
\end{eqnarray*}
where $d\sigma(\fk)$ is the surface measure of the sphere
$\{\fk \in \R^3: |\fk|=\fq^2-E_{0}\}$ in $\R^3$. The integration over
the spheres with $|\fk|=\fq^2-E_0$ expresses the conservation of
energy in the scattering process, and the additivity \eqref{intro5b} of the ionization
probability with respect to the incoming photons corresponds to the experimental
fact, that the number of photo-electrons is proportional to the
intensity of the incoming radiation.

In Section~\ref{s4-3} we give a second derivation of $\alpha^3
P_3(\uf)$ based on a space-time analysis
of the ionization process. This approach, in a slightly different form, was
introduced in the papers \cite{BKZ,Z}, and does not assume
asymptotic completeness of Rayleigh scattering.

The existence of outgoing scattering state describing an ionized atom
and an electron escaping to spacial infinity is a
difficult open problem in the model described above. Only for $V=0$
such states have been constructed so far \cite{Pizzo2003,ChenFroePizzo2007}. Hence it is not
possible yet to study the ionization probability based on transition
probabilities between asymptotic states. 

Previously ionization by {\it quantized} fields was investigated in 
\cite{BKZ,FM1,FM2,Z}. \cite{BKZ} and \cite{Z} are precursors of the
present paper on simpler models of atoms and the ionization
probability defined in a different, but equivalent way. In
\cite{FM1,FM2} it is shown that a thermal quantized field leads to
ionization in the sense of absence of an equilibrium state of atom and
field. \\
There is a large host of mathematical results on ionization by 
{\it classical} electric fields: Schrader and various coauthors study
the phenomenon of stabilization by providing upper and lower bounds on
the ionization probability, see \cite{EKS,FKS,Laser} and the
references therein. They use the Stark-Hamiltonian with a time
dependent electric field $\mathcal{E}(t)$ that vanishes unless $0 \leq
t \leq \tau < \infty$. Lebowitz and various coauthors compute the
probability of ionization by an electric field that is periodic in
time; see \cite{MR2436498,MR1768630} and references therein. Most
of these papers study one-dimensional Schr\"odinger operators with a single
bound state that is produced by a $\delta$-potential.
Ionization in a three-dimensional model with a $\delta$-potential is
studied in \cite{CDFM}.

\noindent
\emph{Acknowledgment:} M.G. thanks Vadim Kostrykin for pointing out
that it is advantageous to define ionization as the opposite of binding.   


\section{Notations and Hypotheses}
\label{ch2}
\setcounter{equation}{0}

{For} easy reference, we collect in this section the definitions, our
notations and all hypotheses. As usual, $L^2(\R^3 \times \{1,2\})$ denotes
the space of square integrable functions $f:\R^3 \times \{1,2\} \to \C$
with inner product
\[ \lkl f,g \rkl := \sum_{\lambda=1,2} \, \int\limits_{\R^3}
     \overline{f(\fk,\lambda)}g(\fk,\lambda) d^3k. \]
We recall from the introduction that $L^2_{\omega}(\R^3 \times \{1,2\})$ consists of those
functions $f \in L^2(\R^3 \times \{1,2\})$ for which the norm
$\|f\|_{\omega}$ defined in \eqref{omega-norm} is finite.
Regularity assumptions will be imposed on the vector-valued function
\begin{equation}\label{T-photon}
(\feps f)(\fk):=\sum_{\lambda=1}^2 \feps(\fk,\lambda) f(\fk,\lambda),
\end{equation}
rather than on on $f(\cdot,1)$ and $f(\cdot,2)$. It is useless to impose smoothness conditions on $f(\cdot,\lambda)$ because it is \eqref{T-photon} that matters and because the 
polarization vectors $\feps(\fk,1)$ and $\feps(\fk,2)$ are necessarily discontinuous. On the other hand, every square integrable function $f:\R^3\to\C^3$ with 
$\fk\cdot f(\fk)$, for a.e. $\fk\in\R^3$, can be approximated, in the $L^2$-sense, by smooth functions of the form \eqref{T-photon}.


It is convenient to collect a family $f_1,...,f_N \in L^2(\R^3 \times
\{1,2\})$ of photon wave functions in an $N$-tupel
$\uf=(f_1,...,f_N)$. We define
\begin{eqnarray*}
a(\uf)&:=& a(f_1) \cdots a(f_N) \\
a^*(\uf)&:=&a^*(f_1) \cdots a^*(f_N). 
\end{eqnarray*}
This should not lead to confusion with \eqref{in-state}, where $\uf$ also includes occupation numbers.
{For} the various parts of the interaction operator $W=H_{\alpha}-H_0$,
we use the notations
\begin{eqnarray*}
\Wdip&:=&2 \fp \cdot \fA(\fn), \\
W^{(1)}&:=&2 \fp \cdot \fA(\alpha \fx), \\
W^{(2)}&:=& \fA(\alpha \fx)^2. 
\end{eqnarray*}
It follows that
$$
  W=\alpha^{\frac{3}{2}} W^{(1)}+\alpha^3 W^{(2)}= 
  \alpha^{\frac{3}{2}}\Wdip + \cO(\alpha^{\frac{5}{2}})
$$
where the last equation is purely formal, but we shall give it a
rigorous meaning in this paper. The Hamiltonian
\[H_{\alpha}=H_0+W\]
is self-adjoint on the domain of $-\lap+H_f$ provided that $V$ is
infinitesimally operator bounded with respect to $-\lap$, 
\cite{HH,Hiroshima2002}.
This is the case, e.g., if $V$ is the sum of Coulomb potentials due to
static nuclei; all our results are valid for such $V$. Nonetheless,
it is useful to identify the properties of $V$ that are essential for
our analysis. From now on, we shall only assume the following
hypotheses on $V$:

\medskip
\noindent\textbf{Hypotheses:}
\emph{Both $V$ and $\fx\cdot\nabla V$ belong to $\in L^2_{loc}(\R^3)$, $\lim_{|\fx|\to\infty}V(\fx)=0$, and there
exist constants $\mu>0$ and $R>0$ such that for $|\beta|=1,2$ we
have
$$
     |\partial_{\fx}^{\beta}V(\fx)| \leq |\fx|^{-|\beta|-\mu},\qquad\text{if}\ 
|\fx|>R.
$$
Moreover, $E_0:=\inf\sigma(\hel)<0$. We define $e_1:=\inf(\sigma(\hel) \backslash \{E_0\})$.}

\medskip
{From} these Hypotheses it follows that
$\sigma_{ess}(\hel)=[0,\infty)$, that
$\sigma_{sc}(\hel)=\emptyset$ and that $E_0$ is a simple
eigenvalue. In fact, the decay assumptions on $V$
imply long-range asymptotic completeness \cite{DerezinskyGerard1997a}, which is what
we use to infer the existence of a complete set of generalized
eigenfunctions. All this remains true if a singular short-range potential is added to $\hel$.

The time evolution of an operator $B$ in the \emph{interaction picture} will be denoted by $B(t)$, that is, 
$$
        B(t) := e^{iH_0 t}Be^{-iH_0 t},
$$
and $B_t:=B(-t)$. Note that $\fp(t)=e^{i\hel t}\fp e^{-i\hel t}$, $\fA(\fn,t)=e^{iH_f t}\fA(\fn) e^{-iH_f t}$ and that $a^{\#}(\uf_t)=e^{-iH_0 t}a^{\#}(\uf)e^{iH_0 t}=a^{\#}(\uf)_t$.

\section{Commutator estimates and scattering states} 
\label{sec2}
\setcounter{equation}{0}

The main purpose of this section is to establish bounds on the
commutators $[W^{(j)},a^*(\uf_t)]$ applied to $\Phi_\alpha$ for $W^{(j)}\in\{W^{(1)}, W^{(2)},\Wdip\}$.
We are interested in the decay as
$|t|\to\infty$ and in the dependence on $\alpha$. Typically, our
estimates are valid for $\alpha\leq \tilde\alpha$, where
$\tilde\alpha$ is defined in Proposition~\ref{pA-4}.
As a simple application of our decay estimates in $t$, we will obtain existence
of the scattering states
$$
 a^*_{\pm} (\uf) \gs = \lim_{t \to \pm} e^{itH_{\alpha}} a^*(\uf_t) 
e^{-itH_{\alpha}}\gs,
$$
which was already established in \cite{GZ} in larger generality.
Here $\uf_t=(f_{1,t},...,f_{N,t})$ and $f_{j,t}:=e^{-it\omega} f_j$.
Given $l \in \{1,...,N\}$, we write
\begin{eqnarray*}
  a^*(\uf_{[l],t}) &:=& a^*(f_{1,t}) \cdots a^*(f_{l-1,t}) \fA(\alpha \fx)
a^*(f_{l+1,t}) \cdots a^*(f_{N,t}),\\
   a^*(\uf_{(l),t})&:=& a^*(f_{1,t}) \cdots a^*(f_{l-1,t})
a^*(f_{l+1,t}) \cdots a^*(f_{N,t}).
\end{eqnarray*}
For $\fx\in\R^3$, $\expect{\fx}:=(1+|\fx|^2)^{1/2}$.


\begin{lemma}\label{lm:stat-phase}
Suppose that $f\in L^2(\R^3\times\{1,2\})$ with 
$\feps f\in C_0^n(\R^3 \backslash \{\fn\},\C^3)$ for a given $n\in\N$. Then there exists 
is a constant $c_{1,n}=c_{1,n}(f)$ such that
\begin{eqnarray}
   |\lkl \fG_{\fn},f_t \rkl| &\leq & c_{1,n} \frac{1}{1+|t|^n},
\label{stat-phase1}\\
   |\lkl \fG_{\fx},f_t \rkl| &\leq &
   c_{1,n} \frac{1+(\alpha|x|)^n}{1+|t|^n}\qquad\text{for all}\ \fx\in\R^3,
\label{stat-phase2} \\
   |\lkl \fG_{\fx}- \fG_{\fn},f_t \rkl| &\leq & c_{1,n} \frac{\alpha |\fx|
\expect{\alpha \fx}^n}{1+|t|^n}\qquad\text{for all}\ \fx\in\R^3.
\label{stat-phase3}
\end{eqnarray}
\end{lemma}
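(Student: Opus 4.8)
The three estimates are all statements about the decay in $t$ of oscillatory integrals of the form $\lkl \fG_{\fx},f_t\rkl$, so the plan is to write each inner product explicitly and apply the non-stationary phase principle. Recalling \eqref{gl1-1} and that $f_t=e^{-i\omega t}f$, we have
\[
   \lkl \fG_{\fx},f_t\rkl = \sum_{\lambda=1,2}\int_{\R^3}
   \frac{\kappa(\fk)}{\sqrt{2|\fk|}}\,e^{i\alpha\fk\cdot\fx}\,
   e^{-i|\fk|t}\,\feps(\fk,\lambda)\cdot f(\fk,\lambda)\,d^3k
   = \int_{\R^3} g_{\fx}(\fk)\,e^{-i|\fk|t}\,d^3k,
\]
where $g_{\fx}(\fk):=\frac{\kappa(\fk)}{\sqrt{2|\fk|}}\,e^{i\alpha\fk\cdot\fx}\,(\feps f)(\fk)$. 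By hypothesis $(\feps f)\in C_0^n(\R^3\backslash\{\fn\},\C^3)$, so $g_{\fx}$ is $C^n$ with compact support in $\R^3\backslash\{\fn\}$, and on that support $|\fk|$ is smooth with non-vanishing gradient $\fk/|\fk|$. Hence one may integrate by parts $n$ times against the operator $L:=\frac{1}{-it}\,\frac{\fk}{|\fk|}\cdot\nabla_{\fk}$, which satisfies $L\,e^{-i|\fk|t}=e^{-i|\fk|t}$, with no boundary terms, to obtain $|\lkl\fG_{\fx},f_t\rkl|\le |t|^{-n}\int |(L^{t})^n g_{\fx}|\,d^3k$, where $L^t$ is the formal transpose. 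Combined with the trivial bound $|\lkl\fG_{\fx},f_t\rkl|\le \|g_{\fx}\|_1$ this gives a bound of the form $c\,(1+|t|)^{-n}$.

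For \eqref{stat-phase1} one takes $\fx=\fn$, so $g_{\fn}=\frac{\kappa}{\sqrt{2|\cdot|}}(\feps f)$ is $t$-independent and $\fx$-independent, and the constant $c_{1,n}$ depends only on finitely many derivatives of $\frac{\kappa}{\sqrt{2|\fk|}}(\feps f)$ on a fixed compact set — this is immediate. For \eqref{stat-phase2} the only new point is tracking the $\fx$-dependence: each of the $n$ applications of $L^t$ can either differentiate the factor $e^{i\alpha\fk\cdot\fx}$, producing a factor bounded by $\alpha|\fx|$ (since $\frac{\fk}{|\fk|}\cdot(i\alpha\fx)$ is multiplied in and $|\fk/|\fk||=1$ on the support), or differentiate the smooth $\fx$-independent part $\frac{\kappa}{\sqrt{2|\fk|}}(\feps f)$ and the cutoff weights; expanding the product by the Leibniz rule produces a sum of terms each bounded by a constant times $(\alpha|\fx|)^j$ for $0\le j\le n$, hence by $c_{1,n}(1+(\alpha|\fx|)^n)$ after absorbing the binomial coefficients and the $\sup$ of the smooth factors into $c_{1,n}$.

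The last estimate \eqref{stat-phase3} requires a little more care because one wants to extract an extra power of $\alpha|\fx|$ from the \emph{difference} $\fG_{\fx}-\fG_{\fn}$, which corresponds to the factor $e^{i\alpha\fk\cdot\fx}-1$ in the integrand. I would write $e^{i\alpha\fk\cdot\fx}-1 = \int_0^1 i\alpha\fk\cdot\fx\, e^{is\alpha\fk\cdot\fx}\,ds$ and insert this into the integral, so that $\lkl\fG_{\fx}-\fG_{\fn},f_t\rkl = \int_0^1 (i\alpha\fx)\cdot\big(\int \fk\,\tilde g_{s\fx}(\fk)e^{-i|\fk|t}\,d^3k\big)\,ds$ with $\tilde g_{s\fx}(\fk):=\frac{\kappa(\fk)}{\sqrt{2|\fk|}}e^{is\alpha\fk\cdot\fx}(\feps f)(\fk)$; the inner integral is again of the form treated above (with $\fk\,(\feps f)(\fk)$ playing the role of the vector amplitude, still in $C_0^n(\R^3\backslash\{\fn\})$), so it is bounded by $c\,(1+(s\alpha|\fx|)^n)(1+|t|)^{-n}\le c\,\expect{\alpha\fx}^n(1+|t|)^{-n}$ uniformly in $s\in[0,1]$. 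Pulling out the overall factor $\alpha|\fx|$ and integrating in $s$ then yields \eqref{stat-phase3}.

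The only real obstacle is bookkeeping: one must verify that all boundary terms in the repeated integration by parts vanish (they do, since $g_{\fx}$ is compactly supported away from $\fk=\fn$, where the sole singularity of $|\fk|$ and of $1/\sqrt{2|\fk|}$ sits, so $L^t$ preserves $C_0^{n-1}$-regularity and compact support), and that the constants can be chosen uniformly in $\fx$ and, in \eqref{stat-phase3}, in $s$. None of this is deep; the substance is the non-stationary phase estimate, and the $\alpha$-dependence simply comes from differentiating the phase $e^{i\alpha\fk\cdot\fx}$.
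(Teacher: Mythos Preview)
Your argument is correct, and the overall strategy—non-stationary phase via repeated integration by parts against $e^{-i|\fk|t}$—is the same as the paper's. The organization of \eqref{stat-phase2} differs slightly: the paper treats the full phase $\alpha\fk\cdot\fx - t|\fk|$ and invokes the non-stationary phase theorem (Reed--Simon~XI.14) on the region $\{2\alpha|\fx|\le |t|\}$, where the gradient of the phase is bounded below, and then uses the trivial sup bound together with $(2\alpha|\fx|/|t|)^n\ge 1$ on the complementary region; you instead keep $e^{i\alpha\fk\cdot\fx}$ in the amplitude and extract the factor $1+(\alpha|\fx|)^n$ directly from the Leibniz expansion of $(L^{t})^n g_{\fx}$. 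Your route is more elementary and makes the $\fx$-dependence explicit without a region split, while the paper's route packages the analysis into a black-box citation. For \eqref{stat-phase3} the two arguments are essentially identical: the paper writes $e^{i\alpha\fk\cdot\fx}-1 = i\alpha\fk\cdot\fx\, g(\alpha\fk\cdot\fx)$ with $g(s)=(e^{is}-1)/(is)$ and bounds derivatives of the resulting amplitude, which is exactly your integral representation $\int_0^1 e^{is\alpha\fk\cdot\fx}\,ds$ in a different guise.
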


\begin{proof}
Estimate \eqref{stat-phase1} follows from \eqref{stat-phase2}. We next
prove  \eqref{stat-phase2}. By a stationary phase analysis of 
\begin{equation} 
  \sprod{\fG_{\fx}}{f_{t}} =
  \int\limits_{\R^3} d^3k \frac{\kappa(\fk)}{\sqrt{2\omega(\fk)}}
  e^{i\alpha \fk \cdot \fx - it \omega(\fk)} (\feps f)(\fk) \label{gl3-5}
\end{equation}
we obtain  
$|\lkl \fG_{\fx},f_t \rkl | \leq C_n |t|^{-n}$ for
$\alpha |\fx| \leq |t|/2$, \cite{ReedSimon3}  Theorem XI.14. It
follows that
\begin{eqnarray*} 
|\lkl \fG_{\fx}, f_{t} \rkl | \one_{\{2\alpha |\fx| \leq |t|\}}
&\leq& \frac{C_n}{|t|^n}\\
|\lkl \fG_{\fx}, f_{t} \rkl | \one_{\{2\alpha |\fx| > |t|\}}
&\leq& C \left(\frac{2\alpha|\fx|}{|t|}\right)^n \nn
\end{eqnarray*}
where  $C:=\sup_{t \in \R,\,\fx \in \R^3} |\lkl \fG_{\fx},f_t \rkl| <
\infty$. This proves \eqref{stat-phase2}.
To prove \eqref{stat-phase3} we write
\[\lkl \fG_{\fx}-\fG_\fn,f_t \rkl = \int\limits_{\R^3}
e^{-it\omega(\fk)} \fF_{\fx}(\fk) d^3k \]
where
\[\fF_{\fx}(\fk)=i \alpha\fk \cdot \fx
\frac{\kappa(\fk)}{\sqrt{2\omega(\fk)}}
(\feps f)(\fk) g(\alpha \fk \cdot \fx) \]
and $g: \R \to \C$ denotes the real-analytic function given by
$g(s)=(e^{is}-1)/(is)$ for $s \not =0$. $g$ and all its derivatives
are bounded, and by assumption on $f$, $\fF_{\fx} \in
C_0^{\infty}(\R^3 \backslash \{\fn\}, \C^3)$ for each $\fx$. It follows
that
\[\sup_{\fx,\fk \in \R^3,\, \fx\not=\fn} 
\Big| \partial_{\fk}^{\beta} \fF_{\fx}(\fk) \Big| 
|\fx|^{-1} \lkl \alpha \fx \rkl^{-|\beta|} < \infty, \]
which implies \eqref{stat-phase3}, again by stationary phase arguments.
\end{proof}


\begin{lemma} \label{l3-1}
Suppose that $\feps f_1,..., \feps f_N \in
C_0^{n}(\R^3 \backslash \{\fn\},\C^3)$ for a given $n\in\N$, and
let $\tilde\alpha$ be defined by Proposition~\ref{pA-4}.
Then there exist constants $\tilde{\alpha}>0$ and
$c_{2,n} = c_{2,n}(\uf)$, such that for all
$\alpha\leq\tilde{\alpha}$, $t\in \R$, and $W^{(j)}\in\{W^{(1)},W^{(2)},\Wdip\}$,
\begin{equation} 
  \big\|\big[W^{(j)},a^*(\uf_t)\big] \gs\big\| \leq \frac{c_{2,n}}{1+|t|^n}. \label{gl3-5a} 
\end{equation}
\end{lemma}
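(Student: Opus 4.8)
The plan is to reduce each commutator $[W^{(j)},a^*(\uf_t)]\gs$ to an explicit finite sum built from the scalar multiplication operators $\sprod{\fG_{\fx}}{f_{l,t}}$ (or the constants $\sprod{\fG_{\fn}}{f_{l,t}}$), the operators $a^*(\uf_{(l),t})$, and either $\fp$ or $\fA(\alpha\fx)$, and then to estimate that sum term by term using the stationary-phase bounds of Lemma~\ref{lm:stat-phase} together with uniform weighted bounds on $\gs$ taken from Proposition~\ref{pA-4}.

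First I would record the commutator identities. Since $\fA(\alpha\fx)=a(\fG_{\fx})+a^*(\fG_{\fx})$, the creation part commutes with every $a^*(f_{j,t})$, so by the canonical commutation relations and the Leibniz rule,
\begin{equation*}
   [\fA(\alpha\fx),a^*(\uf_t)] \;=\; [a(\fG_{\fx}),a^*(\uf_t)] \;=\; \sum_{l=1}^{N}\sprod{\fG_{\fx}}{f_{l,t}}\,a^*(\uf_{(l),t}),
\end{equation*}
a $\C^3$-valued operator; the same holds with $\fG_{\fn}$ in the dipole case. Because $\fp$ commutes with every $a^*(f_{j,t})$ this yields
\begin{equation*}
   [W^{(1)},a^*(\uf_t)] = 2\sum_{l=1}^{N}\fp\cdot\big(\sprod{\fG_{\fx}}{f_{l,t}}\,a^*(\uf_{(l),t})\big), \qquad [\Wdip,a^*(\uf_t)] = 2\sum_{l=1}^{N}\sprod{\fG_{\fn}}{f_{l,t}}\cdot\fp\,a^*(\uf_{(l),t}),
\end{equation*}
and, from $[\fA^2,C]=\fA\cdot[\fA,C]+[\fA,C]\cdot\fA$,
\begin{equation*}
   [W^{(2)},a^*(\uf_t)] = \sum_{l=1}^{N}\Big(\fA(\alpha\fx)\cdot\sprod{\fG_{\fx}}{f_{l,t}}\,a^*(\uf_{(l),t}) + \sprod{\fG_{\fx}}{f_{l,t}}\cdot a^*(\uf_{(l),t})\,\fA(\alpha\fx)\Big).
\end{equation*}
The next observation is that the transversality $\fk\cdot\feps(\fk,\lambda)=0$ makes $\fx\mapsto\sprod{\fG_{\fx}}{f_{l,t}}$ divergence free, so $\fp\cdot\big(\sprod{\fG_{\fx}}{f_{l,t}}\psi\big)=\sprod{\fG_{\fx}}{f_{l,t}}\cdot\fp\psi$; since $\fp$ commutes with $a^*(\uf_{(l),t})$ I may move $\fp$ onto $\gs$ (legitimate because $\gs\in D(H_0)\subseteq D(|\fp|)$), so that in the $W^{(1)}$ and $\Wdip$ terms $\fp$ always acts directly on $\gs$.

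For the estimates I would bound the $\C^3$-valued coefficients by Lemma~\ref{lm:stat-phase}: $|\sprod{\fG_{\fn}}{f_{l,t}}|\le c_{1,n}(1+|t|^n)^{-1}$ via \eqref{stat-phase1}, and $|\sprod{\fG_{\fx}}{f_{l,t}}|\le c_{1,n}(1+(\alpha|\fx|)^n)(1+|t|^n)^{-1}$ via \eqref{stat-phase2}. The weight $1+(\alpha|\fx|)^n$ commutes with $a^*(\uf_{(l),t})$ and with $\fA(\alpha\fx)$, and for $\alpha\le\tilde\alpha$ it is bounded by $C_{n,\beta}\,e^{\beta|\fx|}$ for any $\beta>0$; after pushing it through onto $\gs$, and using the standard relative bounds $\|a^*(g_1)\cdots a^*(g_k)\Psi\|\le\big(\prod_i\|g_i\|\big)\,\|(N_f+k)^{k/2}\Psi\|$ and $\|\fA(\alpha\fx)_j\Psi\|\le 2\|\fG_{\fn}\|\,\|(N_f+1)^{1/2}\Psi\|$ (the norm $\|\fG_{\fx}\|=\|\fG_{\fn}\|$ being $\fx$-independent) together with $\|f_{i,t}\|=\|f_i\|$, each term is controlled by a fixed multiple of $(1+|t|^n)^{-1}$ times one of
\begin{equation*}
   \big\|(N_f+N)^{(N-1)/2}\fp\,\gs\big\|, \qquad \sup_{\alpha\le\tilde\alpha}\big\|e^{\beta|\fx|}(N_f+N)^{(N-1)/2}\fp\,\gs\big\|, \qquad \sup_{\alpha\le\tilde\alpha}\big\|e^{\beta|\fx|}(N_f+C)^{N/2}\,\gs\big\|,
\end{equation*}
the first for $\Wdip$, the second for $W^{(1)}$, the third for $W^{(2)}$. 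All of these are finite — this is exactly what Proposition~\ref{pA-4} provides (resting on the binding of $\gs$ and the exponential localization \eqref{exp}) — so summing the finitely many terms gives \eqref{gl3-5a} with $c_{2,n}(\uf)$ built from the $\|f_i\|$, the $c_{1,n}(f_i)$, $\|\fG_{\fn}\|$, and the $\gs$-bounds above.

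The only genuinely substantive step is the last one: the uniform-in-$\alpha$ finiteness of the weighted norms of $\gs$ and of $\fp\,\gs$. That is where the real work lies, and it is deliberately isolated in Proposition~\ref{pA-4}; within the present argument it is simply invoked. The one point of care inside this proof is that for $W^{(1)}$ and $W^{(2)}$ the coefficient $\sprod{\fG_{\fx}}{f_{l,t}}$ grows polynomially in $|\fx|$, so that \emph{spatial} localization of the ground state — not merely finiteness of its energy — is essential; for $\Wdip$ the coefficient is a constant and the estimate is completely elementary. One may alternatively split $\fG_{\fx}=\fG_{\fn}+(\fG_{\fx}-\fG_{\fn})$ and apply \eqref{stat-phase3} to the difference, which exhibits in addition the $\cO(\alpha)$ smallness of $W^{(1)}-\Wdip$ in this sense, but this refinement is not needed for \eqref{gl3-5a}.
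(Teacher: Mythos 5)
Your argument follows the paper's proof almost exactly: the same Leibniz expansion of $[W^{(j)},a^*(\uf_t)]\gs$ into terms $\sprod{\fG_{\fx}}{f_{l,t}}\cdot\fp\,a^*(\uf_{(l),t})\gs$ (resp.\ with $\fA(\alpha\fx)$, resp.\ with $\fG_{\fn}$), the same stationary-phase bounds \eqref{stat-phase1}--\eqref{stat-phase2} on the scalar coefficients, the same $N$-photon relative bounds, and the same reduction to uniform-in-$\alpha$ weighted norms of $\gs$ and $\fp\gs$. One step, however, is not justified as written: you dominate the weight $1+(\alpha|\fx|)^n$ by $C_{n,\beta}e^{\beta|\fx|}$ and then invoke Proposition~\ref{pA-4} for the finiteness of $\sup_{\alpha\le\tilde\alpha}\|e^{\beta|\fx|}(N_f+N)^{(N-1)/2}\fp\gs\|$; but Proposition~\ref{pA-4} (and Proposition~\ref{pa2} behind it) only supplies uniform \emph{polynomial} moments $\sup_{\alpha}\|\,|\fx|^n\gs\|<\infty$, and the exponential localization \eqref{exp} is not asserted uniformly in $\alpha$, so the exponential bound does not follow from what you cite. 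The detour is also unnecessary: for $\alpha\le 1$ one has $1+(\alpha|\fx|)^n\le 2\langle\fx\rangle^n$, and the quantities $\sup_{\alpha\le\tilde\alpha}\|(H_f+1)^{m}\langle\fx\rangle^{n}\fp\gs\|$ and $\sup_{\alpha\le\tilde\alpha}\|(H_f+1)^{m}\langle\fx\rangle^{n}\gs\|$ are exactly what Lemma~\ref{la1} (built on Proposition~\ref{pA-4}(b)) provides; replacing the exponential weight by this polynomial one closes the argument and reproduces the paper's proof.
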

\begin{proof} 
By definition of $a^*(\uf_t)$,
\begin{equation}
[W^{(j)},a^*(\uf_t)] \gs=
\sum_{l=1}^N a^*(f_{1,t}) \cdots a^*(f_{l-1,t}) \big[W^{(j)},a^*(f_{l,t})\big]
a^*(f_{l+1,t}) \cdots a^*(f_{N,t})\gs \label{gl1}
\end{equation}
and by definition of $W^{(1)}$ and $W^{(2)}$
\begin{eqnarray}
\big[W^{(1)},a^*(f_{l,t})\big]&=& 
2 \lkl \fG_{\fx},f_{l,t} \rkl \cdot \fp \label{gl2}\\
\big[W^{(2)},a^*(f_{l,t})\big]&=& 
2 \lkl \fG_{\fx},f_{l,t} \rkl \cdot \fA(\alpha \fx) \label{gl3}
\end{eqnarray}
{From} \eqref{stat-phase2}, \eqref{gl1}, \eqref{gl2}, \eqref{gl3} and 
Lemma \ref{la2} it follows that
\begin{eqnarray} 
\|[W^{(1)},a^*(\uf_t)] \gs \| &\leq& \frac{Nc_n}{1+|t|^n}
\|(H_f+1)^{\frac{N-1}{2}} \lkl \alpha \fx \rkl^n \fp \gs \| \label{gl4}\\
\|[W^{(2)},a^*(\uf_t) ] \gs \| &\leq& \frac{Nc_n}{1+|t|^n}
\|(H_f+1)^{\frac{N}{2}} \lkl \alpha \fx \rkl^n \gs \|
\label{gl3-6} 
\end{eqnarray}
with some constant $c_n$. Thanks to Lemma \ref{la1}, 
these upper bounds are bounded uniformly in $\alpha \leq
\tilde{\alpha}$, $\tilde{\alpha}$ being defined by Proposition \ref{pA-4}.
This proves \eqref{gl3-5a} for $j=1,2$. 
The assertion for $\Wdip$ now follows from $\Wdip=W^{(1)}|_{\fx=\fn}$,
which leads to a bound for
$\|[\Wdip,a^*(\uf_t)]\gs\|$ of the form \eqref{gl4} with $\fx=\fn$.
\end{proof}


\begin{proposition} \label{c3-4n}
For all $\feps f_1,...,\feps f_N \in C_0^{2}(\R^3 \backslash \{\fn\}, \C^3)$
there exists a constant $c_3=c_3(\uf)$, such that for all
$\alpha\leq\tilde{\alpha}$ and for all $s \in \R$,
\begin{equation} \Big\| \Big[ W^{(1)}-\Wdip,a^*(\uf_s) \Big] \gs \Big\|
\leq \frac{c_3\alpha}{1+s^2} . 
\end{equation}
\end{proposition}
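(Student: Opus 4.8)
The plan is to follow the proof of Lemma~\ref{l3-1} almost verbatim, but to invoke the sharper estimate \eqref{stat-phase3} of Lemma~\ref{lm:stat-phase} in place of \eqref{stat-phase2}; the extra factor $\alpha$ in \eqref{stat-phase3} is precisely what produces the $\alpha$ in the claimed bound.

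First I would observe that, since $\fk\cdot\feps(\fk,\lambda)=0$, both vector potentials $\fA(\alpha\fx)$ and $\fA(\fn)$ have vanishing divergence in $\fx$, so that $\fp\cdot\fA(\alpha\fx)=\fA(\alpha\fx)\cdot\fp$ and likewise for $\fA(\fn)$. Hence $W^{(1)}-\Wdip=2\big(\fA(\alpha\fx)-\fA(\fn)\big)\cdot\fp$, and, exactly as in \eqref{gl2},
\[
   \big[W^{(1)}-\Wdip,\,a^{*}(f_{l,s})\big]=2\,\sprod{\fG_{\fx}-\fG_{\fn}}{f_{l,s}}\cdot\fp .
\]
Expanding $\big[W^{(1)}-\Wdip,\,a^{*}(\uf_{s})\big]\gs$ by the Leibniz rule as in \eqref{gl1}, and using that the multiplication operator $\sprod{\fG_{\fx}-\fG_{\fn}}{f_{l,s}}$ acts on the electron factor while the remaining creation operators act on the Fock factor, I obtain
\[
   \big[W^{(1)}-\Wdip,\,a^{*}(\uf_{s})\big]\gs=\sum_{l=1}^{N}a^{*}(\uf_{(l),s})\;2\,\sprod{\fG_{\fx}-\fG_{\fn}}{f_{l,s}}\cdot\fp\,\gs .
\]

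Next I would estimate the $l$-th summand. Using Lemma~\ref{la2} to move the $N-1$ creation operators $a^{*}(\uf_{(l),s})$ onto the remaining factor — which costs $\prod_{j\neq l}\|f_{j}\|$ and a power $(H_f+1)^{(N-1)/2}$ that commutes past $\fp$ and past the multiplication operator — and then applying \eqref{stat-phase3} with $n=2$, which bounds $\big|\sprod{\fG_{\fx}-\fG_{\fn}}{f_{l,s}}\big|$ pointwise by $c_{1,2}\,\alpha\,|\fx|\expect{\alpha\fx}^{2}(1+s^{2})^{-1}\le c'\,\alpha\,\expect{\fx}^{3}(1+s^{2})^{-1}$ for $\alpha\le\tilde{\alpha}$, I get
\[
   \big\|\big[W^{(1)}-\Wdip,\,a^{*}(\uf_{s})\big]\gs\big\|\le\frac{C\alpha}{1+s^{2}}\sum_{l=1}^{N}\big\|\expect{\fx}^{3}\,\fp\,(H_f+1)^{(N-1)/2}\gs\big\|,
\]
with $C=C(\uf)$. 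It remains to know that $\big\|\expect{\fx}^{3}\fp(H_f+1)^{(N-1)/2}\gs\big\|$ is finite and bounded uniformly for $\alpha\le\tilde{\alpha}$; this is precisely the kind of ground-state bound supplied by Lemma~\ref{la1}, which rests on the uniform exponential localization \eqref{exp} of $\gs$ — valid because $\Sigma_{\alpha}-E_{\alpha}$ stays bounded away from $0$ for small $\alpha$ — together with the regularity of $\gs$ as an eigenvector of $H_{\alpha}$. This yields the assertion with $c_{3}=c_{3}(\uf)$. I expect the only point requiring any care to be this last uniformity statement, and it is already contained in Lemma~\ref{la1}; apart from it, the proof is a one-line variant of the proof of Lemma~\ref{l3-1} with \eqref{stat-phase3} replacing \eqref{stat-phase2}.
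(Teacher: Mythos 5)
Your proposal is correct and follows essentially the same route as the paper: expand the commutator by the Leibniz rule, identify $[W^{(1)}-\Wdip,a^{*}(f_{l,s})]=2\sprod{\fG_{\fx}-\fG_{\fn}}{f_{l,s}}\cdot\fp$, control the scalar factor by \eqref{stat-phase3}, and absorb the remaining creation operators and $\fx$-weights into the uniform ground-state bounds of Lemma~\ref{la2} and Lemma~\ref{la1} (in particular \eqref{glAA-28}). The only cosmetic difference is that the paper separates the $\fx$-weights from $a^{*}(\uf_{(l),s})$ by a Cauchy--Schwarz step, whereas you commute the creation operators through directly; both land on the same uniform estimates.
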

\begin{proof}
By \eqref{gl1} for $j=1$, \eqref{gl2}, and the corresponding equations
for $\Wdip$
\[\big[W^{(1)}-\Wdip,a^*(\uf_{s})\big] \gs=
2 \sum_{l=1}^N \lkl \fG_{\fx}-\fG_\fn,f_{l,s} \rkl \cdot \fp
a^*(\uf_{(l),s}) \gs\]
where
\begin{eqnarray*}
\|\lkl \fG_{\fx}-\fG_\fn,f_{l,s} \rkl \cdot \fp a^*(\uf_{(l),s}) \gs \|
&\leq& \alpha \frac{c}{1+s^2} \big\| |\fx| \lkl \alpha \fx \rkl^2 
a^*(\uf_{(l),s}) \fp \gs \big\| \\
&\leq&
\alpha \frac{c}{1+s^2} \big\| |\fx|^2 \lkl \alpha \fx \rkl^4 \fp \gs
\big\|^{1/2} \|a(\uf_{(l),s}) a^*(\uf_{(l),s}) \fp \gs \big\|^{1/2}
\end{eqnarray*}
by \eqref{stat-phase3} and the Cauchy-Schwarz inequality. The norms in
the last expression are bounded uniformly in $\alpha \leq
\tilde{\alpha}$ by Lemma \ref{la2} and Lemma \ref{la1}.
\end{proof}


\begin{lemma} \label{l3-4}
For all $\feps f_1,...,\feps f_N \in C_0^{2}(\R^3 \backslash \{\fn\}, \C^3)$,
there exists a constant $c_4=c_4(\uf) < \infty$, such that for all
$\alpha \leq\tilde{\alpha}$ and $s,t \in \R$
\begin{eqnarray}
\big\|\big[\Wdip_s,a^*(\uf_t)\big](\gs-\go)\big\| &\leq& \frac{c_4 \alpha^{\frac{3}{2}}}
{1+|t-s|^2} \label{gl3-23}\\
\big\|\big[\Wdip_s,a^*(\uf_t)\big]\go\big\| &\leq& \frac{c_4}
{1+|t-s|^2}.  \label{gl3-27b} \\
\big\|\big[W,\big[\Wdip_s,a^*(\uf_t)\big]\big] \gs \| &\leq &
\frac{c_4 \alpha^{\frac{3}{2} }} {1+|t-s|^2} \label{gl3-22}
\end{eqnarray}
\end{lemma}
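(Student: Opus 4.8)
The plan is to reduce everything to the three quantities already under control: the stationary-phase decay of $\lkl\fG_\fx,f_{l,t-s}\rkl$ from Lemma \ref{lm:stat-phase}, the a priori bounds on $\lkl\alpha\fx\rkl^n$ and $(H_f+1)^{N/2}$ applied to $\gs$ and to $\fp\gs$ from Lemmas \ref{la1}, \ref{la2}, and the exponential localization \eqref{exp} of $\gs$ together with the defining property that $\go=\ph_{el}\otimes\Omega$ is annihilated by every $a(h)$. The recurring mechanism is that a single commutator $[\Wdip_s,a^*(f_{l,t})]$ produces a factor $2\lkl\fG_\fn,f_{l,t-s}\rkl\cdot\fp(-s)$, and $\Wdip_s = 2\fp(-s)\cdot\fA(\fn,-s)$ is linear in creation/annihilation operators; I would first record, as in \eqref{gl1}–\eqref{gl2}, the algebraic identity
\[
  \big[\Wdip_s,a^*(\uf_t)\big]\gs = 2\sum_{l=1}^N \lkl\fG_\fn,f_{l,t-s}\rkl\cdot\fp(-s)\,a^*(\uf_{(l),t})\gs,
\]
so the $(1+|t-s|^2)^{-1}$ decay comes directly from \eqref{stat-phase1} applied with the shifted time $t-s$ (using $\omega$-invariance of the inner product under the common phase $e^{-i\omega s}$), and the $s$-dependence is harmless because $\|\fp(-s)\,\cdots\,\go\|$ and its variants are $s$-independent by unitarity of $e^{-iH_0 s}$.

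For \eqref{gl3-27b}, I would move all creation operators in $a^*(\uf_{(l),t})$ and the $a^*$-part of $\fA(\fn,-s)$ to the left and all annihilation operators to the right; acting on $\go=\ph_{el}\otimes\Omega$ the annihilation operators either vanish or contract, leaving an explicit finite sum of Wick-ordered terms whose norm is bounded by a constant times $\prod\|f_j\|\cdot\|\fG_\fn\|$ times $\|\fp\ph_{el}\|$, all finite and $s$-independent. For \eqref{gl3-23} I would write $\gs-\go = (\gs-\go)$ and exploit that, to leading order, $\gs = \go + \cO(\alpha^{3/2})$ in the relevant graph norm — more precisely I would use the bound $\|(H_f+1)^{k}\lkl\fx\rkl^m(\gs-\go)\|=\cO(\alpha^{3/2})$, which is exactly the content of (the $\fp$-weighted versions of) Lemmas \ref{la1}–\ref{la2} combined with the resolvent expansion of $\gs$ around $\go$ in powers of $W=\cO(\alpha^{3/2})$; feeding this into the same Wick-ordering estimate as above, but now with $\gs$ replaced by $\gs-\go$, produces the extra factor $\alpha^{3/2}$. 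The decay in $t-s$ is again supplied by \eqref{stat-phase1}, and Cauchy–Schwarz (as in the proof of Proposition \ref{c3-4n}) is used to split any product of an unbounded weight against a photon-number weight into two factors each controlled by Lemmas \ref{la1}, \ref{la2}.

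For the double commutator \eqref{gl3-22} I would first compute $[\Wdip_s,a^*(\uf_t)]\gs$ as above, obtaining a vector of the form $\sum_l \lkl\fG_\fn,f_{l,t-s}\rkl\cdot\fp(-s)a^*(\uf_{(l),t})\gs$, and then commute $W = \alpha^{3/2}W^{(1)}+\alpha^3 W^{(2)}$ through it; since $W$ itself carries a prefactor $\alpha^{3/2}$ and $[W^{(j)},a^*(f_{l,t})] = 2\lkl\fG_\fx,f_{l,t}\rkl\cdot(\fp\text{ or }\fA(\alpha\fx))$ is again controlled in norm on $\gs$ (with $\lkl\alpha\fx\rkl^n$-weights) by Lemma \ref{l3-1}'s proof, the outcome is $\cO(\alpha^{3/2})$ with the same $(1+|t-s|^2)^{-1}$ decay inherited from \eqref{stat-phase1}; the only subtlety is that $W^{(j)}$ does not commute with $\fp(-s)$, so I would also need the commutators $[\fp,\fA(\alpha\fx)]$ and $[\fA(\alpha\fx),\fp(-s)]$, but these are bounded (indeed the first is essentially $\nabla\cdot\fG_\fx$-type and the second is $\cO(\alpha)$ by the gradient bound in \eqref{stat-phase3}), so they only improve the power of $\alpha$. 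I expect the main obstacle to be purely bookkeeping: organizing the numerous Wick contractions and verifying that every weight $\lkl\alpha\fx\rkl^n(H_f+1)^{k}\fp\gs$ that appears is indeed covered by the a priori estimates of Lemmas \ref{la1} and \ref{la2}, uniformly for $\alpha\le\tilde\alpha$; there is no analytic difficulty beyond what Lemma \ref{lm:stat-phase} and those a priori bounds already provide.
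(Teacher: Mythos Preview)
Your approach to \eqref{gl3-23} and \eqref{gl3-27b} is essentially that of the paper: expand the commutator as
\[
  \big[\Wdip_s,a^*(\uf_t)\big]
  = 2\sum_{l=1}^N a^*(\uf_{(l),t})\,\lkl\fG_\fn,f_{l,t-s}\rkl\cdot\fp_s,
\]
take the $(1+|t-s|^2)^{-1}$ decay from \eqref{stat-phase1}, and absorb the remaining operators into the a~priori bounds. Two minor slips: after the commutator there is no field operator left, so the phrase ``the $a^*$-part of $\fA(\fn,-s)$'' has nothing to refer to; and the $\alpha^{3/2}$ in \eqref{gl3-23} comes precisely from Lemma~\ref{la6} (the bound $\|(\hel+i)(H_f+1)^m(\gs-\go)\|\le K_m\alpha^{3/2}$), which you only gesture at via ``resolvent expansion''. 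The reference to exponential localization \eqref{exp} is irrelevant here.

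For \eqref{gl3-22} there is a real gap. You propose to expand the outer commutator $[W^{(j)},a^*(\uf_{(l),t})\fp_s]$ by Leibniz and then control $[W^{(j)},\fp_s]$ (equivalently $[\fA(\alpha\fx),\fp_s]$), claiming this is ``bounded'' and even $\cO(\alpha)$ via \eqref{stat-phase3}. But $\fp_s=e^{-i\hel s}\fp e^{i\hel s}$ is not $\fp$: writing $\fp_s=\fp-\int_0^{-s}e^{-i\hel r}(\nabla V)e^{i\hel r}\,dr$ shows that $[e^{-i\alpha\fk\cdot\fx},\fp_s]$ picks up an integral over $[0,-s]$ of commutators with the time-evolved force, which need not be small or even bounded uniformly in $s$. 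The gradient bound \eqref{stat-phase3} concerns $\fG_\fx-\fG_\fn$ and says nothing about commutators with $\fp_s$.

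The paper sidesteps this entirely: it does \emph{not} expand the outer commutator. Since the scalar $\lkl\fG_\fn,f_{l,t-s}\rkl$ already supplies the $(1+|t-s|^2)^{-1}$, it suffices to show that
\[
  \big\|W^{(j)}a^*(\uf_{(l),t})\fp_s\gs\big\|\quad\text{and}\quad
  \big\|a^*(\uf_{(l),t})\fp_s W^{(j)}\gs\big\|
\]
are bounded uniformly in $s,t$ and $\alpha\le\tilde\alpha$. The point is that $\fp_s$ is unitarily conjugate to $\fp$, so its unboundedness is controlled by $\hel$-regularity rather than by any commutator identity; e.g.\ for $a^*(\uf_{(l),t})\fp_s W^{(1)}\gs$ one inserts $(H_f+1)^{-m}(\hel+i)^{-1}(\hel+i)(H_f+1)^m$ and reduces to $\|(\hel+i)(H_f+1)^m A_j(\alpha\fx)\gs\|$, which is handled by commuting $A_j(\alpha\fx)$ past $(\hel+i)(H_f+1)^m$ and invoking \eqref{glAA-27} and Lemma~\ref{la2}. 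You should replace your Leibniz expansion by this direct estimate.
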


\begin{proof}
Since $\big[\Wdip_s,a^*(f_{l,t}) \big]=
2 \lkl \fG_\fn,f_{l,t-s} \rkl \cdot \fp_s$, which 
commutes with the creation operators $a^*(f_{i,t})$, 
\begin{eqnarray}
\big[\Wdip_s,a^*(\uf_t)\big]
&=& \sum_{l=1}^N a^*(f_{1,t})\cdots a^*(f_{l-1,t})
\big[\Wdip_s,a^*(f_{l,t}) \big] a^*(f_{l+1,t}) \cdots a^*(f_{N,t})\nn\\
&=& 2 \sum_{l=1}^N a^{*}(\uf_{(l),t}) \lkl \fG_\fn,f_{l,t-s} \rkl\cdot \fp_s\label{gl3-24}
\end{eqnarray}
where $|\lkl \fG_\fn, f_{l,t-s} \rkl| \leq c_l(1+(t-s)^2)^{-1}$ by
\eqref{stat-phase1}. In view of 
Lemma \ref{la2} and Lemma \ref{la6}, this proves \eqref{gl3-23}. 
The proof of \eqref{gl3-27b} is similar.

{From} \eqref{gl3-24} we obtain, that
\[\Big[W,\Big[\Wdip_s,a^*(\uf_t)\Big]\Big] \gs=
2\alpha^{\frac{3}{2}} \sum_{l=1}^N \lkl \fG_\fn,f_{l,t-s} \rkl \cdot
\Big[W^{(1)}+\alpha^{\frac{3}{2}} W^{(2)},a^*(\uf_{(l),t}) \fp_s \Big] \gs. \]
Hence, by \eqref{stat-phase1}, it suffices to show that
$\big\|W^{(j)}a^*(\uf_{(l),t}) \fp_s \gs\big\|$ and
$\big\|a^*(\uf_{(l),t}) \fp_s W^{(j)} \gs\big\|$ 
are bounded uniformly in $t,s$ and $\alpha \leq \tilde{\alpha}$. 
We shall do this for $a^*(\uf_{(l),t}) \fp_s W^{(1)}\gs$ only, the
proofs in the other cases being similar. Let $m\geq
(N-1)/2$. Then
\begin{eqnarray*}
   \lefteqn{\big\|a^*(\uf_{(l),t}) \fp_s W^{(1)}\gs\big\|} \\
  &\leq&
  \sum_{j=1}^3\big\|a^*(\uf_{(l),t})(H_f+1)^{-m}\fp_sp_j(\hel+i)^{-1}(\hel+i)(H_f+1)^mA_j(\alpha\fx)\gs\big\|\\
  &\leq & C\sum_{j=1}^3 \|(\hel+i)(H_f+1)^mA_j(\alpha\fx)\gs\|
\end{eqnarray*}
with a constant $C$, that is finite by Lemma~\ref{la2}. We now want to compare
$\|(\hel+i)(H_f+1)^mA_j(\alpha\fx)\gs\|$ with $\|
A_j(\alpha\fx)(\hel+i)(H_f+1)^m\gs\|$, because the latter norm is
bounded uniformly in $\alpha\leq\tilde\alpha$, by Lemma~\ref{la2} and
by \eqref{glAA-27}. Thus we compute the commutator of
$(\hel+i)(H_f+1)^m$ and $A_j(\alpha\fx)=a^{*}(G_{\fx,j})+a(G_{\fx,j})$ applied to $\gs$. Using
\begin{eqnarray*}
  \big[\hel,a^{*}(G_{\fx,j})\big] &=& \alpha^2 a^{*}(\omega^2
  G_{\fx,j}) -\alpha\sum_{m=1}^3 2 a^{*}(k_m G_{\fx,j})p_m \\
  \big[(H_f+1)^m,a^{*}(G_{\fx,j})\big] &=& {\ds\sum_{l=1}^m
  \binom{m}{l}a^{*}(\omega^{l}G_{\fx,j})(H_f+1)^{m-l}}
\end{eqnarray*}
and similar commutator equations for $a(G_{\fx,j})$, we see that all resulting terms have norms that are bounded, uniformly in
$\alpha\leq\tilde\alpha$, thanks to \eqref{glAA-27} and Lemma~\ref{la2}. 
\end{proof}

For completeness of this paper we now use Lemma \ref{l3-1} to prove
existence of the asymptotic creation and annihilation operators on $\gs$.
More general results can be found in \cite{FGS1,GZ}.


\begin{proposition} \label{p3-5}
Suppose $\uf=(f_1,...,f_N) \in [L^2_{\omega}(\R^3 \times \{1,2\})]^N$. Then,
for all $\alpha \leq\tilde{\alpha}$,
\begin{equation} 
  a^*_{\pm}(\uf) \gs := \lim_{t \to\pm\infty}
  e^{iH_{\alpha}t} a^*(\uf_t)e^{-iH_{\alpha}t} \gs
\end{equation}
exists, and 
\begin{equation} 
   \| a^*_{\pm}(\uf) \gs \| \leq c_5 \|f_1\|_{\omega} \cdots \|f_N\|_{\omega},\label{gl3-9} 
\end{equation}
with a constant $c_5$ that is independent of $\alpha$ and $\uf$. If $\feps f_l \in
C^{n+1}_0(\R^3 \backslash \{\fn\}, \C^3)$ for $l=1,...,N$, then there
exists a constant $c_n(\uf)$, such that
\begin{equation}
\big\|a^*_{\pm} (\uf) \gs - e^{iH_{\alpha}t} a^*(\uf_t)
e^{-iH_{\alpha}t} \gs \big\| \leq \alpha^{3/2} \frac{c_n(\uf)}{1+|t|^n} 
\label{gl3-21n}
\end{equation}
\end{proposition}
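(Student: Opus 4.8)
The plan is to prove existence of the limit via a Cook-type argument, differentiating the approximating family in $t$ and showing that the derivative is integrable in $t$, uniformly in $\alpha\le\tilde\alpha$. Write $\Psi(t):=e^{iH_\alpha t}a^*(\uf_t)e^{-iH_\alpha t}\gs$. Since $\gs\in D(H_0)\subseteq D(H_\alpha)$ and, by Lemma~\ref{la2}-type bounds, $a^*(\uf_t)$ maps $D(H_0)$ into itself with the relevant norms controlled, $\Psi(t)$ is strongly differentiable with
\begin{equation*}
  \frac{d}{dt}\Psi(t) = i\,e^{iH_\alpha t}\big[H_\alpha,a^*(\uf_t)\big]e^{-iH_\alpha t}\gs - i\,e^{iH_\alpha t}a^*(\dot{\uf}_t)e^{-iH_\alpha t}\gs,
\end{equation*}
but the $a^*(\dot{\uf}_t)$ term combines with the $[H_f,\cdot]$ part of $[H_\alpha,a^*(\uf_t)]$: because $[H_f,a^*(f_{l,t})]=a^*(\omega f_{l,t})=i\,a^*(\dot f_{l,t})$, these cancel, leaving only the commutator with $\hel+W$ acting through the interaction. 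More precisely, $[H_\alpha,a^*(\uf_t)]=[H_f+\hel+W,a^*(\uf_t)]$, and after the cancellation one is left with $[\hel,a^*(\uf_t)]=0$ (the $f_{l}$ depend only on photon variables) plus $[W,a^*(\uf_t)]=\alpha^{3/2}[W^{(1)},a^*(\uf_t)]+\alpha^3[W^{(2)},a^*(\uf_t)]$. Hence
\begin{equation*}
  \frac{d}{dt}\Psi(t) = i\,e^{iH_\alpha t}\big(\alpha^{3/2}[W^{(1)},a^*(\uf_t)]+\alpha^3[W^{(2)},a^*(\uf_t)]\big)e^{-iH_\alpha t}\gs.
\end{equation*}

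By Lemma~\ref{l3-1}, applied with $n$ (and using $\feps f_l\in C_0^{n+1}\subseteq C_0^n$), each term on the right has norm bounded by $c_{2,n}(1+|t|^n)^{-1}$, uniformly in $\alpha\le\tilde\alpha$; since $n\ge 1$ is at our disposal and the estimate \eqref{gl3-21n} we are proving only needs a fixed $n$, we may just as well take $n\ge 2$ here so that $t\mapsto (1+|t|^n)^{-1}$ is integrable on $\R$. Therefore $\int_T^\infty \|\frac{d}{dt}\Psi(t)\|\,dt \le \alpha^{3/2}\, c_n(\uf)\,(1+|T|)^{-n+1}$, which both shows that $\lim_{t\to+\infty}\Psi(t)=:a^*_+(\uf)\gs$ exists (Cauchy criterion) and, letting $T\to\infty$ after integrating from $t$ to $\infty$, gives exactly the bound \eqref{gl3-21n}, after relabelling $n-1\rightsquigarrow n$. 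The case $t\to-\infty$ is identical. This also establishes the first claim (existence) under the mere hypothesis $\uf\in[L^2_\omega]^N$ once we note that such $\uf$ can be approximated in $\|\cdot\|_\omega$ by tuples with $\feps f_l\in C_0^{n+1}$; the $\alpha$-uniform bound \eqref{gl3-9} then follows by combining the uniform convergence of the approximants with the standard estimate $\|a^*(\uf_t)e^{-iH_\alpha t}\gs\|\le c\,\|f_1\|_\omega\cdots\|f_N\|_\omega$, which comes from the $N_f^{1/2}$-boundedness of creation operators together with uniform-in-$\alpha$ control of $\|(N_f+1)^{N/2}\gs\|$ (Lemma~\ref{la1}).

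The one point requiring care—and the main obstacle—is the justification of the differentiation of $\Psi(t)$ and the cancellation of the $a^*(\dot\uf_t)$ term: one must check that $a^*(\uf_t)e^{-iH_\alpha t}\gs$ stays in $D(H_\alpha)$ with a $t$-continuous, locally bounded $H_\alpha$-norm so that the product rule for $\frac{d}{dt}(e^{iH_\alpha t}\cdot)$ is legitimate, and that $[H_f,a^*(\uf_t)]e^{-iH_\alpha t}\gs$ is meaningful as a strong limit matching $i\,a^*(\dot\uf_t)e^{-iH_\alpha t}\gs$. Both follow from the commutator identities for $a^*(f_{l,t})$ with $H_f$, $\hel$ and $\fA(\alpha\fx)$ already displayed in the proof of Lemma~\ref{l3-4}, combined with the $H_f$- and $\hel$-moment bounds on $\gs$ from Lemma~\ref{la1}, so no new estimate is needed beyond bookkeeping. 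Once this is in place, the argument is a clean Cook's method and the explicit rate \eqref{gl3-21n} drops out of the integrated derivative bound with the $\alpha^{3/2}$ coming from the prefactor of the interaction $W$.
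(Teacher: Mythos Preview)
Your approach is the same as the paper's: Cook's method via $\frac{d}{dt}\Psi(t)=ie^{i(H_\alpha-E_\alpha)t}[W,a^*(\uf_t)]\gs$, Lemma~\ref{l3-1} for the integrable decay, then a density argument for general $L^2_\omega$ and Lemmas~\ref{la2},~\ref{la1} for the uniform bound~\eqref{gl3-9}. Two small bookkeeping corrections are needed. First, apply Lemma~\ref{l3-1} with exponent $n+1$ (not $n$), using the full $C_0^{n+1}$ regularity: then the derivative decays like $(1+|t|^{n+1})^{-1}$ and the integrated tail gives $(1+|t|^{n})^{-1}$ directly; your downgrading to $C_0^n$ followed by ``relabelling $n-1\rightsquigarrow n$'' is inconsistent with the fixed $n$ in the statement and breaks down for $n=1$, where $(1+|t|)^{-1}$ is not integrable. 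Second, the bound~\eqref{gl3-9} should be phrased in terms of $(H_f+1)^{N/2}$ rather than $(N_f+1)^{N/2}$, since that is what Lemmas~\ref{la2} and~\ref{la1} actually control.
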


\begin{proof}
Suppose first, that
$\feps f_1,...,\feps f_N \in C_0^{n+1}(\R^3 \backslash \{\fn\}, \C^3)$.
Then 
\[\frac{d}{dt}\Big( e^{itH_{\alpha}} a^*(\uf_t) e^{-itH_{\alpha}} \gs\Big)=
ie^{i(H_{\alpha}-E_{\alpha})t} [W,a^*(\uf_t)] \gs ,\]
and, by Lemma \ref{l3-1},
\[
\pm \int\limits_t^{\pm \infty} \big\| [W,a^*(\uf_s)] \gs\big\| ds \leq
\alpha^{3/2} \frac{c_n(\uf)}{1+|t|^n}.
\]
This estimate first proves existence of $a^*_{\pm}(\uf)$, 
by Cook's argument, and then it implies \eqref{gl3-21n}.
The existence of $a^*_{\pm}(\uf) \gs$ in the case where
$f_j \in L^2_{\omega}(\R^3 \times \{1,2\})$ now follows from
the approximation argument given in \cite{GZ}, Proposition 2.1.
By the Lemmas \ref{la2} and \ref{la1}
\begin{eqnarray*}
\|e^{itH_{\alpha}} a^*(\uf_t) e^{-itH_{\alpha}} \gs \| &\leq&
\|a^*(\uf_t) (H_f+1)^{-\frac{N}{2}} \| \|(H_f+1)^{\frac{N}{2}} \gs \| \\
&\leq& c_5 \|f_1\|_{\omega} \cdots \|f_N\|_{\omega}, \end{eqnarray*}
uniformly in $t \in \R$ and $\alpha \in [0,\tilde{\alpha}]$.
Letting $t \to \pm \infty$ in this estimate, we obtain (\ref{gl3-9}).
\end{proof}

\section{Proofs of the main theorems}

\subsection{A reduction formula} \label{sec4}

In this section we first prove Theorem~\ref{thm4-1} below, which is a
generalization of Theorem~\ref{thm1}, the latter corresponding to the
choice $\tau=0$. The generalization to arbitrary $\tau\in\R$ will be needed in Section~\ref{s4-3}. 

\setcounter{equation}{0}
\begin{theorem} \label{thm4-1}
Let $\feps f_1,...,\feps f_N \in C_0^{2} (\R^3 \backslash \{\fn\},\C^3)$. 
Then
\begin{eqnarray*}
    \lefteqn{a^*_{+}(\uf_{\tau})\gs-a^*_{-}(\uf_{\tau})\gs}\\
 &=& i\alpha^{\frac{3}{2}} \int\limits_{-\infty}^{\infty}
e^{-i(H_0-E_0)\tau}2\fp(s)\ph_{el}\otimes[\fA(\fn,s), a^*(\uf)] \Omega\, ds +\cR(\tau,\alpha)
\end{eqnarray*}
where
$\|\cR(\tau,\alpha)\| = \cO(\alpha^{5/2})+\cO(\alpha^3|\tau|)$ as $\alpha\to 0$.
\end{theorem}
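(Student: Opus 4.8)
\emph{Plan.} The plan is to turn $a^*_+(\uf_\tau)\gs-a^*_-(\uf_\tau)\gs$ into an absolutely convergent time integral of a commutator via Cook's argument, and then to perform three successive replacements, $W\rightsquigarrow\alpha^{3/2}\Wdip$, $\gs\rightsquigarrow\go:=\ph_{el}\otimes\Omega$, and $e^{i(H_\alpha-E_\alpha)t}\rightsquigarrow e^{i(H_0-E_0)t}$, each controlled by the estimates of Section~\ref{sec2}. \emph{Step 1 (Cook's formula and the shift in $\tau$).} Since $a^{\#}(\uf_\tau)_t=a^{\#}(\uf_{t+\tau})$ and $e^{iH_\alpha\tau}\gs=e^{iE_\alpha\tau}\gs$, Proposition~\ref{p3-5} gives the covariance $a^*_\pm(\uf_\tau)\gs=e^{-i(H_\alpha-E_\alpha)\tau}a^*_\pm(\uf)\gs$, so it suffices to treat $\tau=0$ and, in Step~4, to estimate the error of replacing $e^{-i(H_\alpha-E_\alpha)\tau}$ by $e^{-i(H_0-E_0)\tau}$ in front of the main term. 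For $\tau=0$, differentiating $e^{i(H_\alpha-E_\alpha)t}a^*(\uf_t)\gs$ and using $[\hel,a^*(\uf_t)]=0$, $\partial_t a^*(\uf_t)=-i[H_f,a^*(\uf_t)]$ and $H_\alpha\gs=E_\alpha\gs$ yields $\tfrac{d}{dt}\bigl(e^{i(H_\alpha-E_\alpha)t}a^*(\uf_t)\gs\bigr)=ie^{i(H_\alpha-E_\alpha)t}[W,a^*(\uf_t)]\gs$, whose norm is $\le C\alpha^{3/2}(1+t^2)^{-1}$ by Lemma~\ref{l3-1}. Since the limits $t\to\pm\infty$ exist (Proposition~\ref{p3-5}), integration over $\R$ gives
\[
 a^*_+(\uf)\gs-a^*_-(\uf)\gs=i\int_{-\infty}^{\infty}e^{i(H_\alpha-E_\alpha)t}[W,a^*(\uf_t)]\gs\,dt .
\]

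\emph{Step 2 (reduction of the integrand and the interaction-picture identity).} Writing $W=\alpha^{3/2}\Wdip+\alpha^{3/2}(W^{(1)}-\Wdip)+\alpha^3 W^{(2)}$ and $[\Wdip,a^*(\uf_t)]\gs=[\Wdip,a^*(\uf_t)]\go+[\Wdip,a^*(\uf_t)](\gs-\go)$, Proposition~\ref{c3-4n}, \eqref{gl3-5a}, \eqref{gl3-23} and the unitarity of $e^{i(H_\alpha-E_\alpha)t}$ give
\[
 a^*_+(\uf)\gs-a^*_-(\uf)\gs=i\alpha^{3/2}\int_{-\infty}^{\infty}e^{i(H_\alpha-E_\alpha)t}[\Wdip,a^*(\uf_t)]\go\,dt+\cO(\alpha^{5/2}).
\]
Using $e^{iH_0 t}\go=e^{iE_0 t}\go$, $e^{iH_0 t}\Wdip e^{-iH_0 t}=2\fp(t)\cdot\fA(\fn,t)$ and $[\fA(\fn,t),a^*(\uf)]=\sum_j\lkl\fG_\fn,f_{j,t}\rkl\,a^*(\uf_{(j)})$ one checks the identity
\[
 e^{i(H_0-E_0)t}[\Wdip,a^*(\uf_t)]\go=2\fp(t)\ph_{el}\otimes[\fA(\fn,t),a^*(\uf)]\Omega=:\psi_t ,
\]
so the target integrand is exactly $\psi_t$; here $\|\psi_t\|\le C(1+t^2)^{-1}$ by \eqref{stat-phase1}, and $\psi_t\in D\bigl((H_f+1)^{1/2}(\hel+c)^{1/2}\bigr)$ with norm $\le C(1+t^2)^{-1}$, because $\ph_{el}\in D(\hel)$ forces $\fp\ph_{el}\in D(\hel^{1/2})$. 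It therefore remains to prove that $\int_{\R}\bigl(e^{i(H_\alpha-E_\alpha)t}-e^{i(H_0-E_0)t}\bigr)[\Wdip,a^*(\uf_t)]\go\,dt=\co(\alpha)$.

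\emph{Step 3 (replacing the interacting evolution).} Put $\chi_t:=[\Wdip,a^*(\uf_t)]\go$, so $e^{i(H_0-E_0)t}\chi_t=\psi_t$ and, by \eqref{gl3-24} together with $e^{i\hel u}\ph_{el}=e^{iE_0 u}\ph_{el}$ and $e^{iH_f u}a^*(\uf_{(l),t})\Omega=a^*(\uf_{(l),t-u})\Omega$, one has $e^{i(H_0-E_0)u}\chi_t=[\Wdip_{-u},a^*(\uf_{t-u})]\go$. Duhamel's formula gives
\[
 \bigl(e^{i(H_\alpha-E_\alpha)t}-e^{i(H_0-E_0)t}\bigr)\chi_t=i\int_0^t e^{i(H_\alpha-E_\alpha)(t-u)}\bigl(W-(E_\alpha-E_0)\bigr)[\Wdip_{-u},a^*(\uf_{t-u})]\go\,du .
\]
Since $\fA(\alpha\fx)$ is $(H_f+1)^{1/2}$-bounded uniformly in $\fx,\alpha$ (the $\fx$-dependence of $\fG_\fx$ is a phase), $W^{(1)}$ is $(H_f+1)^{1/2}(\hel+c)^{1/2}$-bounded and $W^{(2)}$ is $(N_f+1)$-bounded, uniformly in $\alpha\le\tilde\alpha$; using also $E_\alpha-E_0=\cO(\alpha^3)$ and that the norms $\|(\hel+c)^{1/2}\fp\ph_{el}\|$ and $\|(H_f+1)^{1/2}a^*(\uf_{(l)})\Omega\|$ governing $[\Wdip_{-u},a^*(\uf_{t-u})]\go$ are $u$-independent, one gets $\|\bigl(W-(E_\alpha-E_0)\bigr)[\Wdip_{-u},a^*(\uf_{t-u})]\go\|\le C\alpha^{3/2}(1+t^2)^{-1}$ uniformly in $u$, hence $\|\bigl(e^{i(H_\alpha-E_\alpha)t}-e^{i(H_0-E_0)t}\bigr)\chi_t\|\le C\alpha^{3/2}|t|(1+t^2)^{-1}$. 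This bound is not integrable over $\R$; combining it with the trivial $\|\bigl(e^{i(H_\alpha-E_\alpha)t}-e^{i(H_0-E_0)t}\bigr)\chi_t\|\le 2\|\chi_t\|\le C(1+t^2)^{-1}$ and splitting the integral at $|t|=\alpha^{-3/2}$, the first estimate contributes $\cO(\alpha^{3/2}\log\alpha^{-1})$ and the second $\cO(\alpha^{3/2})$, so the integral is $\cO(\alpha^{3/2}\log\alpha^{-1})=\co(\alpha)$. This completes Step~2 and proves Theorem~\ref{thm4-1} for $\tau=0$.

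\emph{Step 4 (the remainder linear in $\tau$).} By Step~1 and the $\tau=0$ case, $a^*_+(\uf_\tau)\gs-a^*_-(\uf_\tau)\gs=i\alpha^{3/2}e^{-i(H_\alpha-E_\alpha)\tau}\int_{\R}\psi_t\,dt+\cO(\alpha^{5/2})$. Applying Duhamel to $e^{-i(H_\alpha-E_\alpha)\tau}-e^{-i(H_0-E_0)\tau}$ on $\int_{\R}\psi_t\,dt$, and using that $(H_f+1)^{1/2}(\hel+c)^{1/2}$ and $(N_f+1)$ commute with $e^{-i(H_0-E_0)u}$ so that $\|\bigl(W-(E_\alpha-E_0)\bigr)e^{-i(H_0-E_0)u}\psi_t\|\le C\alpha^{3/2}(1+t^2)^{-1}$ uniformly in $u$, one finds $\|\bigl(e^{-i(H_\alpha-E_\alpha)\tau}-e^{-i(H_0-E_0)\tau}\bigr)\int_{\R}\psi_t\,dt\|\le C\alpha^{3/2}|\tau|$; multiplying by $\alpha^{3/2}$ produces a contribution $\cO(\alpha^3|\tau|)$, whence
\[
 a^*_+(\uf_\tau)\gs-a^*_-(\uf_\tau)\gs=i\alpha^{3/2}\int_{-\infty}^{\infty}e^{-i(H_0-E_0)\tau}\,2\fp(t)\ph_{el}\otimes[\fA(\fn,t),a^*(\uf)]\Omega\,dt+\cR(\tau,\alpha),
\]
with $\|\cR(\tau,\alpha)\|=\cO(\alpha^{5/2})+\cO(\alpha^3|\tau|)$, as asserted. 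The main obstacle is Step~3: the first-order Duhamel estimate for the evolution replacement decays only like $1/|t|$ and is not absolutely integrable, so it must be combined with the crude unitarity bound at the cost of a harmless logarithm; and since $\psi_t$ inherits only the Sobolev regularity of the atomic ground state $\ph_{el}$ (hence $\psi_t\notin D(H_0)$ in general), $W$ cannot be estimated relative to $H_0$ but only relative to $(H_f+1)^{1/2}(\hel+c)^{1/2}$ — precisely the operator that commutes through the free evolution $e^{i(H_0-E_0)u}$.
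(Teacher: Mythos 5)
Your proposal is correct and reaches the stated remainder $\cO(\alpha^{5/2})+\cO(\alpha^3|\tau|)$, but it organizes the central step differently from the paper. Both proofs start from Cook's formula and discard $\alpha^3 W^{(2)}$ and $W^{(1)}-\Wdip$ using Lemma~\ref{l3-1} and Proposition~\ref{c3-4n}. The paper then keeps $\gs$ and \emph{iterates} the identity \eqref{eq:ftc} over the interval $[\tau,s]$ (after first truncating to $|s|\le\alpha^{-1}$), so that the error in replacing $e^{i(H_\alpha-E_\alpha)s}[\cdots]_s\gs$ by $e^{i(H_\alpha-E_\alpha)\tau}[\cdots]_\tau\gs$ is the \emph{double commutator} $[W,[\Wdip_{r-s},a^*(\uf_r)]]\gs$ controlled by \eqref{gl3-22}; the passage $\gs\to\go$ and the extension of the $s$-integral come last, and the $\tau$-dependence of the remainder falls out of the anchoring of the iteration at time $\tau$. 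You instead pass to $\go$ first (via \eqref{gl3-23}) and then apply a first-order Duhamel formula to the difference of the two propagators acting on the explicit vector $\chi_t$, estimating $W$ itself — not a commutator — on $e^{i(H_0-E_0)u}\chi_t$. This is a clean and arguably more transparent route, and your treatment of the non-integrable $|t|/(1+t^2)$ bound by interpolation with the unitarity bound is exactly the mechanism that produces the paper's logarithm. What your route costs is two inputs the paper never needs: (a) the regularity statement $\fp\,\ph_{el}\in D(\hel^{1/2})$, i.e.\ $\ph_{el}\in H^2$ together with form-boundedness of $V$, and the uniform-in-$u$ relative bound of $W$ with respect to $(H_f+1)^{1/2}(\hel+c)^{1/2}$ — these do hold under the Hypotheses of Section~\ref{ch2} (where $V\in L^2+L^\infty$), but they are not among the prepared lemmas; and (b) a justification of the Duhamel identity on a vector $\chi_t\notin D(H_0)$ (it holds in the quadratic-form sense, or by approximating $\ph_{el}$ in $D(\hel^{3/2})$), which you flag but do not carry out. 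The paper's double-commutator formulation sidesteps both issues because every operator is applied to $\gs\in D(H_\alpha)$ and the commutators are computed explicitly in Lemma~\ref{l3-4}. Neither point is a genuine gap, but if you want your version to be self-contained you should supply these two technical lemmas.
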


\noindent\textbf{Remark.}
Part of the error $\cO(\alpha^{5/2})$ stems from passing to the
dipole-approximation $W^{(1)} \to \Wdip$. Hence its order $5/2=3/2+1$ cannot be improved.

\begin{proof}
Recall that $B_{t}=B(-t)=e^{-itH_0} B e^{itH_0}$. 
To compare the time-evolutions
generated by $H_\alpha$ and $H_0$ we will use that
\begin{equation}\label{eq:ftc}
    e^{i(H_{\alpha}-E_{\alpha})t}B_{t}\gs = B\gs+
    \int_0^{t}e^{i(H_{\alpha}-E_{\alpha})s}
[iW,B_s]\gs\,ds.
\end{equation}
This equation may be iterated because $[iW,B_s]=[iW_{-s},B]_{s}$. From
$$
    a_{\pm}^{*}(\uf)\gs =
    \lim_{t\to\pm\infty}e^{i(H_{\alpha}-E_{\alpha})t}
a^{*}(\uf_t)\gs
$$
and \eqref{eq:ftc} it follows that
\begin{equation*}
    a^*_{+}(\uf)\gs-a^*_{-}(\uf)\gs
    = \int_{-\infty}^{\infty}e^{i(H_{\alpha}-E_{\alpha})s}
    [iW,a^{*}(\uf_{s})] \gs\, ds.
\end{equation*}
Only terms contributing to this integral of order $\alpha^{3/2}$ need
to be kept. Since
$W=\alpha^{\frac{3}{2}} W^{(1)}+\alpha^3 W^{(2)}$, we may drop
$W^{(2)}$, $W^{(1)}-\Wdip$ and restrict the interval of integration to 
$|s|\leq \alpha^{-1}$ by Lemma \ref{l3-1} and 
Proposition \ref{c3-4n}. We obtain
\begin{eqnarray}
   \lefteqn{a^*_{+}(\uf)\gs-a^*_{-}(\uf)\gs}\nn\\
   &=& i\alpha^{3/2}\int_{-\infty}^{\infty}
e^{i(H_{\alpha}-E_{\alpha})s}[\Wdip ,a^*(\uf_{s})]\gs\,ds +
\cO(\alpha^{5/2})\nn\\
   &=& i\alpha^{3/2}\int_{|s|\leq \alpha^{-1}}
e^{i(H_{\alpha}-E_{\alpha})s}[\Wdip_{-s},a^*(\uf)]_{s}\gs\,ds
+\cO(\alpha^{5/2}). \label{4-1-1} 
\end{eqnarray}
Applying now \eqref{eq:ftc} to the integrand in \eqref{4-1-1} and the
time interval $[\tau,s]$, rather than $[0,s]$, we find
\begin{eqnarray}
\lefteqn{\int_{|s|\leq\alpha^{-1}}e^{i(H_{\alpha}-E_{\alpha})s}[W^{\rm
    dip}_{-s},a^*(\uf)]_{s}\gs\, ds}\nn\\
&=&
e^{i(H_{\alpha}-E_{\alpha})\tau}\int_{|s|\leq\alpha^{-1}}
[\Wdip_{-s},a^*(\uf)]_{\tau}\gs ds\label{4-1-2}\\
 &&+\int_{|s|\leq\alpha^{-1}}ds\int_{\tau}^{s} 
e^{i(H_{\alpha}-E_{\alpha})r}[iW,[\Wdip_{r-s},a^{*}(\uf_r)]]\gs\,dr.\nn
\end{eqnarray}
By \eqref{gl3-22} in Lemma~\ref{l3-4}, the norm of the double integral 
is bounded by
\begin{equation}\label{4-1-3}
   \text{const}\int_{|s|\leq\alpha^{-1}}\frac{|\tau|+|s|}{1+|s|^2}
\alpha^{3/2}\,ds = \cO(\alpha^{3/2}|\tau|)+\cO(\alpha^{3/2}\ln(\alpha)).
\end{equation}
In the integral \eqref{4-1-2} we use Lemma \ref{l3-4} to replace
$\gs$ by $\go$ and to extend the integration over all
$s\in\R$. We find that
\begin{eqnarray}
   \lefteqn{\int_{|s|\leq \alpha^{-1}}
[\Wdip_{-s},a^{*}(\uf)]_{\tau}\gs\,ds}\nn\\
   &=& \int_{-\infty}^{\infty}[W^{\rm
     dip}_{\tau-s},a^{*}(\uf_{\tau})]\go\,ds +\cO(\alpha)\nn\\
   &=& \int_{-\infty}^{\infty}e^{-i(H_0-E_0)\tau}
[\Wdip_{-s},a^{*}(\uf)]\go\,ds+\cO(\alpha).\label{4-1-4}
\end{eqnarray}
Equations \eqref{4-1-1}, \eqref{4-1-2}, \eqref{4-1-3} and
\eqref{4-1-4} prove the theorem because 
$e^{-i(H_{\alpha}-E_{\alpha})\tau}a^{*}_{\pm}(\uf)\gs=
a^{*}_{\pm}(\uf_{\tau})\gs$ and because $\go=\ph_{el}\otimes\Omega$.
\end{proof}

Theorem~\ref{thm4-1} in the case $\tau=0$ becomes Theorem~\ref{thm1},
which implies that 
$$
 \left\|\one_{ac}(\hel)\big(a_{-}^{*}(\uf)\Phi_{\alpha}-a_{+}^{*}(\uf)
\Phi_{\alpha}\big)\right\|^2 = \alpha^3 P^{(3)}(\uf) +O(\alpha^4)
$$
where
\begin{eqnarray}
P^{(3)}(\uf):= \Bigg\|\one_{ac}(\hel)\int\limits_{-\infty}^{\infty}
2\fp(s)\ph_{el}\otimes\big[\fA(\fn,s), a^*(\uf)\big]\Omega ds\Bigg\|^2.
\label{formula1}
\end{eqnarray}
We next show that $P^{(3)}(\uf)$ is additive in its one-photon
contributions.


\begin{proposition} \label{l3.3}
Suppose that $\uf=(f_1,m_1,\ldots,f_n,m_n)\in
\big[L^2(\R^3)\times\N\big]^n$ with $\sprod{f_i}{f_j}=\delta_{ij}$ and
$\feps f_l \in  C_0^{2}(\R^3 \backslash \{\fn\},\C^3)$. Then
$P^{(3)}(\uf) = \sum_{l=1}^n m_l P^{(3)}(f_l)$
with 
\begin{align}\nonumber
   P^{(3)}(f_l) &= \Bigg\| \one_{ac}(\hel)
   \int\limits_{-\infty}^{\infty} 2\fp(s)\varphi_{el}\cdot \sprod{\Omega}{\fA(\fn,s)a^{*}(f_l)\Omega}\,ds\Bigg\|^2\\
   &= \Bigg\| \one_{ac}(\hel)
   \int\limits_{-\infty}^{\infty} \fx(s)\varphi_{el}\cdot \sprod{\Omega}{\fE(\fn,s)a^{*}(f_l)\Omega}\,ds\Bigg\|^2.\label{gl5-10}
\end{align}
\end{proposition}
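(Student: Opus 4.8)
The plan is to expand $[\fA(\fn,s),a^*(\uf)]\Omega$ with the canonical commutation relations and thereby turn $P^{(3)}(\uf)$, as given by \eqref{formula1}, into the squared norm of a finite sum of mutually orthogonal vectors, one for each incident photon $f_l$. Recall that, in the notation of \eqref{in-state}, $a^*(\uf)=\prod_{k=1}^n\frac{1}{\sqrt{m_k!}}a^*(f_k)^{m_k}$, and that $\fA(\fn,s)=a(e^{i\omega s}\fG_{\fn})+a^*(e^{i\omega s}\fG_{\fn})$. Since the creation part of $\fA(\fn,s)$ commutes with $a^*(\uf)$ while $a(\cdot)\Omega=0$, only the annihilation part survives, and the Leibniz rule together with $[a(g),a^*(f_k)]=\langle g,f_k\rangle$ yields
\[
  [\fA(\fn,s),a^*(\uf)]\Omega \;=\; \sum_{l=1}^n \sqrt{m_l}\;\langle\Omega,\fA(\fn,s)a^*(f_l)\Omega\rangle\; a^*(\uf^{(l)})\Omega ,
\]
where $\uf^{(l)}$ is the tuple $\uf$ with $m_l$ replaced by $m_l-1$ (terms with $m_l=0$ drop out), $\langle\Omega,\fA(\fn,s)a^*(f_l)\Omega\rangle=\langle\fG_{\fn},f_{l,s}\rangle\in\C^3$, and the prefactor $\sqrt{m_l}$ appears when the vector $m_l\,a^*(f_l)^{m_l-1}\cdots\Omega$ is rewritten in terms of the \emph{normalized} Fock vector $a^*(\uf^{(l)})\Omega$.

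Substituting this into \eqref{formula1}, and using that $\one_{ac}(\hel)$ acts only on the electronic factor, that the finite sum may be taken out of the norm, and that each of the resulting $s$-integrals converges absolutely in $\cH$ — the integrand has norm $\cO((1+|s|^2)^{-1})$ by Lemma~\ref{lm:stat-phase} (which requires $\feps f_l\in C_0^2$) and by $\sup_s\|\fp(s)\ph_{el}\|=\|\fp\ph_{el}\|<\infty$ — one obtains
\[
  P^{(3)}(\uf)=\Big\|\sum_{l=1}^n\sqrt{m_l}\,\big(\one_{ac}(\hel)\Psi_l\big)\otimes a^*(\uf^{(l)})\Omega\Big\|^2,\qquad
  \Psi_l:=\int_{-\infty}^{\infty}2\fp(s)\ph_{el}\cdot\langle\Omega,\fA(\fn,s)a^*(f_l)\Omega\rangle\,ds .
\]
The vectors $a^*(\uf^{(l)})\Omega$ with $m_l\ge1$ are orthonormal in $\F$, being distinct occupation-number vectors over the orthonormal family $f_1,\dots,f_n$ — the same fact that makes $a^*(\uf)\gs$ normalized, see the discussion around \eqref{in-state}. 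Hence all cross terms vanish and $P^{(3)}(\uf)=\sum_{l=1}^n m_l\|\one_{ac}(\hel)\Psi_l\|^2=\sum_{l=1}^n m_l P^{(3)}(f_l)$, with $P^{(3)}(f_l)$ given by the first line of \eqref{gl5-10}.

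It remains to derive the second line of \eqref{gl5-10} from the first by integrating $\Psi_l$ by parts. I would use $2\fp(s)=\frac{d}{ds}\fx(s)$ (applied to $\ph_{el}$, which lies in $D(|\fx|)$ since the ground state of $\hel$ decays exponentially, as $E_0<0$) and $-\frac{d}{ds}\fA(\fn,s)=\fE(\fn,s)$, so that $\frac{d}{ds}\langle\Omega,\fA(\fn,s)a^*(f_l)\Omega\rangle=-\langle\Omega,\fE(\fn,s)a^*(f_l)\Omega\rangle$. Componentwise the boundary term is $[\,x_j(s)\ph_{el}\,\langle\Omega,A_j(\fn,s)a^*(f_l)\Omega\rangle\,]_{s=-\infty}^{s=\infty}$, whose $L^2$-norm is at most $\|x_j\ph_{el}\|\,|\langle\fG_{\fn},f_{l,s}\rangle|$ and hence tends to $0$ as $s\to\pm\infty$ by \eqref{stat-phase1}; thus $\Psi_l=\int_{-\infty}^{\infty}\fx(s)\ph_{el}\cdot\langle\Omega,\fE(\fn,s)a^*(f_l)\Omega\rangle\,ds$, and the new integrand is again norm-integrable since $\langle\Omega,\fE(\fn,s)a^*(f_l)\Omega\rangle$ has the same structure as $\langle\fG_{\fn},f_{l,s}\rangle$ with $\kappa/\sqrt{2\omega}$ replaced by a multiple of $\kappa\sqrt{\omega}$, still covered by Lemma~\ref{lm:stat-phase}. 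I do not expect a deep obstacle here: the substance is the commutator bookkeeping (getting the constants $\sqrt{m_l}$ and the reduced tuples $\uf^{(l)}$ right) together with the orthonormality of the one-photon Fock states, and the only genuinely analytic point — legitimising the integration by parts and the vanishing of the boundary term at $s=\pm\infty$ — is supplied by the decay estimate \eqref{stat-phase1} and by the $s$-independence of $\|\fp(s)\ph_{el}\|$ and $\|\fx(s)\ph_{el}\|$.
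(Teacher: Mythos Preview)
Your proof is correct and follows essentially the same route as the paper: expand the commutator via the CCR to get the $\sqrt{m_l}$-weighted sum over reduced Fock vectors $a^*(\uf^{(l)})\Omega$, use their orthonormality and Pythagoras for the additivity, and integrate by parts via $2\fp(s)\ph_{el}=\frac{d}{ds}\fx(s)\ph_{el}$ for the second line. The only point the paper makes more explicit is that the differentiability of $s\mapsto\fx(s)\ph_{el}$ requires $\fx\ph_{el}\in D(\hel)$ (not merely $\ph_{el}\in D(|\fx|)$), which is supplied by Lemma~\ref{lb1}.
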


\begin{proof}
Since $a^*(\uf)$ is a product of creation operators $a^*(f_l)$ and
since $[\fA(\fn,s),a^*(f_l)]=\sprod{\Omega}{\fA(\fn,s)a^{*}(f_l)\Omega}$, a scalar multiple of the identity
operator, we have 
$$
  \big[\fA(\fn,s),a^*(\uf)\big]\Omega = \sum_{l=1}^n
  \sqrt{m_l}\sprod{\Omega}{\fA(\fn,s)a^{*}(f_l)\Omega}a^*(\uf_{(l)})\Omega
$$ 
where $\uf_{(l)}= (f_1,m_1,\ldots,f_l,(m_l-1),\ldots,f_n,m_n)$. The
vectors $a^*(\uf_{(l)})\Omega$ are orthonormal by construction. Hence
by definition of $P^{(3)}(\uf)$ and by the Pythagoras identity,
$$
   P^{(3)}(\uf) = \sum_{l=1}^n P^{(3)}(f_l)  
$$
with $P^{(3)}(f_l)$ given by the first equation in the statement of the proposition.
The second equation in the proposition follows from 
$$
   2\fp(s)\ph_{el}=\frac{d}{ds}\fx(s)\ph_{el}, \qquad \frac{d}{ds}\sprod{\Omega}{\fA(\fn,s)a^{*}(f_l)\Omega}=-\sprod{\Omega}{\fE(\fn,s)a^{*}(f_l)\Omega}
$$
by an integration by parts. The differentiability of $s\mapsto\fx(s)\ph_{el}$
and the expression for its derivative are established in Lemma~\ref{lb1}.
\end{proof}

\subsection{Expansion in generalized eigenfunctions} \label{sec3.3}

In this section we prove Theorem~\ref{thm3} and the stronger statement expressed by the
Equations~\eqref{intro5b} and \eqref{intro6}. The ingredients are Theorem~\ref{thm1}, Proposition~\ref{l3.3}, and 
a set of generalized eigenfunctions
$\ph_{\fq}$ with the properties (i)-(iii) in
the introduction. Concerning the existence of $\ph_{\fq}$, we recall from
\cite{DerezinskyGerard1997a}, Theorem~4.7.1, that our hypotheses on $V$ imply existence and
completeness of a (modified) wave operator $\Omega_{+}$ associated
with $\hel$. Moreover, $(\hel-i)^{-1}\expect{\fx}^{-2}$ is a Hilbert-Schmidt
operator.

\begin{lemma}\label{lm:one-ac}
Suppose that $\ph:\R\to\Hel\cap D(|\fx|^2)$ is such that
$s\mapsto\ph(s)$ and $s\mapsto |\fx|^2\ph(s)$ are continuous and
absolutely integrable with respect to the norm of $\Hel$. Then
$\int_{-\infty}^{\infty}\ph(s)ds\in D(|\fx|^2)$ and
$$
  \Big\|\one_{ac}(\hel)\int_{-\infty}^{\infty}\ph(s)ds\Big\|^2 =
  \int_{\R^3}\left|\int_{-\infty}^{\infty}\sprod{\ph_{\fq}}{\ph(s)}ds\right|^2 d^3q.
$$
\end{lemma}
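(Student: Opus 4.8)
The plan is to reduce the identity to the generalized-eigenfunction expansion~\eqref{gef1} applied to the single vector $\psi := \int_{-\infty}^{\infty}\ph(s)\,ds$, after first checking that $\psi$ lies in $D(|\fx|^2)$ so that~\eqref{gef1} and the integral $\sprod{\ph_{\fq}}{\psi}$ make sense. First I would note that since $s\mapsto\ph(s)$ is continuous and Bochner-integrable with respect to the $\Hel$-norm, the vector $\psi$ is a well-defined element of $\Hel$; since in addition $s\mapsto|\fx|^2\ph(s)$ is continuous and Bochner-integrable, and the operator $|\fx|^2$ is closed, the standard fact that closed operators commute with Bochner integrals gives $\psi\in D(|\fx|^2)$ with $|\fx|^2\psi = \int_{-\infty}^{\infty}|\fx|^2\ph(s)\,ds$. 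This is the only place the hypothesis on $|\fx|^2\ph(s)$ is used, and it is what makes property~(i)–(ii) applicable to $\psi$.

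Next I would apply property~(ii) from the introduction, i.e. equation~\eqref{gef1}, to $\psi\in D(|\fx|^2)$:
$$
   \Big\|\one_{ac}(\hel)\psi\Big\|^2 = \int_{\R^3}\big|\sprod{\ph_{\fq}}{\psi}\big|^2\,d^3q.
$$
It then remains to identify $\sprod{\ph_{\fq}}{\psi}$ with $\int_{-\infty}^{\infty}\sprod{\ph_{\fq}}{\ph(s)}\,ds$ for a.e.\ $\fq$. Since $\sprod{\ph_{\fq}}{\cdot}$ is defined as the integral $\int\overline{\ph_{\fq}(\fx)}\,(\cdot)(\fx)\,d^3x$, and since for a.e.\ $\fq$ the function $\expect{\fx}^{-2}\ph_{\fq}$ lies in $L^2(\R^3)$ by property~(i), I would write, for such $\fq$, the pairing against the $D(|\fx|^2)$-vector $\psi$ as $\sprod{\expect{\fx}^{-2}\ph_{\fq}}{\expect{\fx}^{2}\psi}$; this is a genuine $L^2$ inner product with one factor fixed in $L^2$, so it commutes with the Bochner integral $\expect{\fx}^{2}\psi=\int\expect{\fx}^{2}\ph(s)\,ds$, yielding $\sprod{\ph_{\fq}}{\psi}=\int_{-\infty}^{\infty}\sprod{\ph_{\fq}}{\ph(s)}\,ds$. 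Combining the two displays gives the claimed formula.

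The main obstacle is the measure-theoretic bookkeeping: property~(i) only guarantees $\expect{\fx}^{-2}\ph_{\fq}\in L^2$ for \emph{almost every} $\fq$, and property~(ii) only determines $\sprod{\ph_{\fq}}{\cdot}$ up to a $\fq$-null set, so I must be careful that the null sets can be chosen uniformly and that the Fubini-type interchange $\int d^3q\,\big|\int ds\,\sprod{\ph_{\fq}}{\ph(s)}\big|^2$ is legitimate. The square-integrability of $(\fx,\fq)\mapsto\expect{\fq}^{-2}\expect{\fx}^{-2}\ph_{\fq}(\fx)$ on $\R^3\times\R^3$ from property~(i), together with the hypothesis that $s\mapsto\big\|\,|\fx|^2\ph(s)\big\|$ is integrable, provides exactly the domination needed to justify the interchange via Cauchy--Schwarz in $\fx$ and Fubini in $(\fq,s)$; everything else is routine.
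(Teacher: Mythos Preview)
Your proof is correct and follows essentially the same route as the paper: show $\psi=\int\ph(s)\,ds\in D(|\fx|^2)$ via closedness of $|\fx|^2$, apply \eqref{gef1}, and interchange the $s$-integral with the pairing $\sprod{\ph_{\fq}}{\cdot}$ by rewriting it as a genuine $L^2$ inner product against the weighted eigenfunction. Your use of the weight $\expect{\fx}^{-2}$ is in fact slightly cleaner than the paper's $|\fx|^{-2}$, since property~(i) is stated for $\expect{\fx}^{-2}\ph_{\fq}$; and your explicit remarks on the a.e.-in-$\fq$ issues go a bit beyond what the paper spells out, but are consistent with it.
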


\begin{proof}
From the existence of the improper Riemann integrals
$\int_{-\infty}^{\infty}\ph(s)ds$ and
$\int_{-\infty}^{\infty}|\fx|^2\ph(s)ds$ and the fact that
multiplication with $|\fx|^2$ is a closed operator, it follows that
$\int_{-s}^s\ph(s)ds\in D(|\fx|^2)$ and that
$$
     |\fx|^2\int_{-\infty}^{\infty}\ph(s)ds = \int_{-\infty}^{\infty}|\fx|^2\ph(s)ds.
$$
This equation and property (i) of $\ph_{\fq}$ imply that
\begin{align*}
   \sprod{\ph_{\fq}}{\int_{-\infty}^{\infty}\ph(s)ds} &=
   \sprod{|\fx|^{-2}\ph_{\fq}}{\int_{-\infty}^{\infty}|\fx|^2\ph(s)ds}\\
   &= \int_{-\infty}^{\infty}\sprod{|\fx|^{-2}\ph_{\fq}}{|\fx|^2\ph(s)}ds
   = \int_{-\infty}^{\infty}\sprod{\ph_{\fq}}{\ph(s)}ds.
\end{align*}
In view of \eqref{gef1}, this proves the assertion.
\end{proof}


\begin{proposition}\label{thm:ef-expansion}
Suppose that $\feps f\in C_0^{2}(\R^3\backslash\{\fn\})$. Then
\begin{align*}
    P^{(3)}(f)&= \int_{\R^3} d^3q \left|\sprod{\ph_{\fq}}{\fx\ph_{el}}\cdot
 \int_{-\infty}^{\infty}e^{i(\fq^2-E_0)s}\sprod{\Omega}{\fE(\fn,s)a^{*}(f)\Omega}ds\right|^2 \\
    &=4\pi^2\int_{\R^3}d^3q\Bigg|\sprod{\ph_{\fq}}{\fx\ph_{el}}
\cdot\int_{|\fk|=q^2-E_{0}}
    |\fk| \sum_{\lambda=1,2} \overline{G_\fn(\fk,\lambda)}f(\fk,\lambda)
d\sigma(\fk)\Bigg|^2,
\end{align*}
where $d\sigma(\fk)$ denotes the surface measure of the sphere
$|\fk|=q^2-E_{0}$ in $\R^3$.
\end{proposition}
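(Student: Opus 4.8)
The plan is to start from the second representation of $P^{(3)}(f)$ furnished by Proposition~\ref{l3.3} (specialized to $N=1$),
$$
   P^{(3)}(f) = \Big\| \one_{ac}(\hel)\int_{-\infty}^{\infty}\fx(s)\ph_{el}\cdot \sprod{\Omega}{\fE(\fn,s)a^{*}(f)\Omega}\,ds\Big\|^2 ,
$$
and to evaluate it in two stages: first expand it in the generalized eigenfunctions $\ph_{\fq}$, which produces the first displayed identity, and then evaluate the remaining time integral in closed form, which produces the second. Abbreviate $\fh(s):=\sprod{\Omega}{\fE(\fn,s)a^{*}(f)\Omega}\in\C^3$ and $\Psi:=\int_{-\infty}^{\infty}\fx(s)\ph_{el}\cdot\fh(s)\,ds$. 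Two elementary facts are used repeatedly. Since $\hel\ph_{el}=E_0\ph_{el}$ one has $\fx(s)\ph_{el}=e^{-iE_0 s}e^{i\hel s}(\fx\ph_{el})$, and $\ph_{el}$ decays exponentially (because $E_0<0=\inf\sigma_{ess}(\hel)$), so $\fx\ph_{el}\in D(|\fx|^2)$. Moreover a stationary-phase estimate as in Lemma~\ref{lm:stat-phase} (using $\feps f\in C_0^2(\R^3\setminus\{\fn\})$) gives $|\fh(s)|\le c(1+|s|)^{-2}$; hence $\fh\in L^1(\R)$, the Bochner integral defining $\Psi$ exists, and $\int_{-\infty}^{\infty}e^{i(\fq^2-E_0)s}\fh(s)\,ds$ converges absolutely for every $\fq$.

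\textbf{Eigenfunction expansion.} The point to watch is that $\|\,|\fx|^2\fx(s)\ph_{el}\|$ grows ballistically, $\sim s^2$ (the evolved ground state spreads under $e^{i\hel s}$), so $s\mapsto|\fx|^2\fx(s)\ph_{el}\cdot\fh(s)$ is \emph{not} integrable and $\Psi$ need not lie in $D(|\fx|^2)$; Lemma~\ref{lm:one-ac} cannot be applied to $\Psi$ directly. I would insert a cutoff: fix $\theta\in C_0^\infty(\R)$ with $\theta\equiv 1$ near $0$, set $\Psi_T:=\int\theta(s/T)\fx(s)\ph_{el}\cdot\fh(s)\,ds$, and apply Lemma~\ref{lm:one-ac} to $\ph_T(s):=\theta(s/T)\fx(s)\ph_{el}\cdot\fh(s)$; its hypotheses now hold, since the $s$-support is compact and $s\mapsto\fx(s)\ph_{el}$, $s\mapsto|\fx|^2\fx(s)\ph_{el}$ are continuous $\Hel$-valued maps by the regularity of the evolved ground state (cf.\ Lemma~\ref{lb1}). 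Property~(iii), used with $F(\lambda)=e^{i\lambda s}$, $\psi=\fx\ph_{el}$, and $e^{i\hel s}(\fx\ph_{el})\in D(|\fx|^2)$, gives $\sprod{\ph_{\fq}}{\fx(s)\ph_{el}}=e^{i(\fq^2-E_0)s}\sprod{\ph_{\fq}}{\fx\ph_{el}}$ for each $s$ and a.e.\ $\fq$; upgrading this to an a.e.\ statement in $(s,\fq)$ by a routine Fubini argument (joint measurability follows from property~(i) and the $\Hel$-continuity in $s$) and inserting it into Lemma~\ref{lm:one-ac} yields
$$
   \|\one_{ac}(\hel)\Psi_T\|^2 = \int_{\R^3}\Big|\sprod{\ph_{\fq}}{\fx\ph_{el}}\cdot\int_{-\infty}^{\infty}\theta(s/T)\,e^{i(\fq^2-E_0)s}\,\fh(s)\,ds\Big|^2 d^3q .
$$
Now let $T\to\infty$. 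On the left, $\Psi_T\to\Psi$ in $\Hel$ by dominated convergence ($\|\fx(s)\ph_{el}\cdot\fh(s)\|\le C|\fh(s)|\in L^1$), so the left side tends to $\|\one_{ac}(\hel)\Psi\|^2$. On the right, the inner integral converges pointwise in $\fq$ to $\int e^{i(\fq^2-E_0)s}\fh(s)\,ds$ and is bounded in modulus by $\|\fh\|_{L^1}$; moreover (by the next step) this limit is supported in a fixed compact set $K\subset\R^3$, and on $K$ the bound $|\sprod{\ph_{\fq}}{\fx\ph_{el}}|\le\|\expect{\fx}^{-2}\ph_{\fq}\|\,\|\expect{\fx}^2\fx\ph_{el}\|$ together with $\expect{\fq}^{-2}\|\expect{\fx}^{-2}\ph_{\fq}\|\in L^2(d^3q)$ from property~(i) supplies an $L^1(d^3q)$ dominating function. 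Dominated convergence then gives the first asserted identity.

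\textbf{Closed form of the time integral.} Since $\fE(\fn,s)=-\partial_s\fA(\fn,s)$ and, by the canonical commutation relations together with $a(\,\cdot\,)\Omega=0$, $\sprod{\Omega}{\fA(\fn,s)a^{*}(f)\Omega}=\sprod{e^{i\omega s}\fG_{\fn}}{f}$, the $j$-th component of $\fh$ is $h_j(s)=i\int_{\R^3}\omega(\fk)\sum_{\lambda}\overline{G_{\fn,j}(\fk,\lambda)}f(\fk,\lambda)\,e^{-i\omega(\fk)s}\,d^3k$. Passing to polar coordinates $\fk=r\vartheta$ turns this into $h_j(s)=i\int_{0}^{\infty}c_j(r)e^{-irs}\,dr$ with
$$
   c_j(r):=\int_{|\fk|=r}|\fk|\sum_{\lambda}\overline{G_{\fn,j}(\fk,\lambda)}f(\fk,\lambda)\,d\sigma(\fk),
$$
and $c_j\in C_0^2(\R)$, with support in $(0,\infty)$, because $\feps f\in C_0^2(\R^3\setminus\{\fn\})$ while $\kappa$ and $(2|\fk|)^{-1/2}$ are smooth away from the origin. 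Thus $h_j$ is $i$ times the Fourier transform of $c_j$, and Fourier inversion yields $\int_{-\infty}^{\infty}e^{i\xi s}h_j(s)\,ds=2\pi i\,c_j(\xi)$; taking $\xi=\fq^2-E_0>0$ (valid since $E_0<0\le\fq^2$) gives
$$
   \int_{-\infty}^{\infty}e^{i(\fq^2-E_0)s}\fh(s)\,ds = 2\pi i\int_{|\fk|=\fq^2-E_0}|\fk|\sum_{\lambda}\overline{\fG_{\fn}(\fk,\lambda)}f(\fk,\lambda)\,d\sigma(\fk),
$$
which is supported precisely where $\fq^2-E_0$ lies in the compact radial support of $\feps f$ — this is the set $K$ used above. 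Substituting into the identity from the previous step and factoring out $|2\pi i|^2=4\pi^2$ gives the second identity, completing the proof.

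\textbf{Main obstacle.} The genuine work is in the eigenfunction expansion; the closed-form evaluation is a routine Fourier inversion and coarea computation. Within the first step the delicate points are: (a) $\Psi\notin D(|\fx|^2)$ in general (ballistic spreading versus the mere $O(s^{-2})$ decay forced by $C_0^2$-regularity), which obliges one to pass through the cutoffs $\Psi_T$ rather than invoke Lemma~\ref{lm:one-ac} once; (b) the regularity of $s\mapsto\fx(s)\ph_{el}$ as a $D(|\fx|^2)$-valued map, which I import from the auxiliary Lemma~\ref{lb1}; and (c) the dominated-convergence step in $\fq$ as $T\to\infty$, which succeeds only because the explicit formula of the last step exhibits a compact $\fq$-support, to be matched against the property~(i) weight on $\ph_{\fq}$.
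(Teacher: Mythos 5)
Your route is essentially the paper's: start from the representation \eqref{gl5-10} of $P^{(3)}(f)$, expand in the generalized eigenfunctions via Lemma~\ref{lm:one-ac} and property (iii), then evaluate the time integral by Fourier inversion in polar coordinates; the closed-form computation and the constant $4\pi^2$ agree with the paper's. The one place you deviate, the cutoff $\theta(s/T)$, is actually a point where you are more careful than the printed proof: the paper applies Lemma~\ref{lm:one-ac} directly, asserting that $\sprod{\Omega}{\fE(\fn,s)a^{*}(f)\Omega}$ is the Fourier transform of a $C_0^{\infty}(\R_{+})$ function and hence rapidly decreasing, whereas under the stated hypothesis $\feps f\in C_0^{2}$ one only gets $O((1+s^2)^{-1})$ decay; against the bound $\||\fx|^2\fx(s)\ph_{el}\|\leq C(1+s^2)$ of Lemma~\ref{lb1} this leaves $s\mapsto\||\fx|^2\ph(s)\|$ merely bounded, not integrable. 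So your diagnosis (a) is correct and the regularization is a legitimate repair.

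There is, however, a concrete gap in your own limiting step $T\to\infty$ on the right-hand side. Your dominating function uses the compact support $K$ of the \emph{limiting} integral $\int e^{i(\fq^2-E_0)s}\mathbf{h}(s)\,ds$, but the approximants $\int\theta(s/T)e^{i(\fq^2-E_0)s}\mathbf{h}(s)\,ds$ are not supported in $K$: up to constants they are the convolution of the compactly supported $C_0^2$ function $c$ with the approximate identity $T\check\theta(T\,\cdot)$, hence nonzero for all $\fq$. Off $K$ the only bound you have is $|\sprod{\ph_{\fq}}{\fx\ph_{el}}|\,\|\mathbf{h}\|_{L^1}$, and property (i) does not make $\|\expect{\fx}^{-2}\ph_{\fq}\|^2$ integrable in $\fq$ without the weight $\expect{\fq}^{-4}$, so no valid dominating function has been produced. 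Two repairs: either prove the uniform tail estimate $\sup_{T\geq 1}\big|\int\theta(s/T)e^{i\xi s}\mathbf{h}(s)\,ds\big|\leq C_N\big(1+\dist(\xi,\supp c)\big)^{-N}$ (immediate from the convolution representation and the rapid decay of $\check\theta$), which yields the dominating function $C\expect{\fq}^{-4}\|\expect{\fx}^{-2}\ph_{\fq}\|^2\in L^1(d^3q)$; or, more cleanly, avoid domination altogether by noting that \eqref{gef1} makes $\psi\mapsto\sprod{\ph_{\cdot}}{\psi}$ an isometry from $D(|\fx|^2)$, composed with $\one_{ac}(\hel)$, into $L^2(d^3q)$, so that $\fq\mapsto\sprod{\ph_{\fq}}{\Psi_T}$ is Cauchy in $L^2(d^3q)$ and its $L^2$-limit is identified with the pointwise limit along a subsequence. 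With either repair your argument is complete; the Fourier-inversion half is correct as written.
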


\begin{proof}
We start with the expression \eqref{gl5-10} for $P^{(3)}(f)$ and we shall apply
Lemma~\ref{lm:one-ac} to
\begin{equation}\label{ef1}
    \ph(s) = \fx(s)\ph_{el}\cdot \sprod{\Omega}{\fE(\fn,s)a^{*}(f)\Omega}.
\end{equation}
By Lemma~\ref{lb1}, $\fx(s)\ph_{el} = e^{i(\hel-E_0)s}\fx\ph_{el}$ belongs
to $D(|\fx|^2)$ and $\||\fx|^2\fx(s)\ph_{el}\|\leq C(1+s^2)$. On the other hand 
\begin{eqnarray}\nonumber
 \sprod{\Omega}{\fE(\fn,s)a^{*}(f)\Omega}
    &=& \sum_{\lambda=1,2} \int i\omega(\fk) e^{-i\omega(\fk)
      s}\overline{G_{\fn}(\fk,\lambda)}f(\fk,\lambda)d^3k\nonumber\\
    &=& \int_0^{\infty} d\omega e^{-i\omega s} \int_{|\fk|=\omega}
    i\omega \sum_{\lambda=1,2}\overline{G_{\fn}(\fk,\lambda)}f(\fk,\lambda)d\sigma(\fk)\label{ef2}
\end{eqnarray}
is the Fourier transform of a function from
$C_0^{\infty}(\R_{+})$, and hence rapidly decreasing as
$s\to\infty$. It follows that \eqref{ef1} satisfies the hypotheses of
Lemma~\ref{lm:one-ac}. Hence Lemma~\ref{lm:one-ac} proves the first asserted equation because
$$
    \sprod{\ph_{\fq}}{\fx(s)\ph_{el}} = e^{i(\fq^2-E_0)s}\sprod{\ph_{\fq}}{\fx\ph_{el}}.
$$
The second equation follows from the first one and from \eqref{ef2} by an
application of the Fourier inversion theorem.
\end{proof}


\subsection{Proof of Theorem~\ref{thm2}}

\begin{lemma} \label{lB-1}
If $\uf=(f_1,...,f_n)$ with $\feps f_1,...,\feps f_n \in
C_0^{\infty}(\R^3 \backslash \{\fn\})$ and $F \in C_0^{\infty}((-\infty,0))$,
then
$$
 a^*_+(\uf) F(H_{\alpha}) - a^*(\uf) F(H_0) = \cO(\alpha^{\frac{3}{2}}) 
$$
\end{lemma}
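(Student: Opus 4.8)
The plan is to write the difference $a^*_+(\uf)F(H_\alpha) - a^*(\uf)F(H_0)$ as a sum of two pieces: one that replaces $a^*_+(\uf)$ acting on $F(H_\alpha)\Psi$ by $a^*(\uf_t)e^{-i(H_\alpha-E_\alpha)t}$ applied to it, and one that compares the $H_\alpha$- and $H_0$-dynamics of the vector $e^{-iE_\alpha t}\Psi$. First, I would observe that it suffices to prove the estimate on the dense set of vectors $\Psi = F(H_0)\chi$ with $\chi \in D(\langle\fx\rangle^m)\cap D((H_f+1)^m)$ for $m$ large, since such vectors, together with the operator-norm boundedness of $F(H_\alpha)$ and $F(H_0)$ and the bound \eqref{gl3-9} on $a^*_\pm(\uf)$, control the full operator difference. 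Because $F\in C_0^\infty((-\infty,0))$ is supported strictly below the ionization threshold $\Sigma_\alpha$ (for $\alpha$ small, since $\Sigma_\alpha\to 0$), the vector $F(H_\alpha)\Psi$ lies in the spectral subspace $\Ran\one_{(-\infty,\Sigma_\alpha)}(H_\alpha)$, hence is exponentially localized by \eqref{exp} and lies in the domain of $a^*_\pm(\uf)$; moreover $\|F(H_\alpha)-F(H_0)\| = \cO(\alpha^{3/2})$ by the resolvent identity together with $W = \cO(\alpha^{3/2})$ in the relevant relative sense (expanding $F$ via a Helffer--Sjöstrand or Cauchy-type formula and using $\|W(H_0+i)^{-1}\|$-type bounds absorbed into the exponential weights from \eqref{exp}).

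The core step is the telescoping identity
\begin{align*}
 a^*_+(\uf)F(H_\alpha) - a^*(\uf)F(H_0)
 &= \Big(a^*_+(\uf)F(H_\alpha) - e^{i(H_\alpha-E_\alpha)t}a^*(\uf_t)e^{-i(H_\alpha-E_\alpha)t}F(H_\alpha)\Big)\\
 &\quad + e^{i(H_\alpha-E_\alpha)t}a^*(\uf_t)\Big(e^{-i(H_\alpha-E_\alpha)t}F(H_\alpha) - e^{-i(H_0-E_0)t}F(H_0)\Big)\\
 &\quad + e^{i(H_\alpha-E_\alpha)t}\Big(a^*(\uf_t)e^{-i(H_0-E_0)t} - e^{-i(H_0-E_0)t}a^*(\uf)\Big)F(H_0),
\end{align*}
valid for every $t$, where the last bracket vanishes because $a^*(\uf_t) = e^{-iH_0 t}a^*(\uf)e^{iH_0 t}$ commutes through $e^{-iH_0 t}$ up to the scalar phases and $[H_f, a^*(\uf)]$ is handled by the very definition $\uf_t = e^{-iH_0t}a^*(\uf)e^{iH_0t}$ — more precisely the third line is identically zero by the identity $a^{\#}(\uf_t) = e^{-iH_0t}a^{\#}(\uf)e^{iH_0t}$ recorded in Section~\ref{ch2}. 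For the first bracket I would invoke \eqref{gl3-21n} of Proposition~\ref{p3-5}: it gives, after approximating $\feps f_l$ in $C_0^{n+1}$ and absorbing the approximation error into $\|F(H_\alpha)\Psi\|$, a bound $\alpha^{3/2}c_n(\uf)(1+|t|^n)^{-1}$ uniformly on the exponentially localized vectors $F(H_\alpha)\Psi$. For the second bracket I would use Duhamel,
$$
 e^{-i(H_\alpha-E_\alpha)t}F(H_\alpha) - e^{-i(H_0-E_0)t}F(H_0)
 = \big(e^{-i(H_\alpha-E_\alpha)t}-e^{-i(H_0-E_0)t}\big)F(H_0) + e^{-i(H_\alpha-E_\alpha)t}\big(F(H_\alpha)-F(H_0)\big),
$$
where the second term is $\cO(\alpha^{3/2})$ as noted above, while the first term, applied to $\Psi = F(H_0)\chi$, equals $i\int_0^t e^{-i(H_\alpha-E_\alpha)s}\big((W - (E_\alpha-E_0))e^{-i(H_0-E_0)s}F(H_0)\chi\big)\,ds$ up to reassembling phases; this integrand has norm $\cO(\alpha^{3/2})$ per unit time (using $E_\alpha-E_0 = \cO(\alpha^{3/2})$ — in fact $\cO(\alpha^3)$ — from standard ground-state energy estimates, and $\|W e^{-i(H_0-E_0)s}F(H_0)\chi\| = \cO(\alpha^{3/2})$ since the spectral cutoff $F(H_0)$ keeps the electron localized and bounds $\fp$, $H_f$, and since $W = \cO(\alpha^{3/2})$ relative to such vectors).

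The two errors compete: the first bracket is $\cO(\alpha^{3/2}(1+|t|)^{-n})$ and the second is $\cO(\alpha^{3/2}|t|)$, so I would simply fix $t=1$ (any fixed value) and conclude $\|a^*_+(\uf)F(H_\alpha) - a^*(\uf)F(H_0)\| = \cO(\alpha^{3/2})$ on the dense set, hence everywhere by boundedness. The main obstacle I anticipate is the bookkeeping in the second bracket: making rigorous that $\|W\,e^{-i(H_0-E_0)s}F(H_0)\chi\|=\cO(\alpha^{3/2})$ uniformly for $s$ in a bounded interval requires pushing $e^{-i(H_0-E_0)s}$ past $W$ — which does not commute — but here $F(H_0)$ is a spectral projection for $H_0 = \hel + H_f$, so $e^{-iH_0 s}F(H_0) = F(H_0)e^{-iH_0 s}$ and the electron and field factors evolve independently, so one only needs the uniform-in-$s$ boundedness of $\langle\fx\rangle^{m}e^{-i\hel s}F(\hel)$-type and $(H_f+1)^{m}e^{-iH_f s}$-type quantities on the relevant vectors, which follows from the functional calculus and the localization built into $F$. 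With that in hand the remaining estimates are routine applications of Lemmas~\ref{la1}, \ref{la2} and the commutator bounds of Section~\ref{sec2}.
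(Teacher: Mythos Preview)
Your telescoping identity has an algebraic error: summing the three brackets (using, as you correctly note, that the third vanishes) yields
\[
a^*_+(\uf)F(H_\alpha)\; -\; e^{i(H_\alpha-E_\alpha)t}e^{-i(H_0-E_0)t}\,a^*(\uf)F(H_0),
\]
not the desired difference $a^*_+(\uf)F(H_\alpha) - a^*(\uf)F(H_0)$. The discrepancy $\big(1 - e^{i(H_\alpha-E_\alpha)t}e^{-i(H_0-E_0)t}\big)a^*(\uf)F(H_0)$ is itself $\cO(\alpha^{3/2}|t|)$ by Duhamel, so this is fixable for fixed $t$, but it is wrong as written.

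A more serious gap is your handling of the second bracket: you apply the \emph{unbounded} operator $e^{i(H_\alpha-E_\alpha)t}a^*(\uf_t)$ from the left to a difference of vectors that is only known to be $\cO(\alpha^{3/2})$ in norm, and conclude the result is $\cO(\alpha^{3/2})$. That does not follow; by Lemma~\ref{la2} you need an $\cO(\alpha^{3/2})$ bound on $(H_f+1)^{N/2}$ applied to $e^{-i(H_\alpha-E_\alpha)t}F(H_\alpha)\Psi - e^{-i(H_0-E_0)t}F(H_0)\Psi$, which means tracking $(H_f+1)^{N/2}$ through both the Duhamel integrand and the Helffer--Sj\"ostrand representation of $F(H_\alpha)-F(H_0)$. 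This is doable but is real additional work that you have not accounted for. (Your appeal to \eqref{gl3-21n} for the first bracket is likewise stated in the paper only for $\gs$, not for general vectors in $\Ran F(H_\alpha)$; the extension is routine, and the paper itself makes the same informal appeal, so this is a minor point.)

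The paper avoids all of this with one device you are missing: an auxiliary energy cutoff $G\in C_0^\infty(\R)$ with $G\equiv 1$ on $\supp F + [0,nR]$, where $R$ bounds the supports of the $f_l$. The pull-through formula then gives $a^*(\uf)F(H_0) = G(H_0)\,a^*(\uf)F(H_0)$ and $a^*_+(\uf)F(H_\alpha) = G(H_\alpha)\,a^*_+(\uf)F(H_\alpha)$, and one telescopes as
\[
G(H_0)a^*(\uf)\big(F(H_0)-F(H_\alpha)\big)
+ G(H_0)\big(a^*(\uf)-a^*_+(\uf)\big)F(H_\alpha)
+ \big(G(H_0)-G(H_\alpha)\big)a^*_+(\uf)F(H_\alpha).
\]
Now every creation operator is sandwiched between spectral cutoffs, so $G(H_0)a^*(\uf)$ and $a^*_+(\uf)F(H_\alpha)$ are genuinely \emph{bounded} operators, and the operator-norm estimate $\|F(H_0)-F(H_\alpha)\|=\cO(\alpha^{3/2})$ from Helffer--Sj\"ostrand applies directly without any domain bookkeeping. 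Your dynamical approach could in principle be repaired, but the paper's static decomposition via the second cutoff $G$ is both shorter and sidesteps exactly the unboundedness issue that trips you up.
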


\begin{proof}
Choose $R \in \R$, such that $\supp(f_1),...,\supp(f_n) \in \{|\fk|<R\}$ and then choose $G \in C_0^{\infty}(\R)$ with
$G=1$ on $\supp(F)+[0,nR]$. Then, by the pull through formula for
$a^*(\uf)$ and by \cite{FGS1}, Theorem~4 (iv),
\[
a^*(\uf) F(H_0)=G(H_0) a^*(\uf) F(H_0),\qquad
a^*_+(\uf) F(H_{\alpha})=G(H_{\alpha}) a^*_{+}(\uf) F(H_{\alpha}). 
\]
Using that $F(H_0)-F(H_{\alpha})=\cO(\alpha^{\frac{3}{2}})$, by the
Helffer-Sj\"ostrand functional calculus, that
$\big(a^*(\uf)-a^*_+(\uf)\big)F(H_{\alpha})=\cO(\alpha^{3/2})$, by the
proof of Proposition~\ref{p3-5}, and that 
$G(H_0) a^*(\uf)$, $a^*_+(\uf) F(H_{\alpha})$ are bounded by Lemma
\ref{la2} and \cite{GZ} Proposition 2.1, we find that
\begin{eqnarray*}
\lefteqn{a^*(\uf) F(H_0)-a^*_+(\uf) F(H_{\alpha})=
G(H_0) a^*(\uf) F(H_0)-G(H_{\alpha}) a^*_+(\uf) F(H_{\alpha})} \\
&=&
G(H_0) a^*(\uf) \Big( F(H_0)-F(H_{\alpha}) \Big) +
G(H_0) \Big(a^*(\uf)-a^*_+(\uf)\Big) F(H_{\alpha}) \\
&&+
\Big( G(H_0)-G(H_{\alpha}) \Big) a^*_+(\uf) F(H_{\alpha})=
\cO(\alpha^{\frac{3}{2}})
\end{eqnarray*}
as $\alpha \to 0$.
\end{proof}

Recall from the introduction that $\cH_{+}^{\alpha}$ is the closure
of the span of all vectors of the form
\begin{equation}
a^*_+(\uh) \gs,\qquad \uh=(h_1,...,h_n),\quad \text{where}\quad
h_i,\omega h_i\in L_{\omega}^2(\R^3 \times\{1,2\}),
\end{equation}
and that $\pap$ is the orthogonal projection onto $\hap$.

\begin{proof}[\textbf{Proof of Theorem~\ref{thm2}}] In the first two
  steps of this proof we shall establish \eqref{pap-pp} in the weak operator
  topology. Then we establish norm convergence to conclude the proof.

\medskip\noindent
\underline{Step 1}: Suppose $\hel \varphi= \lambda \varphi$, $n \in \N$ and
$\uf=(f_1,...,f_n)$ with $\feps f_1,..., \feps f_n \in C_0^{\infty}
(\R^3 \backslash \{\fn\})$. Then 
\begin{equation}
\lim_{\alpha \to 0} \pap\Big(\varphi \otimes a^*(\uf) \Omega\Big) =
\varphi \otimes a^*(\uf) \Omega
\end{equation}
and the analog statement holds for $\varphi \otimes \Omega$.

Since $\lambda <0$ there exists $F \in C_0^{\infty}(\R)$ with
$F(\lambda)=1$ and $\mathrm{supp}(F) \subseteq (-\infty,0)$. Moreover
$\pap F(H_{\alpha})= F(H_{\alpha})$ by the hypothesis of
Theorem~\ref{thm2} and because $\Sigma_{\alpha} \geq 0$
for all $\alpha \in \R$. Using, in addition, that
\[a^*_+(\uf) F(H_{\alpha})-a^*(\uf) F(H_0)=\cO(\alpha^{\frac{3}{2}}),\]
which we know from Lemma \ref{lB-1}, we conclude that
\begin{align*}
\pap \Big(\varphi \otimes a^*(\uf) \Omega\Big)&=
\pap a^*(\uf) F(H_0) \varphi \otimes \Omega=
\pap a^*_+(\uf) F(H_{\alpha}) \varphi \otimes \Omega+
\cO(\alpha^{\frac{3}{2}})\nn\\
&=
a^*_+(\uf) F(H_{\alpha}) \varphi \otimes \Omega+ \cO(\alpha^{\frac{3}{2}})=
\varphi \otimes a^*(\uf) \Omega+ \cO(\alpha^{\frac{3}{2}}).
\end{align*}

Step 1 implies that
$$
\lim_{\alpha \to 0} \pap \Phi=\Phi\quad\text{for all}\ \Phi \in \Ran
\one_{pp}(\hel) \otimes \one_{\F}.
$$

\medskip\noindent
\underline{Step 2}: 
$\ds w-\lim_{\alpha \to 0} \pap (\one_c(\hel) \otimes \one_{\F})=0.$ 

Since $\|\pap (\one_c(\hel) \otimes \one_{\F})\| \leq 1$ for all
$\alpha \in \R$ it suffices to show that
\[ \lim_{\alpha \to 0} 
\lkl a^*_+(\uf) \gs , \pap (\one_c(\hel) \otimes \one_{\F}) \varphi \rkl=0 \]
for all $\varphi \in \cH$ and all $\uf=(f_1,...,f_n)$ with 
$\feps f_1,...,\feps f_n \in C_0^{\infty}(\R^3 \backslash \{\fn\})$.
Since $a^*(\uf) \gs \in \Ran \pap$, this follows from
\[a^*_+(\uf) \gs = a^*(\uf) \go + \cO(\alpha^{\frac{3}{2}}), \]
which follows from Lemma~\ref{lB-1} and Lemma~\ref{la6}.

From Step~1 and Step~2 it follows that
\begin{equation}
w-\lim_{\alpha \to 0} \pap = \one_{pp} (\hel) \otimes \one_{\F}. \label{glB-1}
\end{equation}
Since $P_{+}^{\alpha}$ and $\one_{pp} (\hel) \otimes \one_{\F}$ are
orthogonal projectors, we have
\[\|\pap \varphi\|^2 = \lkl \varphi, \pap \varphi \rkl
\stackrel{\alpha\to 0}{\longrightarrow}
\lkl \varphi, \one_{pp} (\hel) \otimes \one_{\F} \varphi \rkl =
\| \one_{pp} (\hel) \otimes \one_{\F} \varphi \|^2. \]
Combined with \eqref{glB-1} this proves the desired strong convergence.
\end{proof}


\section{Space-Time Analysis of the Ionization Process} \label{s4-3}
\setcounter{equation}{0}
The purpose of this section is to connect our result with those
of the previous papers \cite{BKZ,Z}, where expressions for the zeroth
and first non-trivial order of the ionization probability were defined. We transcribe the definitions
from \cite{Z} to our model and prove their equivalence to the
definitions in this paper. Let $F_R:=\one_{\{|\fx| \geq R\}} \otimes \one_{\F}$.
\begin{proposition} \label{pr5-1}
Let $\feps f_1,...,\feps f_N \in C_0^{2}(\R^3 \backslash \{\fn\}, \C^3)$.
Then
\begin{equation}\label{zeroth} 
   \lim_{R \to \infty} \; \limsup_{\alpha \searrow 0} \;
   \sup_{\tau \in \R} \|F_R a^*_{\pm}(\uf_{\tau}) \gs \|^2 = 0 .
\end{equation}
\end{proposition}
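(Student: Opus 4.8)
The plan is to reduce \eqref{zeroth} to the single, $\tau$-uniform approximation
\begin{equation}\label{pr5-approx}
   a^*_{\pm}(\uf_{\tau})\gs = \ph_{el}\otimes a^*(\uf_{\tau})\Omega + \cO(\alpha^{3/2})\qquad(\alpha\to 0).
\end{equation}
Granting \eqref{pr5-approx}, one has $F_R\big(\ph_{el}\otimes a^*(\uf_{\tau})\Omega\big)=\big(\one_{\{|\fx|\ge R\}}\ph_{el}\big)\otimes a^*(\uf_{\tau})\Omega$, whose norm equals $\|\one_{\{|\fx|\ge R\}}\ph_{el}\|\,\|a^*(\uf)\Omega\|$, because $\|a^*(\uf_{\tau})\Omega\|=\|a^*(\uf)\Omega\|$ is $\tau$-independent (the inner products $\sprod{f_{i,\tau}}{f_{j,\tau}}=\sprod{f_i}{f_j}$ do not depend on $\tau$). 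Since $\|F_R\|\le 1$, this yields
$$
   \sup_{\tau\in\R}\big\|F_R a^*_{\pm}(\uf_{\tau})\gs\big\|^2 \le 2\|a^*(\uf)\Omega\|^2\,\big\|\one_{\{|\fx|\ge R\}}\ph_{el}\big\|^2 + \cO(\alpha^3),
$$
so that $\limsup_{\alpha\searrow 0}$ removes the $\cO(\alpha^3)$ term and then $\lim_{R\to\infty}$ removes the remaining one, since $\ph_{el}\in L^2(\R^3)$ --- in fact $\ph_{el}$ decays exponentially, being the ground state eigenfunction of $\hel$ below $\inf\sigma_{ess}(\hel)=0$. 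This is exactly \eqref{zeroth}.

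To establish \eqref{pr5-approx} I would first compare $a^*_{\pm}(\uf_{\tau})\gs$ with $a^*(\uf_{\tau})\gs$ --- crucially \emph{not} with an $H_0$-evolved vector. For fixed $\tau$, the map $t\mapsto e^{i(H_{\alpha}-E_{\alpha})t}a^*(\uf_{t+\tau})\gs$ is $C^1$, equals $a^*(\uf_{\tau})\gs$ at $t=0$, and converges to $a^*_{\pm}(\uf_{\tau})\gs$ as $t\to\pm\infty$ (this is the definition of $a^*_{\pm}$, using $H_{\alpha}\gs=E_{\alpha}\gs$ and $(\uf_{\tau})_t=\uf_{t+\tau}$); its derivative being $ie^{i(H_{\alpha}-E_{\alpha})t}[W,a^*(\uf_{t+\tau})]\gs$, one gets
$$
   a^*_{\pm}(\uf_{\tau})\gs - a^*(\uf_{\tau})\gs = i\int_0^{\pm\infty}e^{i(H_{\alpha}-E_{\alpha})t}\big[W,a^*(\uf_{t+\tau})\big]\gs\,dt .
$$
By Lemma~\ref{l3-1} with $n=2$ (applicable since $\feps f_i\in C_0^2(\R^3\backslash\{\fn\})$), combined with $W=\alpha^{3/2}W^{(1)}+\alpha^3W^{(2)}$ and $\alpha^3\le\alpha^{3/2}$ for $\alpha\le\tilde\alpha\le1$, one has $\|[W,a^*(\uf_s)]\gs\|\le c\alpha^{3/2}(1+s^2)^{-1}$; since $(1+s^2)^{-1}$ is integrable on $\R$, the translated integral obeys $\int_0^{\pm\infty}(1+(t+\tau)^2)^{-1}\,dt\le\int_{\R}(1+u^2)^{-1}\,du=\pi$ for \emph{every} $\tau$, hence $\|a^*_{\pm}(\uf_{\tau})\gs-a^*(\uf_{\tau})\gs\|\le c\pi\alpha^{3/2}$ uniformly in $\tau$. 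It remains to replace $\gs$ by $\go=\ph_{el}\otimes\Omega$: Lemma~\ref{la2} bounds $a^*(\uf_{\tau})(H_f+1)^{-N/2}$ uniformly in $\tau$ and $\alpha\le\tilde\alpha$, while Lemma~\ref{la6} gives $\|(H_f+1)^{N/2}(\gs-\go)\|=\cO(\alpha^{3/2})$, so $\|a^*(\uf_{\tau})(\gs-\go)\|=\cO(\alpha^{3/2})$ uniformly in $\tau$; combined with $a^*(\uf_{\tau})\go=\ph_{el}\otimes a^*(\uf_{\tau})\Omega$ this gives \eqref{pr5-approx}.

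The main obstacle is precisely obtaining the error in \eqref{pr5-approx} uniformly in $\tau$. The naive route --- estimating $\|(e^{-iH_{\alpha}\tau}-e^{-iH_0\tau})a^*(\uf)\go\|$ by Duhamel --- fails, producing an error of order $\alpha^{3/2}|\tau|$, because the creation part $a^*(\fG_{\fx})$ of $W$ does not decay along the free evolution. The device that avoids this is to keep the same Hamiltonian $H_{\alpha}$ on both sides and compare with $a^*(\uf_{\tau})\gs$, in which the electron never moves but sits in the exponentially localized ground state $\gs$; then the relevant Cook remainder is a time-translate of a fixed $L^1$ function of the integration variable, hence bounded independently of $\tau$. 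Everything else --- the commutator decay, the uniform boundedness of the dressed creation operators, and the convergence $\gs\to\go$ --- is provided by Lemmas~\ref{l3-1}, \ref{la2}, and \ref{la6}, already in use elsewhere in the paper.
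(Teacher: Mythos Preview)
Your proof is correct. The first reduction --- from $a^*_{\pm}(\uf_{\tau})\gs$ to $a^*(\uf_{\tau})\gs$ via the Cook integral and Lemma~\ref{l3-1}, with the $\tau$-uniformity coming from the translation invariance of $\int_{\R}(1+u^2)^{-1}du$ --- is exactly what the paper does.

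The second half differs. The paper does \emph{not} replace $\gs$ by $\go$; instead it bounds $\|F_R a^*(\uf_{\tau})\gs\|^2$ directly by Cauchy--Schwarz,
\[
   \|F_R a^*(\uf_{\tau})\gs\|^2 \le \|a^*(\uf_{\tau})^2\gs\|\,\|F_R\gs\| \le \frac{C}{R}\,\||\fx|\gs\|,
\]
and then invokes the $\alpha$-uniform spatial localization of the \emph{interacting} ground state, \eqref{glA-10}. Your route trades this for Lemma~\ref{la6}: you first pass from $\gs$ to $\go=\ph_{el}\otimes\Omega$ at cost $\cO(\alpha^{3/2})$ (after regularizing with $(H_f+1)^{-N/2}$ via Lemma~\ref{la2}), and then factor the tensor product so that only the localization of the \emph{bare} eigenfunction $\ph_{el}$ is needed. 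Both arguments are clean; yours is perhaps more transparent because the product structure $\ph_{el}\otimes a^*(\uf_{\tau})\Omega$ makes the $\tau$-independence and the $R$-decay manifest, while the paper's Cauchy--Schwarz step avoids invoking Lemma~\ref{la6} and stays with $\gs$ throughout. One small remark: Lemma~\ref{la6} is stated for integer powers $m\in\N$, so when $N$ is odd you should round up to $m=\lceil N/2\rceil$ and use $(H_f+1)^{N/2}\le(H_f+1)^{m}$, which is harmless.
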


\noindent\textbf{Remarks.} The left hand side of Equation~\eqref{zeroth} may be
interpreted as the ionization probability to zeroth order in
$\alpha$ \cite{Z}. Proposition~\ref{pr5-1} should be
compared to Theorem 4.1 in \cite{Z}.

\begin{proof}
As in the proof of Theorem \ref{thm4-1}
\[a^*_{\pm}(\uf_{\tau}) \gs -a^*(\uf_{\tau}) \gs =
i \int\limits_0^{\pm \infty} e^{is(H_{\alpha}-E_{\alpha})}
[W,a^*(\uf_{\tau+s})] \gs ds, \]
where the integral is $\cO(\alpha^{\frac{3}{2}})$ in norm, uniformly
in $\tau$, by Lemma \ref{l3-1}. Hence it remains to show that
\begin{equation} \lim_{R \to \infty} \limsup_{\alpha \searrow 0} \sup_{\tau \in \R}
\|F_R a^*(\uf_{\tau}) \gs \|^2 =0. \label{gl5-2} \end{equation}
To this end, we observe that, according to Lemma \ref{la2},
\begin{eqnarray*} \|F_R a^*(\uf_{\tau}) \gs \|^2 &\leq&
\|a^*(\uf_{\tau})^2 \gs \| \|F_R \gs \| \\
&\leq& C_{2N} \prod_{l=1}^N \|f_l\|_{\omega}^2 \, \|(H_f+1)^N \gs \|
\, \|\,|\fx| \gs \| \frac{1}{R}. \end{eqnarray*}
This proves (\ref{gl5-2}), because
$\ds \limsup_{\alpha \to 0+} \|(H_f+1)^N \gs \|$ and
$\ds \limsup_{\alpha \to 0+} \| \, |\fx | \gs \|$
are finite by Lemma \ref{la1} and by \eqref{glA-10}.
\end{proof}

\begin{theorem} \label{thm5-2}
Let $\feps f_1,...,\feps f_N \in C_0^{2}(\R^3 \backslash \{\fn\}, \C^3)$,
suppose $\sigma_{sc}(\hel)=\emptyset$, and let $\tau(\alpha)=\alpha^{-\beta}$ 
for some $\beta \in(0,\frac{3}{2})$. Then
\begin{equation}
   P^{(3)}(\uf)=\lim_{R \to \infty} \limsup_{\alpha \searrow 0} \alpha^{-3}
\Big\|F_R \Big[a^*_-(\uf_{\tau(\alpha)}) \gs -
a^*(\uf_{\tau(\alpha)}) \gs \Big]\Big\|^2. \label{gl4-4n}
\end{equation}
\end{theorem}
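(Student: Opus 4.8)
The plan is to connect the two expressions for $P^{(3)}(\uf)$—the intrinsic one in \eqref{formula1} and the space-time one on the right of \eqref{gl4-4n}—by passing through the reduction formula of Theorem~\ref{thm4-1} with the non-trivial choice $\tau=\tau(\alpha)$. First I would use Theorem~\ref{thm4-1} to write
\[
a^*_-(\uf_{\tau})\gs - a^*(\uf_{\tau})\gs = \Big(a^*_-(\uf_{\tau})\gs - a^*_+(\uf_{\tau})\gs\Big) + \Big(a^*_+(\uf_{\tau})\gs - a^*(\uf_{\tau})\gs\Big),
\]
where the first bracket is $-i\alpha^{3/2}e^{-i(H_0-E_0)\tau}\int 2\fp(s)\ph_{el}\otimes[\fA(\fn,s),a^*(\uf)]\Omega\,ds + \cO(\alpha^{5/2}) + \cO(\alpha^3|\tau|)$, and the second bracket is, by the Cook-type estimate in the proof of Proposition~\ref{p3-5} (applied at time $\tau$), of order $\cO(\alpha^{3/2}(1+|\tau|)^{-n})$ for any $n$, hence negligible after multiplication by $\alpha^{-3/2}$ once $\beta<3/2$ guarantees $\alpha^3|\tau(\alpha)| = \alpha^{3-\beta}\to 0$. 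Thus, modulo errors that vanish after dividing by $\alpha^3$ and taking $\alpha\to 0$,
\[
\alpha^{-3}\Big\|F_R\big[a^*_-(\uf_{\tau})\gs - a^*(\uf_{\tau})\gs\big]\Big\|^2 = \Big\|F_R\, e^{-i(H_0-E_0)\tau}\!\!\int_{-\infty}^{\infty} 2\fp(s)\ph_{el}\otimes[\fA(\fn,s),a^*(\uf)]\Omega\,ds\Big\|^2 + o(1).
\]

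Next I would analyze the leading vector $\Psi_\tau := e^{-i(H_0-E_0)\tau}\Xi$ with $\Xi := \int 2\fp(s)\ph_{el}\otimes[\fA(\fn,s),a^*(\uf)]\Omega\,ds$. Using $[\fA(\fn,s),a^*(\uf)]\Omega = \sum_l\sqrt{m_l}\,\langle\Omega,\fA(\fn,s)a^*(f_l)\Omega\rangle\, a^*(\uf_{(l)})\Omega$ as in the proof of Proposition~\ref{l3.3}, and the fact that $H_0=\hel+H_f$ acts as $\hel$ on the electron factor and via $H_f$ on the photon factors, I would reduce to understanding $\|F_R\, e^{-i(\hel-E_0)\tau}\psi_l\|$ where $\psi_l = \one_{ac}(\hel)\cdot(\text{electron component})$ plus a point-spectrum piece; the bound states contribute exponentially localized vectors, so $F_R$ applied to the $\one_{pp}(\hel)$ part tends to $0$ as $R\to\infty$ uniformly in $\tau$ (by \eqref{exp}-type localization). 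For the absolutely continuous part, the key point is the RAGE/escape phenomenon: $e^{-i\hel\tau}\one_{ac}(\hel)\psi$ escapes to spatial infinity as $\tau\to\infty$, so $\|(1-F_R)e^{-i(\hel-E_0)\tau}\psi_l^{ac}\|\to 0$ as $\tau\to\infty$ for fixed $R$; consequently $\|F_R\, e^{-i(\hel-E_0)\tau}\psi_l^{ac}\| \to \|\psi_l^{ac}\|$. Since $\tau(\alpha)\to\infty$ as $\alpha\searrow 0$, after taking $\limsup_{\alpha\searrow 0}$ and then $R\to\infty$, only the absolutely continuous component survives and with full norm, giving exactly $\|\one_{ac}(\hel)\int 2\fp(s)\ph_{el}\otimes[\fA(\fn,s),a^*(\uf)]\Omega\,ds\|^2 = P^{(3)}(\uf)$ by \eqref{formula1}. (The hypothesis $\sigma_{sc}(\hel)=\emptyset$ is what makes the orthogonal decomposition $\one_{pp}\oplus\one_{ac}$ exhaustive and lets RAGE deliver genuine convergence rather than just Cesàro convergence.)

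The main obstacle I anticipate is making the interchange of limits rigorous and uniform: we have three limiting operations ($\alpha\searrow 0$ with $\tau(\alpha)\to\infty$ coupled, then $\limsup$, then $R\to\infty$) interacting with the error terms of Theorem~\ref{thm4-1}, whose size is $\cO(\alpha^{5/2})+\cO(\alpha^3|\tau(\alpha)|)$—acceptable precisely because $\beta<3/2$—but one must check the implied constants do not depend on $R$ (they do not, since $\|F_R\|\le 1$). The more delicate part is showing that the "escape to infinity" holds with enough uniformity: one needs that the electron component vector $\fx\ph_{el}\cdot(\text{scalar factors})$ lies in a set on which $e^{-i\hel\tau}\one_{ac}(\hel)$ genuinely moves mass past radius $R$, i.e. one uses that $\fx\ph_{el}\in D(|\fx|^2)\cap\Hel$ (Lemma~\ref{lb1}) so the generalized-eigenfunction expansion \eqref{gef1} applies and one can estimate $\|(1-F_R)e^{-i(\hel-E_0)\tau}\one_{ac}(\hel)\psi\|^2$ via a stationary-phase / propagation estimate for $e^{-i\hel\tau}$ on the absolutely continuous subspace, controlled by the long-range asymptotic completeness quoted from \cite{DerezinskyGerard1997a}. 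Once that propagation estimate is in hand, the rest is bookkeeping: combine it with the $\one_{pp}$-localization \eqref{exp}, the Pythagoras splitting of Proposition~\ref{l3.3}, and the error control above to obtain \eqref{gl4-4n}.
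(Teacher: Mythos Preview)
Your proposal is correct and follows essentially the same route as the paper's proof: split into $(a^*_- - a^*_+) + (a^*_+ - a^*)$, control the latter by Proposition~\ref{p3-5}, apply Theorem~\ref{thm4-1} to the former (the condition $\beta<3/2$ is exactly what makes $\alpha^{-3/2}\cR(\tau(\alpha),\alpha)\to 0$), and finish by decomposing the leading vector into $\one_{pp}(\hel)$ and $\one_{ac}(\hel)$ parts and invoking the RAGE theorem. Two minor corrections: with only $C_0^{2}$ regularity, Proposition~\ref{p3-5} gives the decay exponent $n=1$ (not ``any $n$''), which still suffices since $\beta>0$; and you need neither \eqref{exp} (which concerns $H_\alpha$, not $\hel$) nor stationary-phase/propagation estimates or the eigenfunction expansion---the RAGE theorem alone already yields $\sup_\tau\|F_R e^{-i\hel\tau}\one_{pp}(\hel)\phi_l\|\to 0$ as $R\to\infty$ and $\|(1-F_R)e^{-i\hel\tau}\one_{ac}(\hel)\phi_l\|\to 0$ as $\tau\to\infty$, exactly as the paper uses it.
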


\noindent
\textbf{Remarks.} Equation~\eqref{gl4-4n} is to be compared with
the expression defining $Q^{(2)}(A)$ in Equation~(1.9) from \cite{Z}: if
we set $g=\alpha^{3/2}$ and $\tau(g)=\alpha^{-\beta}$
in that equation, then $Q^{(2)}(A)$ coincides with
the right hand side of \eqref{gl4-4n}.

\begin{proof}
{From} Proposition \ref{p3-5} we know that
\[\big\| a^*_+(\uf_{\tau(\alpha)}) \gs - a^*(\uf_{\tau(\alpha)}) \gs \big\|
\leq C\frac{\alpha^{3/2}}{\tau(\alpha)}=C\alpha^{3/2+\beta}, \]
hence we may replace $a^*(\uf_{\tau(\alpha)}) \gs$
by $a^*_+(\uf_{\tau(\alpha)}) \gs$ for the proof of \eqref{gl4-4n}.
From Theorem~\ref{thm4-1} we know that
\begin{eqnarray} 
\lefteqn{\lim_{R \to \infty} \limsup_{\alpha \searrow 0} \alpha^{-3}
\|F_R \Big[a^*_-(\uf_{\tau(\alpha)}) \gs -
a^*_+(\uf_{\tau(\alpha)}) \gs \Big] \|^2} \nn \\
&=& \lim_{R \to \infty} \limsup_{\alpha \searrow 0} \|F_R
e^{-i\tau(\alpha)(H_0-E_0)} \Psi(\uf) \|^2 \nn \\
&=& \lim_{R \to \infty} \limsup_{\tau \to \infty}
\|F_R e^{-i\tau \hel} \otimes \one_{\F} \Psi(\uf) \|^2 \label{gl5-5}
\end{eqnarray}
where 
\begin{equation}
  \Psi(\uf) := \int\limits_{-\infty}^{\infty} 2\fp(s)\ph_{el}\otimes
  [\fA(\fn,s) , a^*(\uf)] \Omega\, ds= \sum_{l=1}^N \phi_l \otimes \eta_l. \label{gl5-6}
\end{equation}
Explicit expressions for $\phi_l$ and $\eta_l$ may be taken from 
the proof of Proposition~\ref{l3.3}, e.g., $\eta_l=a^{*}(\uf_{(l)})\Omega$, but
they are not needed here. From \eqref{gl5-6} it follows that
\begin{equation*}
F_R e^{-i\tau \hel} \Psi(\uf)=
\sum_{l=1}^N \Big[ \one_{\{|\fx| \geq R\}} e^{-i\tau \hel} \phi_l \Big]
\otimes \eta_l 
\end{equation*}
where
\begin{eqnarray}
 \one_{\|\fx\| \geq R\}} e^{-i\tau \hel} \phi_l &=&
(\one-\one_{\{\|\fx\| < R \}}) e^{-i\tau \hel } \one_{ac}(\hel) \phi_l \nn\\
&& +\one_{\{\|\fx\| \geq R\}} e^{-i\tau \hel} \one_{pp}(\hel) \phi_l. 
\end{eqnarray}
By the RAGE Theorem, see \cite{W}, Satz~12.8,
\begin{eqnarray}
\lim_{R \to \infty} \sup_{\tau\to \infty}
\|\one_{\{|\fx| \geq R \}} e^{i\tau \hel} \one_{pp}(\hel) \phi_l \|&=& 0, \\
\lim_{\tau\to \infty}
\|\one_{\{|\fx| < R \}} e^{i\tau \hel} \one_{ac}(\hel) \phi_l \|&=& 0.
\label{gl5-9}
\end{eqnarray}
From (\ref{gl5-5})-(\ref{gl5-9}) it follows, that
\[\lim_{R \to \infty} \limsup_{\tau \to \infty} \|F_R e^{-i\tau H_0} \Psi(\uf) \|^2=
\|(\one_{ac}(\hel) \otimes \one_{\F}) \Psi(\uf) \|^2 = P^{(3)}(\uf), \]
by Equation~\eqref{formula1}
\end{proof}

\begin{appendix}
\section{Uniform  estimates}
\setcounter{equation}{0}
%
%
In this appendix we collect estimates used in the previous
sections. Most of them are well-known for fixed $\alpha$, but this is
not sufficient for us: we need estimates holding uniformly for $\alpha$ in
a neighborhood of $\alpha=0$. This forces us to review some of the
derivations with special attention to the dependences on $\alpha$. 

\begin{lemma} \label{la2}
For every $N\in\N$ there is a finite constant $C_N$ such that for all 
$h_1,...,h_N \in L^2_{\omega}(\R^3\times\{1,2\})$
\begin{eqnarray}
 \|a^*(\uh) (H_f+1)^{-\frac{N}{2}}\| &\leq& C_N \prod_{l=1}^N \|h_l\|_{\omega} \label{gla3a} \\
 \|a^*(h_{1}) \cdots a^*(h_{l-1}) \fA(\alpha \fx)
a^*(h_{l+1}) \cdots a^*(h_{N}) (H_f+1)^{-\frac{N}{2}}\|
&\leq& C_N \prod_{\substack {j=1 \\j \not =l}}^N \|h_j\|_{\omega}
\label{glA-46}
\end{eqnarray}
\end{lemma}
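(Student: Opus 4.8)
The plan is to reduce both inequalities to standard relative bounds of $a(g),a^*(g)$ with respect to $H_f$, the two provisos being that one must work with the $\omega$-norm of \eqref{omega-norm} (since $\fG_\fx$ carries an $|\fk|^{-1/2}$ infrared singularity) and that the constants must be uniform in $\fx$ and $\alpha$. The uniformity will be automatic: $|e^{-i\alpha\fk\cdot\fx}|=1$, so $\|\fG_\fx\|_\omega=\|\fG_\fn\|_\omega$ for every $\fx,\alpha$, and $\|\fG_\fn\|_\omega<\infty$ because $\kappa\in\cS(\R^3)$ makes the integrand $\kappa(\fk)^2|\fk|^{-1}(1+|\fk|^{-1})$ integrable (near $\fk=\fn$ it is $O(|\fk|^{-2})$, which is integrable in three dimensions). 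The workhorse is the weighted one-photon estimate
\[
   \bigl\|(H_f+1)^s a(g)\Psi\bigr\| \;\le\; \|\omega^{-1/2}g\|\,\bigl\|(H_f+1)^{s+1/2}\Psi\bigr\|,\qquad s\ge 0,
\]
which I would prove sector by sector: on the $(n{+}1)$-photon component, Cauchy--Schwarz in the variable of the annihilated photon with weight $\omega(\fk)$ produces the factor $\|\omega^{-1/2}g\|^2$, and then the elementary bound $(1+\sum_i\omega(k_i))^{2s}\le(1+\omega(k)+\sum_i\omega(k_i))^{2s}$ together with the symmetry of $\Psi^{(n+1)}$ turns the remaining integral into $\langle\Psi,H_f(H_f+1)^{2s}\Psi\rangle\le\|(H_f+1)^{s+1/2}\Psi\|^2$. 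Iterating along a string of $p$ annihilation operators (each costs one half-power of $H_f{+}1$) and using $(H_f+1)^{-1}\le1$ gives
\[
   \bigl\|a(g_1)\cdots a(g_p)(H_f+1)^{-q/2}\bigr\|\;\le\;\prod_{l=1}^p\|\omega^{-1/2}g_l\|\;\le\;\prod_{l=1}^p\|g_l\|_\omega\qquad(q\ge p);
\]
combined with $\|a^*(g)\Psi\|^2=\|a(g)\Psi\|^2+\|g\|^2\|\Psi\|^2$ this is the only one-photon input I need.

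For \eqref{gla3a} I would write $\|a^*(\uh)(H_f+1)^{-N/2}\Psi\|^2=\langle(H_f+1)^{-N/2}\Psi,\,a(h_N)\cdots a(h_1)a^*(h_1)\cdots a^*(h_N)(H_f+1)^{-N/2}\Psi\rangle$, normal-order the $2N$ operators via $[a(h_i),a^*(h_j)]=\langle h_i,h_j\rangle$, and estimate the finitely many resulting terms separately. A generic term is $\bigl(\prod_{(i,j)\in P}\langle h_i,h_j\rangle\bigr)\,a^*(\cdots)a(\cdots)$, where $P$ is a partial pairing and $q:=N-|P|$ operators of each kind survive; moving the $q$ creation operators onto the left vector and applying Cauchy--Schwarz bounds it by a product of two norms of the form $\|a(\cdots)(H_f+1)^{-N/2}\Psi\|$ with $q$ annihilation factors, each of which is $\le(\prod\|\omega^{-1/2}h\|)\,\|\Psi\|$ by the annihilation-string estimate — here $q\le N$ is exactly what permits the surplus half-powers of $H_f{+}1$ to be discarded. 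Collecting factors, each index of $\{1,\dots,N\}$ contributes either a $\|h\|$ (if paired) or a $\|\omega^{-1/2}h\|$ (if not), and both are $\le\|h\|_\omega$; hence every term is $\le\bigl(\prod_l\|h_l\|_\omega\bigr)^2\|\Psi\|^2$, and summing over the $C_N$ pairings gives \eqref{gla3a}. The key structural point — and the main obstacle if one proceeds naively — is that this argument never commutes a (fractional) power of $H_f$ past a creation operator: such a commutator would generate $a^*(\omega h)$, and $\omega h\notin L^2$ in general for $h\in L^2_\omega$, so the naive induction on string length is blocked there.

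For \eqref{glA-46} I would use $\fA(\alpha\fx)=a(\fG_\fx)+a^*(\fG_\fx)$ together with the uniform identity $\|\fG_\fx\|_\omega=\|\fG_\fn\|_\omega$. The Fock-space estimates above hold fibrewise in $\fx$, and $[\fA(\alpha\fx),a^*(h)]=\langle\fG_\fx,h\rangle$ is multiplication by a function of $\fx$ of sup-norm $\le\|\fG_\fn\|\,\|h\|\le\|\fG_\fn\|_\omega\|h\|_\omega$. Commuting $\fA(\alpha\fx)$ to the right-hand end of the creation string then produces, besides scalar-correction terms (each a bounded multiplication operator times a length-$(N{-}2)$ creation string against $(H_f+1)^{-N/2}$, controlled by \eqref{gla3a}, the commutator factor $\|h_j\|$ reinstating the missing $\|h_j\|_\omega$), the main term $(\text{creation string in the }h_j,\ j\ne l)\,\fA(\alpha\fx)(H_f+1)^{-N/2}$; its $a^*(\fG_\fx)$-part is a length-$N$ creation string with $\fG_\fx$ as one of the functions and is handled by \eqref{gla3a}, while its $a(\fG_\fx)$-part is a length-$(N{-}1)$ creation string followed by a single annihilation operator, handled by precisely the Wick-ordering argument of the preceding paragraph (the normal-ordered pieces are again annihilation strings of length $\le N$ against $(H_f+1)^{-N/2}$). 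Everything is uniform in $\fx$ and $\alpha$ because $\fG_\fx$ enters the bounds only through $\|\fG_\fn\|$ and $\|\fG_\fn\|_\omega$; the remaining work here is pure bookkeeping — making sure each surviving index $j\ne l$ is charged exactly one factor $\|h_j\|_\omega$ while $\fG_\fx$ only ever contributes the $\fx$- and $\alpha$-independent constant $\|\fG_\fn\|_\omega$.
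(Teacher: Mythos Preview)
Your argument is correct and is essentially a self-contained reconstruction of what the paper outsources to a citation. The paper's own proof consists of two sentences: both inequalities are declared to follow from Lemma~17 in \cite{FGS1}, together with the observation that $\sup_{\fx\in\R^3}\|\fG_\fx\|_\omega<\infty$. You have supplied precisely the content behind that citation --- the weighted one-photon estimate $\|(H_f+1)^s a(g)\Psi\|\le\|\omega^{-1/2}g\|\,\|(H_f+1)^{s+1/2}\Psi\|$, its iteration to annihilation strings, and the Wick-ordering argument for products of creation operators --- and you have isolated the same uniformity mechanism, namely $\|\fG_\fx\|_\omega=\|\fG_\fn\|_\omega$ independently of $\fx$ and $\alpha$. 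So the two proofs coincide, yours being the explicit version; your remark about avoiding commutation of $(H_f+1)^s$ past $a^*(h)$ (which would produce $a^*(\omega h)$ with $\omega h\notin L^2$ in general) is a useful clarification of why the Wick route is the right one for $L^2_\omega$-data.
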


\begin{proof}
Both, \eqref{gla3a} and \eqref{glA-46} follow from Lemma~17 in
\cite{FGS1}. We recall that 
$\fA(\alpha \fx)=a^*(\fG_{\fx})+a(\fG_{\fx})$
and we note that $\sup_{\fx \in \R^3} \|\fG_{\fx}\|_{\omega}<\infty$.
\end{proof}


\begin{proposition}[\cite{FGSi}] \label{pa2}
For every $\lambda < e_1$ there exists a constant $\alpha_{\lambda}>0$, such
that for all $n \in \N$
\begin{equation} 
\sup_{\alpha \leq\alpha_{\lambda}}\big\| |\fx|^n \one_{(-\infty,\lambda]}(H_{\alpha}) \big\| < \infty. 
\end{equation}
\end{proposition}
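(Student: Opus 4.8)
The plan is to obtain the estimate as a uniform-in-$\alpha$ refinement of the standard (Agmon-type) localization of spectral subspaces below the ionization threshold; the localization argument itself is classical, so the only real point is to keep every constant independent of $\alpha$ for $\alpha$ small.

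First I would record the two inputs that make the uniformity possible. (a) A uniform threshold gap: since $H_\alpha-V=(\fp+\alpha^{3/2}\fA(\alpha\fx))^2+H_f\ge 0$, one has $\Sigma_\alpha=\inf\sigma(H_\alpha-V)\ge 0$ for every $\alpha\ge 0$; on the other hand the Hypotheses give $\sigma_{ess}(\hel)=[0,\infty)$, hence $0\in\sigma(\hel)$, and since $E_0<0$ this forces $e_1=\inf(\sigma(\hel)\setminus\{E_0\})\le 0$. Therefore $\lambda<e_1\le 0\le\Sigma_\alpha$, i.e. $\Sigma_\alpha-\lambda\ge|\lambda|>0$ uniformly in $\alpha$. (b) The potential $V$ is $\alpha$-independent and vanishes at infinity, so for every $\eta>0$ there is an $\alpha$-independent radius $R$ with $|V|<\eta$ on $\{|\fx|>R\}$; combined with $H_\alpha-V\ge\Sigma_\alpha$ this yields the far-region lower bound $\langle f\psi,(H_\alpha-\lambda)f\psi\rangle\ge(|\lambda|-\eta)\|f\psi\|^2$ for every $\psi$ in the form domain and every $f$ supported in $\{|\fx|>R\}$, with constants not depending on $\alpha\ge 0$.

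Then I would run the usual weighted-estimate scheme. Fix $n$, let $g_m$ be real cutoff functions increasing to $n\log\langle\fx\rangle$ with $e^{g_m}$ bounded, $\Delta g_m$ bounded, and $|\nabla g_m|^2\le n^2|\fx|^2\langle\fx\rangle^{-4}$ (bounded, and $\to 0$ at infinity), all uniformly in $m$. For $\psi\in\Ran\one_{(-\infty,\lambda]}(H_\alpha)$ one has $\langle\psi,(H_\alpha-\lambda)\psi\rangle\le 0$. Choosing $\eta$ (hence $R$) so that $|\lambda|-2\eta>0$ and splitting by an IMS partition of unity $\chi_R^2+\bar\chi_R^2=1$ adapted to $\{|\fx|<2R\}$ --- whose localization error for $H_\alpha$ is the $\alpha$-independent bounded operator $-|\nabla\chi_R|^2-|\nabla\bar\chi_R|^2$, the double commutators seeing only the $(\fp+\alpha^{3/2}\fA)^2$ term --- one absorbs the cost $|\nabla g_m|^2$ of the weight in the region $\{|\fx|>R\}$ by means of (a)--(b), while on $\{|\fx|<2R\}$ the weight is just the bounded multiplier $\chi_R e^{g_m}\le\langle 2R\rangle^n$ and the residual first-order commutator errors are controlled by an $\alpha$-uniform bound on $\|\chi_R(H_\alpha+c)^{-1}\chi_R\|$. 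This produces $\|e^{g_m}\psi\|\le C\|\psi\|$ with $C$ independent of $m$ and of $\alpha\le\alpha_\lambda$; letting $m\to\infty$ by monotone convergence gives $\|\langle\fx\rangle^n\one_{(-\infty,\lambda]}(H_\alpha)\|\le C$ uniformly in $\alpha\le\alpha_\lambda$, which is the assertion (equivalently one may carry the exponential weight $e^{\beta\langle\fx\rangle}$ with $\beta^2<|\lambda|$ and then use $|\fx|^n\le C_{n,\beta}e^{\beta\langle\fx\rangle}$). The threshold $\alpha_\lambda$ enters only to keep the perturbative near-region bounds $\alpha$-independent; no lower bound on $\lambda$ is needed, as $\one_{(-\infty,\lambda]}(H_\alpha)=0$ whenever $\lambda<E_\alpha$.

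The step I expect to be the main obstacle is exactly the $\alpha$-uniformity of the near-region estimate: one must bound $\|\chi_R\psi\|$ and the commutators $[(\fp+\alpha^{3/2}\fA(\alpha\fx))^2,\chi_R]$ applied to $e^{g_m}\psi$ by $C\|\psi\|$ with $C$ independent of $\alpha$, although $V\in L^2_{\mathrm{loc}}$ only and the kinetic operator carries the $\alpha$-dependent field. This is where one uses that $\fA(\alpha\fx)(H_f+1)^{-1/2}$ is bounded uniformly in $\alpha$ (since $\sup_\fx\|\fG_\fx\|_\omega<\infty$), that $W=\alpha^{3/2}W^{(1)}+\alpha^3 W^{(2)}$ has $H_0$-relative bounds of order $\cO(\alpha^{3/2})$ so that $H_\alpha$ and $H_0$ have comparable resolvents with $\alpha$-independent constants, and that $E_\alpha$ stays in a bounded set; the remaining ingredients --- the IMS formula, the bounded approximants $g_m$, and the final passage to $|\fx|^n$ --- are routine and carry over verbatim from the fixed-$\alpha$ proof (cf. \cite{FGSi,Gri2004}).
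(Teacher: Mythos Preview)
The paper gives no proof of this proposition; it is simply quoted from \cite{FGSi}. Your Agmon-type sketch is essentially the right argument, and the key observation --- that $\lambda<e_1\le 0\le\Sigma_\alpha$ for every $\alpha\ge 0$, so the gap $\Sigma_\alpha-\lambda\ge|\lambda|>0$ is uniform in $\alpha$ --- is exactly what makes the standard proof of \eqref{exp} go through with $\alpha$-independent constants.

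Two minor remarks on the details. First, the IMS error and the weight-conjugation error are genuinely $\alpha$-independent, since $[\chi_R,\fA(\alpha\fx)]=0$ and the exact identity
\[
\langle F\psi,H_\alpha F\psi\rangle=\Re\langle F^2\psi,H_\alpha\psi\rangle+\|(\nabla F)\psi\|^2
\]
holds for any bounded real $F=F(\fx)$ with bounded gradient; there are no ``residual first-order commutator errors'' left over, so no resolvent comparison is needed in the near region. Second, the only place $\alpha$ actually enters the constants is the trivial near-region lower bound $H_\alpha\ge E_\alpha$, and all that is required is $\sup_{\alpha\le\alpha_\lambda}|E_\alpha|<\infty$, which is immediate from \eqref{glA-3}. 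So your diagnosis of where $\alpha_\lambda$ comes in is correct, but the cure is simpler than you suggest. The exponential-weight variant you mention in parentheses ($\beta^2<|\lambda|$, then $|\fx|^n\le C_{n,\beta}e^{\beta\langle\fx\rangle}$) is in fact the cleaner way to organize the computation.
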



\begin{proposition}\label{pA-4}
There exists an $\tilde{\alpha} >0$, such that:
\begin{enumerate}
\item[a)] For all $\alpha\leq\tilde{\alpha}$
\begin{equation} E_{\alpha}:=\inf \sigma(H_{\alpha}) \end{equation}
is a simple eigenvalue of $H_{\alpha}$. In the following $\gs$ denotes
the unique normalized ground state of
$H_{\alpha}$ whose phase is determined by $\lkl\gs, \go \rkl \geq 0$.
\item[b)] For every $n \in \N$,
\begin{equation} \sup_{\alpha \leq \tilde{\alpha}}\|\,|\fx|^n\gs\|<\infty.
\label{glA-10} \end{equation}
\item[c)] There exists a finite constant $C$, such that for all
$\alpha \leq\tilde{\alpha}$ and all $\fk \in \R^3 \backslash \{\fn\}$
\begin{eqnarray}
\|a(k) \gs \| &\leq & \alpha^{\frac{3}{2}} C
\frac{|\kappa(\fk)|}{\sqrt{|\fk|}}(1+\alpha |\fk|), \label{glA-26}\\
|E_0-E_{\alpha}| & \leq & \alpha^{\frac{3}{2}}C,\label{glA-3}\\
\|\gs-\go\| &\leq& \alpha^{\frac{3}{2}} C. \label{glA-4n} 
\end{eqnarray} 
\item[d)]
For every $n \in \N$,
\begin{eqnarray}
\sup_{\alpha \leq \tilde{\alpha}} \|[H_f^{n-1},H_{\alpha}](H_{\alpha}+i)^{-n+1} \|
& < & \infty, \label{glA-47c} \\
\sup_{\alpha \leq \tilde{\alpha}} \|H_f^n (H_{\alpha}+i)^{-n} \|
& < & \infty, \label{glA-47a} \\
\sup_{\alpha \leq \tilde{\alpha}} \| \hel (H_{\alpha}+i)^{-1} \|
&<& \infty . \label{glA-47n}
\end{eqnarray}
\end{enumerate}
\end{proposition}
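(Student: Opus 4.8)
\emph{Overview.} Part a) is the one ingredient we do not reprove from scratch: the existence of $\gs$ for every $\alpha\ge 0$ is established in \cite{Hiroshima2002,GLL}, and the simplicity of $E_\alpha$ for small coupling in \cite{GLL,BFS2} (via a positivity/Perron--Frobenius argument). Inspecting those proofs shows that the admissible coupling range contains a fixed neighbourhood of $\alpha=0$, and we let $\tilde\alpha$ lie in it, to be shrunk finitely often below. This is also the main obstacle: since $\sigma(H_\alpha)=[E_\alpha,\infty)$ there is \emph{no} spectral gap above the ground-state energy, so analytic perturbation theory cannot reach $\gs$, and the only gap available in the rest of the argument is that of the atomic operator $\hel$ at $E_0$. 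The phase convention $\lkl\gs,\go\rkl\ge 0$ is meaningful once $\lkl\gs,\go\rkl\neq 0$, which we obtain as a by-product of \eqref{glA-4n}; everything else is an $\alpha$-uniform version of standard estimates, the recurring point being that the relative bounds of $\fA(\alpha\fx)$ and of $V$ carry $\alpha$-independent constants. We prove the items in the order: part d), part b), then the estimates of part c) as \eqref{glA-3}, \eqref{glA-26}, \eqref{glA-4n}.

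\emph{Part d) and $E_\alpha\to E_0$.} Writing $(\fp+\alpha^{3/2}\fA(\alpha\fx))^{2}\ge(1-\eps)\fp^{2}-\eps^{-1}\alpha^{3}\fA(\alpha\fx)^{2}$ and using $\fA(\alpha\fx)^{2}\le C(H_f+1)$ with $C$ independent of $\alpha$ (since $\sup_\fx\|\fG_\fx\|_\omega$ is finite and $\alpha$-free) and $\fp^{2}\le C_1\hel+C_2$, one gets for $\alpha\le\tilde\alpha$ the two-sided comparison $\tfrac12 H_f\le H_\alpha+C$, $\hel\le C(H_\alpha+1)$, and $H_\alpha\ge(1-\eps C_1)\hel-C$. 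The first two give \eqref{glA-47n} and \eqref{glA-47a} for $n=1$. Since $H_f$ commutes with $H_0$, $[H_f^{\,n-1},H_\alpha]=[H_f^{\,n-1},W]$ is a finite sum of terms $\alpha^{3/2}\fp\cdot a^{\#}(\omega^{j}\fG_\fx)H_f^{\,n-1-j}$ and $\alpha^{3}a^{\#}(\omega^{j}\fG_\fx)a^{\#}(\fG_\fx)H_f^{\,n-1-j}$, each dominated on the right by $(H_\alpha+i)^{n-1}$ through the $n=1$ bounds and $\fp^{2}\le C(H_\alpha+1)$; this is \eqref{glA-47c}, whence \eqref{glA-47a} follows by induction. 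The comparison also yields $E_\alpha\to E_0$: Rayleigh--Ritz gives $E_\alpha\le\lkl\go,H_\alpha\go\rkl=E_0+\alpha^{3}\|\fG_\fn\|^{2}$ (the field-linear part of $W$ sends $\go$ to a one-photon vector, so $\lkl\go,W\go\rkl=\alpha^{3}\lkl\go,\fA(\alpha\fx)^{2}\go\rkl$), while $H_\alpha\ge(1-\eps C_1)E_0-C$ for $\alpha$ small, so $\liminf_{\alpha\to 0}E_\alpha\ge E_0$ after $\eps\to 0$.

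\emph{Parts b), \eqref{glA-3}, \eqref{glA-26}.} Fix $\lambda$ with $E_0<\lambda<e_1$ and shrink $\tilde\alpha$ below the $\alpha_\lambda$ of Proposition~\ref{pa2}; by $E_\alpha\to E_0$ we have $E_\alpha<\lambda$, so $\gs=\one_{(-\infty,\lambda]}(H_\alpha)\gs$ and \eqref{glA-10} is immediate from Proposition~\ref{pa2} (which also gives $\|\fp\gs\|\le C$ via d)). Next, $\fp\cdot a^{\#}(\fG_\fx)=a^{\#}(\fG_\fx)\cdot\fp$ by $\feps\perp\fk$, so Lemma~\ref{la2} and $\|\fp\gs\|\le C$ give $\|2\fp\cdot\fA(\alpha\fx)\gs\|=\cO(1)$ uniformly, hence $\|W\gs\|=\cO(\alpha^{3/2})$; inserting this into $E_\alpha=E_0+(\lkl\gs,H_0\gs\rkl-E_0)+\lkl\gs,W\gs\rkl$ together with $\lkl\gs,H_0\gs\rkl\ge E_0$ yields \eqref{glA-3}. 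For \eqref{glA-26} apply $a(\fk)$ to $(H_\alpha-E_\alpha)\gs=0$; from $[a(\fk),H_\alpha]=\omega(\fk)a(\fk)+2\alpha^{3/2}\fG_\fx(\fk)\cdot(\fp+\alpha^{3/2}\fA(\alpha\fx))$ (the $[\fp,\fG_\fx(\fk)]$-contribution cancels by $\feps(\fk,\lambda)\perp\fk$) one gets $a(\fk)\gs=-2\alpha^{3/2}(H_\alpha-E_\alpha+\omega(\fk))^{-1}\fG_\fx(\fk)\cdot(\fp+\alpha^{3/2}\fA(\alpha\fx))\gs$, and \eqref{glA-26} follows from resolvent estimates together with $\|(\fp+\alpha^{3/2}\fA(\alpha\fx))\gs\|^{2}=\lkl\gs,(H_\alpha-V-H_f)\gs\rkl\le C$ uniformly (from $E_\alpha\to E_0$, \eqref{glA-10}, and d)). The infrared rate $|\fk|^{-1/2}$, rather than the naive $|\fk|^{-3/2}$, is the delicate point here; it is exactly what makes $\lkl\gs,N_f\gs\rkl=\int\|a(\fk)\gs\|^{2}\,d^{3}k=\cO(\alpha^{3})$.

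\emph{Part \eqref{glA-4n}.} Write $\gs=c_\alpha\go+r_\alpha$ with $c_\alpha=\lkl\go,\gs\rkl$ and $r_\alpha\perp\go$. Then $\|r_\alpha^{(\ge 1)}\|^{2}\le\lkl\gs,N_f\gs\rkl=\cO(\alpha^{3})$, while the zero-photon part $r_\alpha^{(0)}$ lies in the orthogonal complement of $\ph_{el}\otimes\Omega$ inside the zero-photon sector, where $H_0-E_0=\hel-E_0\ge e_1-E_0>0$ (the gap we may use is that of $\hel$, since $\sigma(H_0)=[E_0,\infty)$). From $(H_0-E_0)\gs=-(W-(E_\alpha-E_0))\gs=:\chi$ with $\|\chi\|=\cO(\alpha^{3/2})$ and $(H_0-E_0)r_\alpha=\chi$, splitting into photon sectors gives $(e_1-E_0)\|r_\alpha^{(0)}\|^{2}\le\lkl r_\alpha^{(0)},(H_0-E_0)r_\alpha^{(0)}\rkl\le\lkl r_\alpha,\chi\rkl\le\|r_\alpha\|\,\cO(\alpha^{3/2})$, which together with $\|r_\alpha\|^{2}=\|r_\alpha^{(0)}\|^{2}+\cO(\alpha^{3})$ forces $\|r_\alpha\|=\cO(\alpha^{3/2})$, hence $|c_\alpha|^{2}=1-\cO(\alpha^{3})$. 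Running this with an arbitrary phase of $\gs$ shows $\lkl\gs,\go\rkl\neq 0$ for $\alpha$ small, so the phase convention of part a) applies and makes $c_\alpha=1-\cO(\alpha^{3})$ real; then $\|\gs-\go\|^{2}=(c_\alpha-1)^{2}+\|r_\alpha\|^{2}=\cO(\alpha^{3})$, which is \eqref{glA-4n}.
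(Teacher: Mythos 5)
Your architecture is more self-contained than the paper's: the paper outsources all of part c) to \cite{FGSi} (Lemma~20, Lemma~22 and Proposition~19 there) and derives simplicity in a) \emph{from} \eqref{glA-4n}, proving only part d) from scratch via $\|W(H_0+i)^{-1}\|=\cO(\alpha^{3/2})$ and a Neumann series. Your variational sandwich for $E_\alpha$, the pull-through formula for $a(\fk)\gs$, and the overlap argument $\gs=c_\alpha\go+r_\alpha$ with the $N_f$-bound controlling $r_\alpha^{(\geq 1)}$ and the gap of $\hel$ controlling $r_\alpha^{(0)}$ are all the right ingredients and the logic of the last step is sound. But there is one genuine gap on which your whole chain for \eqref{glA-4n} hangs: the proof of \eqref{glA-26}. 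The argument you actually give --- $\|(H_\alpha-E_\alpha+\omega(\fk))^{-1}\|\leq\omega(\fk)^{-1}$ times $|\fG_\fx(\fk)|\sim|\kappa(\fk)|\,|\fk|^{-1/2}$ times a uniformly bounded vector --- produces exactly the ``naive'' rate $|\kappa(\fk)|\,|\fk|^{-3/2}$ that you yourself flag as insufficient. With that rate $\int\|a(\fk)\gs\|^2 d^3k$ diverges logarithmically at $\fk=\fn$, so $\lkl\gs,N_f\gs\rkl=\cO(\alpha^3)$ does not follow and your estimate of $r_\alpha^{(\geq1)}$ collapses. The missing idea is the infrared improvement: use $2(\fp+\alpha^{3/2}\fA(\alpha\fx))=i[H_\alpha,\fx]$, commute $\fG_\fx(\fk)$ through, and integrate the commutator by parts against $(H_\alpha-E_\alpha+\omega(\fk))^{-1}$, which converts $[H_\alpha-E_\alpha,\,\cdot\,]$ into $(H_\alpha-E_\alpha+\omega)-\omega$ and cancels one power of $\omega(\fk)^{-1}$; the residual terms from differentiating $e^{-i\alpha\fk\cdot\fx}$ are the source of the factor $(1+\alpha|\fk|)$ in \eqref{glA-26}. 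This is precisely the content of \cite{GLL} and of Lemma~20 in \cite{FGSi}, which the paper cites instead of reproving.

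A second, smaller but real, flaw is in part d): the inequalities $\tfrac12 H_f\leq H_\alpha+C$ and $\hel\leq C(H_\alpha+1)$ are \emph{quadratic form} bounds, and a form bound $A\leq B$ only yields $\|A^{1/2}(B+i)^{-1/2}\|<\infty$, not the operator bounds $\|H_f(H_\alpha+i)^{-1}\|<\infty$ and $\|\hel(H_\alpha+i)^{-1}\|<\infty$ claimed in \eqref{glA-47a} and \eqref{glA-47n}; the same conflation recurs when you ``dominate'' the commutator terms in \eqref{glA-47c}. The repair is the paper's route: establish the \emph{operator} bound $\|W(H_0+i)^{-1}\|=\cO(\alpha^{3/2})$ from Lemma~\ref{la2} and the boundedness of $(H_f+1)^{1/2}\fp(H_0+i)^{-1}$, invert $1+W(H_0+i)^{-1}$ by a Neumann series to get $\sup_{\alpha\leq\tilde\alpha}\|(H_0+i)(H_\alpha+i)^{-1}\|\leq 2$, and then run your induction on $n$. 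With these two repairs your proof goes through and is, apart from them, a legitimate alternative to the paper's citation-based treatment of part c).
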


\noindent\textbf{Remark.} Boundedness of
$[H_f^{n-1},H](H+i)^{-n}$ and $H_f^n (H+i)^{-n}$
has previously been established in \cite{FGS1},
Lemma~5, for a class of Hamiltonians $H$ that includes
$H_\alpha$. Yet, that results does not imply \eqref{glA-47c} and
\eqref{glA-47a}, and second, its proof is much more complicated than
the proof of \eqref{glA-47c} and
\eqref{glA-47a}, because $H$ in \cite{FGS1} is defined in terms of a
Friedrichs' extension.

\begin{proof}
That $E_{\alpha}=\inf \sigma(H_{\alpha})$ is an eigenvalue of
$H_{\alpha}$, for small $\alpha$, was first shown in \cite{BFS2}. Its
simplicity follows from \eqref{glA-4n}, which hold for \emph{every}
normalized ground state vector $\gs$ that satisfies the phase condition $\lkl\gs, \go \rkl \geq 0$. A
proof of \eqref{glA-4n} may be found, e.g., in \cite{FGSi}, Proposition~19, Steps 4
and 5. A weaker form of \eqref{glA-26} is given in Lemma~20 of
\cite{FGSi}, but the proof there actually shows \eqref{glA-26}. Estimate
\eqref{glA-3} follows from Lemma~22 in \cite{FGSi} by choosing the
infrared cutoff in this lemma larger than the UV-cutoff. Finally, \eqref{glA-10} is a consequence of Proposition
\ref{pa2} and \eqref{glA-3}.

To prove (d) we set $R_0:=(H_0+i)^{-1}$ and $R_\alpha:=(H_\alpha+i)^{-1}$. It is a
simple exercise, using \eqref{gla3a} and the boundedness of
$(H_f+1)^{1/2}\fp R_0$, to show that $\|WR_0\|=\cO(\alpha^{3/2})$ as
$\alpha\to 0$. Hence we may assume that $\sup_{\alpha\leq \tilde\alpha}\|WR_0\|\leq 1/2$ after
making $\tilde\alpha$ smaller, if necessary. It follows that 
\begin{equation}\label{glA-47}
  \|(H_0+i)R_\alpha\| = \|(1+WR_0)^{-1}\|\leq (1-\|WR_0\|)^{-1}\leq 2
\end{equation}
for all $\alpha\leq\tilde\alpha$. Since $\hel R_0$ and $H_f R_0$ are
bounded operators, we have thus proven statement (d) for $n=1$,
\eqref{glA-47c} being trivial in this case. We now proceed by
induction, assuming that \eqref{glA-47c} and \eqref{glA-47a} hold true
for all positive integers smaller or equal to a given $n\geq 1$. To prove
\eqref{glA-47a} for $n$ replaced by $(n+1)$ we use that 
\begin{eqnarray}
   [H_f^n, H_{\alpha}]R_{\alpha}^n &=& \sum_{l=1}^n \binom{n}{l}
   \mathrm{ad}_{H_f}^l(W)H_f^{n-l}R_\alpha^n\nonumber \\
   &=&  \sum_{l=1}^n \binom{n}{l}
   \mathrm{ad}_{H_f}^l(W)R_\alpha \left(H_f^{n-l}R_\alpha^{n-1}-[H_f^{n-l},H_\alpha]R_\alpha^n\right)\label{glA-48}
\end{eqnarray} 
where $\sup_{\alpha\leq \tilde\alpha}\|
\mathrm{ad}_{H_f}^l(W)R_\alpha\|<\infty$ by \eqref{glA-47}, by
explicit formulas for $\mathrm{ad}_{H_f}^l(W)$ and by the arguments
above proving that $\|WR_0\|=\cO(\alpha^{3/2})$. Hence $\sup_{\alpha \leq \tilde{\alpha}} \|[H_f^n,H_\alpha]R_{\alpha}^n\|<\infty$
follows from \eqref{glA-48} and from the induction hypothesis. Statement
\eqref{glA-47a} with $n$ replaced by $n+1$ now follows from
$H_f^{n+1}R_\alpha^{n+1} = (H_f R_\alpha)(H_f^{n}R_\alpha^{n}) -
H_fR_\alpha[H_f^n,H_\alpha]R_\alpha^{n+1}$, from the induction
hypothesis, and from \eqref{glA-47c} with $n$ replaced by $n+1$, which
we have just established.
\end{proof}

\begin{lemma} \label{la1}
For all $l,m \in \N$:
\begin{eqnarray}
\sup_{\alpha \leq \tilde{\alpha}} \|(H_f+1)^m \gs\| & < & \infty, 
\label{glAA-24} \\
\sup_{\alpha \leq \tilde{\alpha}} \|(H_f+1)^m \lkl\fx\rkl^l \gs \|
&<& \infty, \label{glAA-25}\\
\sup_{\alpha \leq \tilde{\alpha}} \|\fp^2 \gs \| &<& \infty,
\label{glAA-26} \\
\sup_{\alpha \leq \tilde{\alpha}} \|(H_f+1)^m \hel\gs \|
&<& \infty, \label{glAA-27}\\
\sup_{\alpha \leq \tilde{\alpha}} \|(H_f+1)^m \lkl \fx \rkl^l \fp \gs \|
&<& \infty.  \label{glAA-28}
\end{eqnarray}
\end{lemma}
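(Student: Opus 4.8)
The plan is to prove the five estimates essentially in the order \eqref{glAA-24}, \eqref{glAA-26}, \eqref{glAA-28} with $l=0$, \eqref{glAA-27}, and finally \eqref{glAA-25} together with \eqref{glAA-28} for general $l$. The engine is the ground state equation $H_\alpha\gs=E_\alpha\gs$ -- used to trade powers of $H_f$ for $H_f$ acting on lower-order vectors plus the interaction $W=\alpha^{\frac32}W^{(1)}+\alpha^3W^{(2)}$ -- the resolvent bounds of Proposition~\ref{pA-4}(d), Lemma~\ref{la2}, and the uniform localization estimate \eqref{glA-10} (which rests on Proposition~\ref{pa2}); throughout, all quantities in \eqref{glAA-24}--\eqref{glAA-28} are a priori finite for each fixed $\alpha\le\tilde\alpha$ by standard ground state regularity and the exponential decay \eqref{exp}, which is what permits the absorption arguments below. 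For \eqref{glAA-24}: $(H_\alpha+i)\gs=(E_\alpha+i)\gs$ gives $\gs=(E_\alpha+i)^m(H_\alpha+i)^{-m}\gs$, hence $(H_f+1)^m\gs=(E_\alpha+i)^m(H_f+1)^m(H_\alpha+i)^{-m}\gs$; expanding $(H_f+1)^m$ and using $\|H_f^j(H_\alpha+i)^{-j}\|<\infty$ (Proposition~\ref{pA-4}(d)) with $\|(H_\alpha+i)^{-1}\|\le1$ shows $(H_f+1)^m(H_\alpha+i)^{-m}$ is bounded uniformly, and $|E_\alpha+i|$ is bounded by \eqref{glA-3}. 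For \eqref{glAA-26}, $\|\hel\gs\|=|E_\alpha+i|\,\|\hel(H_\alpha+i)^{-1}\gs\|$ is uniform by \eqref{glA-47n}, and since $V$ is infinitesimally $(-\Delta)$-bounded, $\|\fp^2\gs\|\le\|\hel\gs\|+\|V\gs\|\le\|\hel\gs\|+\tfrac12\|\fp^2\gs\|+C$, from which $\|\fp^2\gs\|$ is uniformly bounded.

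For \eqref{glAA-28} with $l=0$, fix $m$ and write $\fp=\pi-\alpha^{3/2}\fA(\alpha\fx)$ with $\pi:=\fp+\alpha^{3/2}\fA(\alpha\fx)$. Then $(H_f+1)^m\fp_j\gs$ splits into $-\alpha^{3/2}(H_f+1)^m\fA_j(\alpha\fx)\gs$, bounded via Lemma~\ref{la2} by $\alpha^{3/2}\cdot\mathrm{const}\cdot\|(H_f+1)^{m+1/2}\gs\|$ (controlled by \eqref{glAA-24}), plus $(H_f+1)^m\pi_j\gs$; commuting $(H_f+1)^m$ past $\pi_j$ costs only an $\cO(\alpha^{3/2})$ correction since $[(H_f+1)^m,\pi_j]=\alpha^{3/2}[(H_f+1)^m,\fA_j]$, so one is left with $\|\pi_j w\|\le\langle w,\pi^2 w\rangle^{1/2}$, $w:=(H_f+1)^m\gs$. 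Using $\pi^2=H_\alpha-V-H_f$, $H_\alpha w=E_\alpha w+[W,(H_f+1)^m]\gs$, the infinitesimal bound on $V$, and pairing the $\fp$-factor inside $[W,(H_f+1)^m]\gs$ against $\fp w$ by Cauchy--Schwarz, one reaches an inequality $\langle w,\pi^2 w\rangle\le C+\alpha^{3/2}c\,\|\fp w\|\cdot(\text{already controlled})$, which closes by Young's inequality for $\alpha\le\tilde\alpha$. Then \eqref{glAA-27} follows from $\hel\gs=(E_\alpha-H_f-W)\gs$: applying $(H_f+1)^m$, the first two terms are \eqref{glAA-24}, and $(H_f+1)^m W\gs$, after commuting through the field operators and invoking Lemma~\ref{la2}, is bounded by $\alpha^{3/2}$ times $\|(H_f+1)^{m'}\fp\gs\|$ and $\|(H_f+1)^{m'}\gs\|$ with $m'\le m+1$, already controlled by \eqref{glAA-28} with $l=0$ and \eqref{glAA-24}.

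Finally, \eqref{glAA-25} and \eqref{glAA-28} for general $l$ are obtained by induction on $l$, the case $l=0$ being done. With $u:=\lkl\fx\rkl^l\gs$ (so $\|u\|\le C$ by \eqref{glA-10}), the ground state equation gives $(H_\alpha+i)u=(E_\alpha+i)u+r$ with $r=[\lkl\fx\rkl^l,(\fp+\alpha^{3/2}\fA(\alpha\fx))^2]\gs$, since $V$ and $H_f$ commute with $\lkl\fx\rkl^l$; one checks $r$ is a finite sum of vectors $b\,\fp_j\gs$, $b'\gs$ and $\alpha^{3/2}\fA_j(\alpha\fx)\,b''\gs$ with $b,b',b''$ multiplication operators of order $\le l-1$ in $\lkl\fx\rkl$. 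Iterating, $u=(E_\alpha+i)^m(H_\alpha+i)^{-m}u+\sum_{j=1}^m(E_\alpha+i)^{j-1}(H_\alpha+i)^{-j}r$; applying $(H_f+1)^m$, the first term is bounded as in \eqref{glAA-24} using $\|u\|\le C$, and each $(H_f+1)^m(H_\alpha+i)^{-j}r$ is reduced -- by commuting $(H_f+1)^m$ rightward past the resolvents (the commutators $[H_f,W]$ again carrying $\alpha^{3/2}$) and past the multiplication operators in $r$ -- to quantities at level $l-1$, controlled by the induction hypothesis. This gives \eqref{glAA-25} for all $m$ at level $l$; then \eqref{glAA-28} at level $l$ follows for all $m$ by the same $\fp=\pi-\alpha^{3/2}\fA(\alpha\fx)$ splitting and $\langle\cdot,\pi^2\cdot\rangle$-estimate as at $l=0$, now with \eqref{glAA-25} at level $l$ as the backbone.

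The main obstacle is the bookkeeping that keeps the constants uniform in $\alpha\le\tilde\alpha$ through these intertwined recursions. Every application of $W$, every replacement of $\fp$ by the kinetic momentum $\pi$, and every commutation of $(H_f+1)^m$ past a resolvent raises the power of $H_f$, while every commutator with $\lkl\fx\rkl^l$ trades a power of $\lkl\fx\rkl$ for $\fp$ or $\fA(\alpha\fx)$; so the estimate for $(m,l)$ a priori refers to estimates at larger $m$ or of comparable order. What makes the bootstrap close is that each such contribution is either strictly of lower order in $\lkl\fx\rkl$ -- absorbed by the $l$-induction -- or carries a factor $\alpha^{3/2}$ -- absorbed into the left-hand side by Young's inequality (resp.\ a Neumann series) once $\tilde\alpha$ is small enough; verifying this systematically, together with the a priori finiteness for fixed $\alpha$, is the technical heart of the proof.
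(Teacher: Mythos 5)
Your handling of \eqref{glAA-24} and \eqref{glAA-26} coincides with the paper's: write $\gs=(E_\alpha+i)^m(H_\alpha+i)^{-m}\gs$ and use Proposition~\ref{pA-4}(d); for $\fp^2\gs$ the paper simply notes that $\fp^2(\hel+i)^{-1}$ is bounded by the assumption on $V$ and uses \eqref{glA-47n}, which is equivalent to your absorption of $\|V\gs\|$. For the remaining three estimates, however, you take a genuinely different and far heavier route than necessary. The paper's proof of \eqref{glAA-25} is one line: since $\lkl\fx\rkl^l$ and $(H_f+1)^m$ commute,
\begin{equation*}
\|(H_f+1)^m\lkl\fx\rkl^l\gs\|^2=\sprod{\lkl\fx\rkl^{2l}\gs}{(H_f+1)^{2m}\gs}\le\|\lkl\fx\rkl^{2l}\gs\|\,\|(H_f+1)^{2m}\gs\|,
\end{equation*}
uniformly bounded by \eqref{glA-10} and \eqref{glAA-24} --- no induction on $l$, no ground-state equation, no commutators with $\lkl\fx\rkl^l$. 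Likewise \eqref{glAA-28} is obtained by moving both $\fp$'s to one side and both factors $\lkl\fx\rkl^l$ to the other side of the inner product $\|(H_f+1)^m\lkl\fx\rkl^l\fp\gs\|^2$, reducing it (up to lower-order commutators $[\fp,\lkl\fx\rkl^{2l}]$) to \eqref{glAA-25} and \eqref{glAA-26}; and \eqref{glAA-27} follows directly from $\hel(H_f+1)^m\gs=\hel(H_\alpha+i)^{-1}\big([H_\alpha,(H_f+1)^m]\gs+(E_\alpha+i)(H_f+1)^m\gs\big)$ together with \eqref{glA-47c}, \eqref{glA-47a} and \eqref{glA-47n}. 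In short, the mutual commutativity of position, field energy and momentum (in the right pairings) decouples the three directions $\lkl\fx\rkl$, $H_f$, $\fp$ by Cauchy--Schwarz, and the entire recursion you set up is unnecessary.

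Beyond being longer, your bootstrap has concrete soft spots that you defer to ``bookkeeping''. First, every absorption step ($X\le C+\tfrac12X\Rightarrow X\le2C$) requires the a priori finiteness of $X$ for fixed $\alpha$; you assert this from ``standard ground state regularity'', but for quantities like $\|(H_f+1)^m\lkl\fx\rkl^l\fp\gs\|$ that finiteness is essentially the lemma itself (minus uniformity) and would need its own argument or a cutoff regularization. Second, in the $l$-induction you commute $(H_f+1)^m$ through $(H_\alpha+i)^{-j}$ applied to $r=[\pi^2,\lkl\fx\rkl^l]\gs$; this requires $r$ and the intermediate vectors to lie in domains of high powers of $H_\alpha$ and $H_f$, which is not obvious for vectors of the form $b\,\fp_j\gs$ and is not checked. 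Third, the closure of $X\le C+c_{m,l}\,\alpha^{3/2}X$ yields a bound on all of $[0,\tilde\alpha]$ only if Young's inequality is applied so that the absorbed coefficient is independent of $(m,l)$ (e.g.\ $ab\le\tfrac14a^2+b^2$); otherwise you would have to shrink $\tilde\alpha$ with $(m,l)$, whereas the lemma is stated for the fixed $\tilde\alpha$ of Proposition~\ref{pA-4}. All of these issues evaporate with the paper's Cauchy--Schwarz argument.
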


\begin{proof}
The statements \eqref{glAA-24} and \eqref{glAA-26} easily follow from
\eqref{glA-47a}, \eqref{glA-47n}, and \eqref{glA-3}, because $\gs=(H_\alpha+i)^{-n}\gs(E_\alpha+i)^n$; note that
$\fp^2(\hel+i)^{-1}$ is bounded by assumption on $V$. To prove \eqref{glAA-25} we use that 
$$
 \sup_{\alpha \leq\tilde{\alpha}} \|\lkl \fx \rkl^l (H_f+1)^m \gs
 \|^2 \leq\sup_{\alpha \leq\tilde{\alpha}}
\|\lkl \fx \rkl^{2l} \gs \| \cdot \|(H_f+1)^{2m} \gs \|
$$
where the right hand side is finite thanks to \eqref{glA-10} and \eqref{glAA-24}.
To prove \eqref{glAA-27} we write
\begin{eqnarray*}
   \hel(H_f+1)^m\gs &=&
   \hel(H_{\alpha}+i)^{-1}(H_{\alpha}+i)(H_f+1)^m\gs\\
   &=& \hel(H_{\alpha}+i)^{-1}\big[H_{\alpha},(H_f+1)^m\big]\gs\\
    && +  \hel(H_{\alpha}+i)^{-1}(H_f+1)^m\gs (E_{\alpha}+i). 
\end{eqnarray*}
The vectors  $[H_{\alpha},(H_f+1)^m]\gs$ and $(H_f+1)^m\gs$, and the operator
$\hel(H_{\alpha}+i)^{-1}$ are bounded, uniformly in $\alpha\leq \tilde{\alpha}$, by
\eqref{glA-47c}, \eqref{glA-47a} and
\eqref{glA-47n}. This proves \eqref{glAA-27}.

The statement \eqref{glAA-28} follows from \eqref{glAA-25} and
\eqref{glAA-26} after moving both $\fp$'s to one side, and both
factors $\lkl \fx \rkl^l$ to the other side of the inner product
$\|(H_f+1)^m \lkl \fx \rkl^l \fp \gs \|^2 = \lkl (H_f+1)^m \lkl \fx \rkl^l \fp \gs,(H_f+1)^m \lkl \fx \rkl^l \fp \gs\rkl$.
\end{proof}

The following lemma improves upon \eqref{glA-4n}.

\begin{lemma} \label{la6}
For each $m \in \N$ there is a finite constant $K_m$, such that for
all $\alpha\leq \tilde{\alpha}$
\begin{equation} \|(\hel+i) (H_f+1)^m (\gs-\go)\|
\leq K_m \alpha^{\frac{3}{2}}. \end{equation}
\end{lemma}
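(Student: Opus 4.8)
The plan is to reduce the claim, by two purely algebraic steps, to the single estimate
\[
   \|(H_f+1)^m(\gs-\go)\|\le C_m\,\alpha^{3/2}\qquad\text{for every }m\ge 0,
\]
which I call $(\star)$, and then to prove $(\star)$ by an infrared (pull‑through) analysis of $\gs$. Write $\psi:=\gs-\go$; recall that $\go$ is the ground state of $H_0=\hel+H_f$, so $H_f\go=0$ and $\hel\go=E_0\go$, and note that $\hel$ and $H_f$ commute, whence $(\hel+i)(H_f+1)^m=(H_f+1)^m(\hel+i)$.

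\emph{Step 1: reduction to $(\star)$.} From $H_\alpha\gs=E_\alpha\gs$ and $H_0=\hel+H_f$ one gets $\hel\gs=(E_\alpha\gs-W\gs)-H_f\gs$, hence
\[
   (\hel+i)\psi=(E_0+i)\psi+(E_\alpha-E_0)\gs-W\gs-H_f\gs .
\]
Applying $(H_f+1)^m$ and using $(H_f+1)^mH_f\gs=(H_f+1)^{m+1}\psi-(H_f+1)^m\psi$ (valid since $H_f\go=0$), one obtains
\[
   \|(\hel+i)(H_f+1)^m\psi\|\le (|E_0|+2)\|(H_f+1)^m\psi\|+\|(H_f+1)^{m+1}\psi\|+|E_\alpha-E_0|\,\|(H_f+1)^m\gs\|+\|(H_f+1)^mW\gs\| .
\]
Here $|E_\alpha-E_0|\,\|(H_f+1)^m\gs\|=\cO(\alpha^{3/2})$ by \eqref{glA-3} and \eqref{glAA-24}, and $\|(H_f+1)^mW\gs\|\le\alpha^{3/2}\|(H_f+1)^mW^{(1)}\gs\|+\alpha^3\|(H_f+1)^mW^{(2)}\gs\|=\cO(\alpha^{3/2})$, the last bound being routine: one commutes $(H_f+1)^m$ and $\fp$ through the creation and annihilation operators in $W^{(1)},W^{(2)}$, picking up only multiples of $a^{\#}(\omega^r\fG_{\fx})$ and $a^{\#}(k_l\fG_{\fx})$ with finite $\omega$‑norms, and uses Lemmas~\ref{la1} and~\ref{la2}. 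Thus the Lemma follows from $(\star)$ applied with the indices $m$ and $m+1$.

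\emph{Step 2: reduction of $(\star)$ to soft‑photon bounds.} Since $(H_f+1)^m\go=\go$ we have $(H_f+1)^m\psi=\psi+\sum_{k=1}^m\binom mk H_f^k\gs$, so by \eqref{glA-4n} it suffices to prove
\[
   \|H_f^k\gs\|\le c_k\,\alpha^{3/2}\qquad\text{for every }k\ge 1 ,
\]
call it $(\star\star)$. Using the Wick‑ordered form $H_f^{2k}=\sum_{j=1}^{2k}\int w_j(\fk_1,\dots,\fk_j)\,a^*(\fk_1,\lambda_1)\cdots a^*(\fk_j,\lambda_j)\,a(\fk_1,\lambda_1)\cdots a(\fk_j,\lambda_j)\,\prod_i d^3k_i$, where $w_j\ge0$ is a symmetric polynomial in the $\omega(\fk_i)$, homogeneous of degree $2k$, each $\omega(\fk_i)$ entering every monomial to degree $\ge1$, and $w_1(\fk)=\omega(\fk)^{2k}$, one gets $\|H_f^k\gs\|^2=\sum_{j=1}^{2k}\int w_j\,\|a(\fk_1,\lambda_1)\cdots a(\fk_j,\lambda_j)\gs\|^2\prod_i d^3k_i$. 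For $j=1$ this term equals $\alpha^3$ times a finite $\kappa$‑integral by \eqref{glA-26}. Hence $(\star\star)$ — in fact $\|H_f^k\gs\|^2=\cO(\alpha^3)$ — will follow once one has, for every $j\ge1$ and uniformly in $\alpha\le\tilde\alpha$,
\[
   \|a(\fk_1,\lambda_1)\cdots a(\fk_j,\lambda_j)\gs\|\le c_j\,\alpha^{3j/2}\prod_{i=1}^j\frac{|\kappa(\fk_i)|}{\omega(\fk_i)}\,(1+\alpha\omega(\fk_i))^{j} ,
\]
call it $(\dagger)$; the precise exponents are unimportant, one only needs $w_j$ times the square of the right‑hand side to be integrable, which holds because $\kappa\in\cS(\R^3)$ and because $w_j$ vanishes to order $\ge1$ in each $\fk_i$ at $\fk_i=0$.

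\emph{Step 3: the pull‑through estimates $(\dagger)$ — the main obstacle.} From $H_\alpha\gs=E_\alpha\gs$ and $[a(\fk,\lambda),H_\alpha]=\omega(\fk)a(\fk,\lambda)+[a(\fk,\lambda),W]$ one obtains
\[
   \Big(H_\alpha-E_\alpha+\sum_{i=1}^j\omega(\fk_i)\Big)a(\fk_1,\lambda_1)\cdots a(\fk_j,\lambda_j)\gs=-\big[a(\fk_1,\lambda_1)\cdots a(\fk_j,\lambda_j),W\big]\gs ,
\]
whose right‑hand side is a sum of products of at most $j-1$ annihilation operators times the reduced commutators $[a(\fk_i,\lambda_i),W]=2\alpha^{3/2}\fp\cdot\fG_{\fx}(\fk_i,\lambda_i)+2\alpha^3\fG_{\fx}(\fk_i,\lambda_i)\cdot\fA(\alpha\fx)$ and $[a(\fk_i,\lambda_i),[a(\fk_{i'},\lambda_{i'}),W^{(2)}]]=2\fG_{\fx}(\fk_i,\lambda_i)\cdot\fG_{\fx}(\fk_{i'},\lambda_{i'})$, the $\fG_{\fx}$'s being bounded multiplication operators in $\fx$. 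One then proves $(\dagger)$ by induction on $j$, jointly with the analogous bounds for $p_l\gs$ and for $\fA_r(\alpha\fx)\gs$ in place of $\gs$, the base case $j=1$, $\Phi=\gs$, being exactly \eqref{glA-26}. The points that make the induction go through are: $\|(H_\alpha-E_\alpha+\Sigma)^{-1}\|\le\Sigma^{-1}$, together with $\|p_l(H_\alpha-E_\alpha+\Sigma)^{-1}\|\le C(1+\Sigma^{-1})$ uniformly in $\alpha\le\tilde\alpha$ (which holds since $V$ is $-\Delta$‑bounded, cf. \eqref{glA-47}), so that $\hel$ and $\nabla V$ are never applied to the low‑regularity vectors $p_l\gs$; the polarization identity $\feps(\fk,\lambda)\cdot\fk=0$, which turns $\fp\cdot\fG_{\fx}(\fk,\lambda)$ into $\frac{\kappa(\fk)}{\sqrt{2\omega(\fk)}}e^{-i\alpha\fk\cdot\fx}\,\feps(\fk,\lambda)\cdot\fp$ with no leftover $\alpha\fk\cdot\fx$ term; the fact that $\fp$, $\fG_{\fx}(\fk_i,\lambda_i)$ and $e^{-i\alpha\fk_i\cdot\fx}$ commute with the remaining annihilation operators; and the uniform estimates of Lemmas~\ref{la1} and~\ref{la2} (in particular of $\|(H_f+1)^m\fp\gs\|$ and $\|(H_f+1)^m\fA(\alpha\fx)\gs\|$), which close the recursion and control the inhomogeneous terms. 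I expect Step~3 to be the only genuinely technical point; Steps~1 and~2 are algebra. Finally, note that for $k\ge1$ the infrared behaviour in $(\star\star)$ is not delicate — the factors $\omega(\fk_i)^{\ge1}$ coming from $H_f^k$ already make all the integrals convergent — so, in contrast with the proof of \eqref{glA-4n}, no cancellation between the leading perturbative term and the energy shift has to be exploited here.
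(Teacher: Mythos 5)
Your argument is correct in outline, but it takes a genuinely different route from the paper. The paper never touches photon-number estimates beyond what is already in Proposition~\ref{pA-4}: it splits $\gs-\go$ by a smooth energy cutoff $g$ supported below $e_1$, bounds the piece $g(H_0)(\gs-\go)$ by the boundedness of $(\hel+i)(H_f+1)^m g(H_0)$ together with $\|\gs-\go\|=\cO(\alpha^{3/2})$ from \eqref{glA-4n}, and writes the remaining piece as $(g(H_\alpha)-g(H_0))\gs$, which is $\cO(\alpha^{3/2})$ after inserting the explicit factor $W=\alpha^{3/2}W^{(1)}+\alpha^3 W^{(2)}$ via the Helffer--Sj\"ostrand resolvent formula. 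Your Steps~1 and~2 are clean algebra and correctly reduce the Lemma to $\|H_f^k\gs\|=\cO(\alpha^{3/2})$, $k\ge 1$; the price is that all of the work is then concentrated in Step~3, which you only sketch and which is substantially more technical than the paper's proof: the multi-annihilation pull-through bounds $(\dagger)$ require a joint induction not only over $\fp\gs$ and $\fA(\alpha\fx)\gs$ but also over $(H_f+1)^{1/2}$-weighted versions of the same quantities (needed to control $\fA(\alpha\fx)$ acting on $a(\fk_1)\cdots a(\fk_{j-1})\gs$), and the naive resolvent bound $(H_\alpha-E_\alpha+\sum_i\omega(\fk_i))^{-1}\le(\sum_i\omega(\fk_i))^{-1}$ together with $\|\fp(H_\alpha-E_\alpha+\Sigma)^{-1}\|\le C(1+\Sigma^{-1})$ degrades the per-variable singularity from the $\omega^{-1}$ you wrote to $\omega^{-3/2}$. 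As you anticipate, this is harmless -- since each monomial of $w_j$ contains every $\omega(\fk_i)$ to degree at least one, $\omega(\fk_i)^{-3}\,d^3k_i$ weighted by $w_j$ is still integrable at the origin -- so the infrared behaviour is indeed not delicate and your scheme closes. What your approach buys is a self-contained derivation of soft-photon bounds of independent interest; what the paper's approach buys is brevity, since it recycles \eqref{glA-4n} and the already-established resolvent estimates of Proposition~\ref{pA-4}(d) instead of redoing a pull-through analysis.
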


\begin{proof}
Let $\lambda:=(E_0+e_1)/2$. Thanks to \eqref{glA-3} in Proposition~\ref{pA-4}, we may
assume that $\sup_{\alpha \leq \tilde{\alpha}} E_{\alpha}<\lambda$
by making $\tilde{\alpha}$ smaller, if necessary. Pick $g \in
C_0^{\infty}(\R)$ with $\supp g\subset(-\infty,\lambda)$ and
with $g(E_{\alpha})=1$ for all $\alpha\leq \tilde{\alpha}$. On the one
hand, 
\begin{eqnarray*}
    \lefteqn{\|(\hel+i)(H_f+1)^m g(H_0)(\gs-\go)\|}\\ 
  &\leq & \|(\hel+i)(H_f+1)^m g(H_0)\| \|\gs-\go\| = \cO(\alpha^{3/2})
\end{eqnarray*}
by \eqref{glA-4n}. On the other hand, $(1-g(H_0))(\gs-\go)=(g(H_\alpha)-g(H_0))\gs$ by construction of $g$.
Hence it remains to prove that 
\begin{equation}\label{glA-54}
    \|(\hel+i)(H_f+1)^m(g(H_\alpha)-g(H_0))\gs\| = \cO(\alpha^{3/2}).
\end{equation}

To do so, we use the Helffer-Sj\"ostrand functional calculus with
a compactly supported almost analytic extension $\tilde{g}$ of $g$
that satisfies an estimate $|\partial_{\bar{z}}\tilde{g}(z)|\leq
C|y|^2$. Here and henceforth $z=x+iy$ with $x,y\in\R$. It follows that 
\begin{eqnarray}
    \lefteqn{(\hel+i) (H_f+1)^m (g(H_\alpha)-g(H_0))\gs} \nn\\ 
&=& -\frac{1}{\pi}\int_{\R^2}(\hel+i)(H_0-z)^{-1} (H_f+1)^m
 W(H_{\alpha}-z)^{-1}\gs \frac{\partial \tilde{g}}{\partial\bar{z}}\, dxdy \label{glA-55}
\end{eqnarray}
where
\begin{equation}
   (H_f+1)^m W = \sum_{l=0}^m \binom{m}{l}
   \mathrm{ad}_{H_f}^l(W)(H_f+1)^{m-l}
   =: \alpha^{3/2} \tilde{W}(m)(H_f+1)^m.\label{glA-56}
\end{equation}
From the equations $[H_f,a^{*}(G_x)] =a^{*}(\omega G_x)$ and  $[H_f,a(G_x)] =-a(\omega G_x)$
it is clear that the operator $\tilde{W}(m)$, defined by \eqref{glA-56}, is
$H_0$-bounded. Hence we can estimate the norm of \eqref{glA-55} from
above by
\begin{multline}
   \frac{\alpha^{3/2}}{\pi} \|(\hel+i)(H_0+i)^{-1}\|\int \left|
      \frac{\partial \tilde{g}}{\partial \bar{z}}\right
|\left\|\frac{H_0+i}{H_0-z} \right\|\frac{1}{|z-E_{\alpha}|}\,dxdy\\  
\times \|\tilde{W}(m)(H_0+i)^{-1}\| \|(H_0+i)(H_f+1)^m\gs\|. \label{glA-57}
\end{multline} 
The integral is finite by construction of $\tilde{g}$, because
$|z-E_{\alpha}|^{-1}\leq |y|^{-1}$, and because $\|(H_0+i)(H_0-z)^{-1}\|\leq 1
+ (1+|x|)/|y|$ by the spectral theorem. The last factor in \eqref{glA-57} is bounded
uniformly in $\alpha\leq \tilde{\alpha}$ by \eqref{glAA-24} and \eqref{glAA-27} from Lemma~\ref{la1}.
This establishes \eqref{glA-54} and thus concludes the proof of the lemma.
\end{proof}


\begin{lemma} \label{lb1}
Suppose that $V$ satisfies the hypotheses in Section~\ref{ch2}. Then
\begin{itemize}
\item[(i)] $\fx\ph_{el}\in D(\hel)$ and $(\hel-E_0)\fx\ph_{el}=-2\nabla\ph_{el}$.
\item[(ii)] $e^{-i\hel t}\fx\ph_{el}\in D(|\fx|^2)$ and there exists a
  constant $C$ such that for all $t\in\R$,
$$
    \||\fx|^2e^{-i\hel t}\fx\ph_{el}\| \leq C(1+t^2).
$$
\end{itemize} 
\end{lemma}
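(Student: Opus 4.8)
The plan is to prove (i) by making the elementary commutator identity $[\hel,x_j]=-2\partial_j$ rigorous, and then to use (i) together with the exponential decay of $\ph_{el}$ to close a two–step system of differential inequalities for (ii), built on the observation that $|\fx|^2$, the dilation generator $D:=\tfrac12(\fx\cdot\fp+\fp\cdot\fx)$, $-\Delta$ and $\fx\cdot\nabla V$ close under commutation with $\hel$ at the relevant order.

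\textbf{Proof of (i).} First I record the facts about $\ph_{el}$ that I will use repeatedly: since $E_0<0=\inf\sigma_{ess}(\hel)$ and $V(\fx)\to0$, the standard exponential–decay estimates (contained already in Proposition~\ref{pa2} at $\alpha=0$, or classical) give $\langle\fx\rangle^m\ph_{el}\in L^2$ for every $m$, and interior elliptic regularity in $\{|\fx|>R\}$, where $V$ is smooth and bounded, propagates this to $\nabla\ph_{el}$, so $\langle\fx\rangle^m\nabla\ph_{el}\in L^2$ for all $m$; moreover $D(\hel)=D(-\Delta)=H^2(\R^3)$ because $V\in L^2_{loc}$ with $V\to0$ forces $V\in L^2(\R^3)+L^\infty(\R^3)$, hence $V$ is $-\Delta$–bounded with bound $0$. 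Now pick $\chi\in C_0^\infty(\R^3)$ with $\chi\equiv1$ near the origin, put $\chi_R(\fx):=\chi(\fx/R)$, and note $x_j\chi_R\in C_0^\infty$, so $x_j\chi_R\ph_{el}\in H^2=D(\hel)$ with
\[
  \hel\big(x_j\chi_R\,\ph_{el}\big)=E_0\,x_j\chi_R\,\ph_{el}-2\,\nabla(x_j\chi_R)\cdot\nabla\ph_{el}-\big(\Delta(x_j\chi_R)\big)\ph_{el}.
\]
As $R\to\infty$, $\nabla(x_j\chi_R)\to e_j$ and $\Delta(x_j\chi_R)\to0$ pointwise while both stay uniformly bounded, so by dominated convergence the right–hand side converges in $L^2$ to $E_0\,x_j\ph_{el}-2\partial_j\ph_{el}$, and $x_j\chi_R\ph_{el}\to x_j\ph_{el}$ in $L^2$. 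Closedness of $\hel$ gives $x_j\ph_{el}\in D(\hel)$ and $(\hel-E_0)x_j\ph_{el}=-2\partial_j\ph_{el}$; in vector form this is (i).

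\textbf{Proof of (ii): reduction and a priori bounds.} It suffices to treat one component, so fix $j$, set $\psi_0:=x_j\ph_{el}$ and $\psi(t):=e^{-i\hel t}\psi_0$; the final estimate follows by summing over $j$, and the case $e^{i\hel t}$ is the same by $t\mapsto-t$. By (i), $\psi_0\in D(\hel)$ with $\hel\psi_0=E_0\psi_0-2\partial_j\ph_{el}$, and from the decay of $\ph_{el}$ and $\nabla\ph_{el}$ one checks directly that $\psi_0\in D(|\fx|^2)$, $D\psi_0\in L^2$, and $\|\hel\psi_0\|,\|\psi_0\|<\infty$. The two uniform-in-$t$ quantities I will lean on are $\|\psi(t)\|=\|\psi_0\|$ and $\|\hel\psi(t)\|=\|\hel\psi_0\|$; from $-\Delta\psi(t)=\hel\psi(t)-V\psi(t)$ and the $-\Delta$–boundedness of $V$ with bound $<1$, absorbing gives $\|(-\Delta)\psi(t)\|\le C_1$ uniformly in $t$; and since $\fx\cdot\nabla V\in L^2(\R^3)+L^\infty(\R^3)$ (because $\fx\cdot\nabla V\in L^2_{loc}$ and $|\fx\cdot\nabla V|\le|\fx|^{-\mu}$ for $|\fx|>R$) it is $-\Delta$–bounded, so $\|(\fx\cdot\nabla V)\psi(t)\|\le C_2$ uniformly in $t$.

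\textbf{Proof of (ii): the differential inequalities.} The key commutator identities are
\[
  i[\hel,|\fx|^2]=4D,\qquad i[\hel,D]=2(-\Delta)-\fx\cdot\nabla V=:K,\qquad i[\hel,|\fx|^4]=4\big(D|\fx|^2+|\fx|^2D\big),
\]
all elementary ($V$ commutes with $|\fx|^2$, and the $D$–identity is the virial relation). Writing $f(t)=\||\fx|^2\psi(t)\|^2=\langle\psi(t),|\fx|^4\psi(t)\rangle$ and $g(t)=\|D\psi(t)\|^2=\langle\psi(t),D^2\psi(t)\rangle$, and using self-adjointness of $D$ and $K$, one obtains $f'(t)=8\,\mathrm{Re}\langle|\fx|^2\psi(t),D\psi(t)\rangle$ and $g'(t)=2\,\mathrm{Re}\langle K\psi(t),D\psi(t)\rangle$, hence by Cauchy–Schwarz
\[
  \tfrac{d}{dt}\||\fx|^2\psi(t)\|\le 4\,\|D\psi(t)\|,\qquad \tfrac{d}{dt}\|D\psi(t)\|\le\|K\psi(t)\|\le 2C_1+C_2 .
\]
Integrating the second inequality gives $\|D\psi(t)\|\le\|D\psi_0\|+(2C_1+C_2)|t|$, and integrating the first then gives $\||\fx|^2\psi(t)\|\le\||\fx|^2\psi_0\|+4\|D\psi_0\||t|+(2C_1+C_2)t^2\le C(1+t^2)$. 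These formal manipulations are justified by a standard regularization: replace $|\fx|^2$ and $D$ by bounded cut-offs $|\fx|^2\theta(|\fx|/R)$ and $\tfrac12(\fx\,\theta(|\fx|/R)\cdot\fp+\mathrm{h.c.})$, derive the same inequalities with error terms supported in $\{|\fx|\sim R\}$, bound those errors uniformly in $R$ (the $R^{-1},R^{-2}$ from the $\theta$–derivatives balance the powers of $|\fx|$ there), obtain $\|\,|\fx|^2\theta(|\fx|/R)\psi(t)\|\le C(1+t^2)$ with $C$ independent of $R$, and let $R\to\infty$ by monotone convergence; this also establishes $\psi(t)\in D(|\fx|^2)$.

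\textbf{Main obstacle.} The computations in (i) and in the a priori bounds are routine; the delicate point is the regularization bookkeeping in the last step. One must make sure that no truncation error carries more than one power of $|\fx|$ on a first derivative of $V$ — only $\fx\cdot\nabla V$ is controlled by the hypotheses, whereas $|\fx|^2\nabla V$ need not even be bounded when $\mu$ is small — and that all errors are uniform in $R$. Staying within the triple $\{|\fx|^2,D,-\Delta\}$, rather than iterating $|\fx|^2\rightsquigarrow|\fx|^2\fp\rightsquigarrow|\fx|^2\fp^2\rightsquigarrow\cdots$, is precisely what keeps the derivatives of $V$ at the admissible level.
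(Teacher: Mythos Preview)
Your proof is correct and follows essentially the same route as the paper's. For (i) the paper uses a duality argument (pairing against $\gamma\in C_0^\infty$ and invoking that $C_0^\infty$ is a core) where you use cutoffs plus closedness, and for (ii) the paper integrates the vector identity $e^{it\hel}|\fx|^2e^{-it\hel}\psi=|\fx|^2\psi+2t(\fx\cdot\fp+\fp\cdot\fx)\psi+2\int_0^t\!\int_0^s e^{ir\hel}(4\fp^2-2\fx\cdot\nabla V)e^{-ir\hel}\psi\,dr\,ds$ directly, whereas you recast the same two commutators $i[\hel,|\fx|^2]=4D$ and $i[\hel,D]=2(-\Delta)-\fx\cdot\nabla V$ as a pair of differential inequalities for $\||\fx|^2\psi(t)\|$ and $\|D\psi(t)\|$; the underlying mechanism and the resulting $C(1+t^2)$ bound are identical.
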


\begin{proof}
(i) For all $\gamma\in C_0^{\infty}(\R^3)$ we have $\fx\hel\gamma=\hel
\fx\gamma+2\nabla\gamma$ and hence
\begin{align*}
   \sprod{\hel\gamma}{\fx\ph_{el}}
   &= \sprod{\hel \fx\gamma+2\nabla\gamma}{\ph_{el}}\\
   &= \sprod{\gamma}{E_{0}\fx\ph_{el}-2\nabla\ph_{el}}.
\end{align*}
Since $C_0^{\infty}(\R^3)$ is a core of $\hel$, we conclude that
$\fx\ph_{el}\in D(\hel)$ and that
$$
     \hel \fx\ph_{el} = E_0\fx\ph_{el} -2\nabla\ph.
$$

(ii) Let $\psi:= x_i\ph_{el}$ for some $i\in\{1,2,3\}$. We shall only
need that $\psi\in D(|\fx|^2)\cap D(-\Delta)$ which follows from (i). 
By the fundamental theorem of calculus, in a weak sense
\begin{eqnarray}
\lefteqn{e^{it\hel} |\fx|^2 e^{-it\hel}\psi = \fx^2\psi +
  \int\limits_0^t e^{is\hel} [i\hel,|\fx|^2]e^{-is\hel}\psi ds }\nn\\
&=& |\fx|^2\psi +2 \int\limits_0^t e^{is\hel}
(\fx \cdot \fp + \fp \cdot \fx) e^{-is\hel}\psi ds\nn\\
&=& |\fx|^2 \psi + 2t (\fx \cdot \fp + \fp \cdot \fx)\psi
+2\int\limits_0^t ds \int\limits_0^s dr e^{ir\hel}
(4\fp^2-\fx \cdot \nabla V) e^{-ir\hel}\psi \label{glb-3}.
\end{eqnarray}
Here  $\psi\in D(|\fx|^2)\cap D(-\Delta)\subset
D(\fx\cdot\fp+\fp\cdot\fx)$ and $e^{-ir\hel}\psi\in
D(\hel)=D(-\Delta)$ because $\psi\in D(\hel)$ by part (i). Therefore
assertion (ii) follows from \eqref{glb-3} and from the hypotheses on $V$.
\end{proof}

\end{appendix}


\end{document}